\documentclass[USenglish,a4paper,11pt]{article}
\pdfoutput=1

\usepackage[T1]{fontenc}
\usepackage[margin=1in]{geometry}
\usepackage{amsmath, amssymb, amsfonts, amsthm, mathtools}
\usepackage{bbm, dsfont, bm}
\usepackage{graphicx}
\usepackage{tikz}
\usetikzlibrary{shapes.geometric, fit, arrows.meta,decorations.pathreplacing}
\usepackage[dvipsnames]{xcolor}
\usepackage{url}
\usepackage{hyperref}
\usepackage{pgfplots}
\usepackage{tikz-cd}
\usetikzlibrary{positioning}
\usepackage{cleveref}
\usepackage{enumerate, enumitem}
\usepackage{algorithm, algpseudocode} 
\usepackage{subfigure}
\usepackage{tcolorbox}
\usepackage{enumitem}
\usepackage{tabularx}
\usepackage{makecell}
\usepackage{rotating}
\tcbuselibrary{skins,breakable}
\usepackage{pifont} 
\usepackage{framed}
\usepackage[backref=true,style=alphabetic,citestyle=alphabetic,minalphanames=3,maxalphanames=4,maxbibnames=99,maxcitenames=99]{biblatex}

\usepackage{epigraph}
\definecolor{DarkRed}{rgb}{0.5,0.1,0.1}
\definecolor{DarkBlue}{rgb}{0.1,0.1,0.5}
\definecolor{ForestGreen}{rgb}{0.1333,0.5451,0.1333}
\definecolor{TodoBg}{HTML}{FFF4CC} 
\definecolor{TodoFrame}{HTML}{E09F3E} 
\definecolor{TodoTitle}{HTML}{9A3412} 
\colorlet{shadecolor}{orange!15}

\hypersetup{
    linktocpage=true,
    pagebackref=true,
    colorlinks=true,
    linkcolor=BrickRed,
    citecolor=MidnightBlue,
    filecolor = blue,
    urlcolor=OliveGreen,
    bookmarks=true,
    bookmarksopen=true,
    bookmarksnumbered=true
}

\usepackage{float}
\newfloat{markovchain}{htbp}{lom}
\floatname{markovchain}{Markov Chain}

\crefname{algorithm}{algorithm}{algorithms}
\Crefname{algorithm}{Algorithm}{Algorithms}
\algrenewcommand\algorithmicrequire{\textbf{Input:}}
\algrenewcommand\algorithmicensure{\textbf{Output:}} 

\crefname{markovchain}{markov chain}{markov chains}
\Crefname{markovchain}{Markov Chain}{Markov Chains}

\theoremstyle{plain}
\newtheorem{theorem}{Theorem}[section]
\newtheorem{lemma}[theorem]{Lemma}
\newtheorem{proposition}[theorem]{Proposition}

\newtheorem{fact}[theorem]{Fact}

\newtheorem{remark}[theorem]{Remark}

\theoremstyle{definition}
\newtheorem{definition}[theorem]{Definition}

\newtheorem*{definition*}{Definition}
\newtheorem*{theorem*}{Theorem}
\newtheorem*{lemma*}{Lemma}
\newtheorem*{proposition*}{Proposition}
\newtheorem*{corollary*}{Corollary}
\newtheorem*{remark*}{Remark}
\newtheorem*{claim*}{Claim}
\newtheorem*{conjecture*}{Conjecture}
\newtheorem*{openq*}{Open Question}
\newtheorem*{question*}{Question}

\newtheoremstyle{restate}{}{}{\itshape}{}{\bfseries}{.}{.5em}{\thmnote{#3}}
\theoremstyle{restate}
\newtheorem*{rtheorem}{Theorem}
\newtheorem*{rproposition}{Proposition}
\newtheorem*{rlemma}{Lemma}

\newcommand{\E}{\mathbb{E}}

\newcommand{\norm}[1]{\left\lVert #1 \right\rVert}

\DeclarePairedDelimiter\bra{\langle}{\rvert} 
\DeclarePairedDelimiter\ket{\lvert}{\rangle} 
\DeclarePairedDelimiterX\braket[2]{\langle}{\rangle}{#1\,\delimsize\vert\,\mathopen{}#2}
\DeclarePairedDelimiterX\brasket[3]{\langle}{\rangle}{#1\,\delimsize\vert\,\mathopen{}#2\delimsize\vert\,\mathopen{}#3}

\newcommand{\heavy}{\textnormal{\textsc{Heavy}}}
\newcommand{\new}{\textnormal{\textsc{New}}}
\newcommand{\margds}{\mcl{L}_{\rho}}
\newcommand{\Cut}{\mathrm{Cut}}
\newcommand{\Cycle}{\mathrm{Cycle}}
\newcommand{\mcl}[1]{\mathcal{#1}}

\newcommand{\abs}[1]{\left|#1\right|}

\newcommand{\Span}{\mathrm{span}}
\newcommand{\toh}{\widetilde{O}}
\newcommand{\poly}{\mathrm{poly}}
\newcommand{\oh}{O}
\newcommand{\wtd}[1]{\widetilde{#1}}
\newcommand{\reff}{R_{\rm eff}}
\newcommand{\tdreff}{\widetilde{R}_{\rm eff}}
\newcommand{\orcl}{\mcl{O}_G}

\newlist{todolist}{itemize}{2} 
\setlist[todolist]{label=$\square$} 

\allowdisplaybreaks

\title{Quantum Sampling of Random Spanning Trees \\ via Amortized Data Structures}

\author{Yassine Hamoudi\thanks{\texttt{yassine.hamoudi@labri.fr}}\qquad\; Adrian Tanasa\thanks{\texttt{adrian.tanasa@labri.fr}}\qquad\; Shrinidhi Teganahally Sridhara\thanks{\texttt{shrinidhi.teganahally-sridhara@labri.fr}} \\ [12pt] LaBRI, Univ. Bordeaux, CNRS UMR 5800, F-33400 Talence, France}

\date{\today}

\begin{document}
\maketitle

\begin{abstract}
  
We present a quantum algorithm that generates a uniform superposition over the spanning trees of a graph -- also known as a \emph{q-sample} -- using only a sublinear number of queries to the graph. 
This goes beyond previous work that focuses on generating \emph{classical} samples, including a recent quantum algorithm by Apers, Gao, Ji, and Liu~\cite{AGJL25c}. For an $n$-vertex, $m$-edge graph, our algorithm admits a pre-processing step of $\toh(\sqrt{mn} + m^{1-\delta})$, after which each q-sample can be generated in~$\toh(n^{1+2\delta})$ for any~$\delta$. Alternatively, $k$ independent q-samples can be generated at a total cost of~$\toh(\sqrt{kmn})$. We also prove a matching lower bound up to logarithmic factors, showing that our algorithm is essentially optimal. Further tradeoffs are given in the paper. In comparison, the optimal classical algorithms~\cite{ALV22c} need an $\toh(m)$ pre-processing step, after which each sample costs~$\toh(n)$ operations. These results yield faster quantum walk-based algorithms for counting spanning trees or finding a marked one.

Our result is obtained via quantum walk sampling over a sequence of slowly-changing Markov chains. Each chain is an \emph{isotropized up-down walk} that mixes rapidly to the spanning tree distribution of the input graph. A key ingredient is an amortized data structure supporting fast implementation of the associated quantum walk operators throughout the sequence. This data structure maintains access to a spectral sparsifier and a leverage score sampler that evolve with the underlying graph.

\end{abstract}

\section{Introduction}
Sampling and counting discrete structures in graphs is a fundamental task in randomized algorithms, with examples including independent sets, matchings, cuts, paths, and spanning trees. Random spanning trees are among the most extensively studied examples: since Kirchhoff's celebrated matrix-tree theorem~\cite{Kir47j} they have been closely tied to determinant identities, electrical networks, and graph Laplacians, see for example~\cite{DS84b,LP17ba}. Moreover, random spanning trees  form a canonical class of strongly Rayleigh measures, and hence exhibit strong negative-dependence properties~\cite{BBL09j}.

For a weighted connected graph $G(V,E,w)$ with $n$ vertices and $m$ edges, an important algorithmic problem is to sample one spanning tree from the weighted distribution in which each tree $T$ is chosen with probability proportional to $\prod_{e\in T}w(e)$. This task has been studied from several complementary perspectives.

A first family of algorithms thus exploits the matrix-tree theorem in a direct manner: Laplacian cofactors give exact counting information that can be used to guide the sampling process. This determinant-based approach led to polynomial-time samplers by Guenoche~\cite{Gue83j} and Kulkarni~\cite{Kul90j}, and was later refined by Colbourn, Myrvold, and Neufeld to run in $O(n^\omega)$ arithmetic operations~\cite{CMN96j} where $\omega \approx 2.37$ is the matrix multiplication exponent.

Another standard family of such algorithms gives exact samplers using random walks. The Aldous--Broder algorithm samples a uniform spanning tree from the first-entrance tree of a random walk~\cite{Ald90j,Bro89c}, and Wilson's loop-erased random-walk algorithm gave a faster and more practical exact sampler~\cite{Wil96c}. Subsequent work accelerated this random-walk viewpoint using Laplacian solvers, electrical flows, and Schur complements~\cite{KM09c,MST15c}, culminating in Schild's $m^{1+o(1)}$ time algorithm which is an almost-linear-time exact sampler~\cite{Sch18c}.

In parallel, determinant-preserving sparsification algorithms give approximate samplers running in $\toh(n^2\cdot \epsilon^{-2})$ time~\cite{DPPR20j}. A more recent direction studies approximate sampling through high-dimensional expansion and down-up walks; this viewpoint gives sharp mixing analyses for natural walks on spanning trees and led to near-linear-time approximate samplers, with runtime $O(m\log^2 m)$ in the work of Anari, Liu, Oveis Gharan, Vinzant, and Vuong~\cite{ALG+21c}.

When additional information on the target distribution is available, sampling can be accelerated further. A powerful example is domain sparsification for strongly Rayleigh distributions~\cite{AD20c,ADVY22c,ALV22c}: using approximate element marginals, the sampling task is reduced to a sampling task over a much smaller effective domain. For spanning trees, this gives an $\toh(n)$-step sampler after $\toh(m)$ marginal pre-processing~\cite{ALV22c}.

Most recently, Apers, Gao, Ji and Liu~\cite{AGJL25c} obtained a quantum speedup for the classical approximate sampling task: in the adjacency-list model, their algorithm outputs the classical description of a random spanning tree using $\toh(\sqrt{mn})$ queries and time, matching a corresponding quantum query lower bound up to polylogarithmic factors.

\medskip

Our paper studies the coherent analogue of this task. Given sparse adjacency-list oracle access to a weighted connected graph $G=(V,E,w)$, we aim to prepare the quantum-sample (q-sample) of its spanning-tree distribution,
\[
    \ket{\pi}
    =
    \sum_T \sqrt{\pi(T)}\ket T,
    \qquad
    \pi(T)
    =
    \frac{\prod_{e\in T}w(e)}{\sum_{T'}\prod_{e\in T'}w(e)},
\]
where $\ket T$ stores the lexicographically ordered list of canonical representation of the weighted edges of $T$. Query complexity counts only calls to the adjacency-list oracle.
 
\medskip
Quantum sampling was put forward by Aharonov and Ta-Shma~\cite{AT07j} as a state-generation task. A reversible classical sampler can be used directly to prepare $\sum_r\ket r\ket{T(r)}$, but not to obtain the desired state $\sum_T\sqrt{\pi(T)}\ket T$, since the random-seed register $\ket{r}$ cannot generally be efficiently erased. They also showed that a generic efficient procedure for turning classical samplers into quantum ones would imply $\mathrm{SZK}\subseteq\mathrm{BQP}$, underscoring that q-sampling is likely a difficult task in general. Among the special-purpose quantum algorithms developed for q-sampling, \emph{quantum simulated annealing} has proven particularly successful~\cite{SBBK08j,WA08j,OBD18j,HW20c,CLL22c} and is among the primitives that underpin our algorithm.

Beyond state generation, q-samples are a recurring primitive across several quantum algorithms. In \emph{learning theory}~\cite{AdW17j}, some classes of functions (such as Parities~\cite{BV97j} or Disjunctive Normal Forms~\cite{BJ99j}) can be learned using super-polynomially fewer quantum samples than classical ones. In \emph{Markov Chain Monte Carlo methods}, quantum walk search frameworks typically start from the coherent encoding of the stationary distribution and then amplify or detect marked states in the chain's state space~\cite{MNRS11j,Ben20d,AGJK20c}; this stationary-q-sample starting point underlies applications such as element distinctness~\cite{Amb07j} and triangle finding~\cite{MSS07j}. This has also led to work on the cost of obtaining such starting states, through quantum mixing, controlled quantum walks, analog quantum dynamics, and seed-set methods~\cite{CLR20j,Ape19c,LS23j}. Finally, coherent encoding of a system's initial state is essential for \emph{simulating its evolution} under quantum dynamics, such as via Hamiltonian simulation or the quantum adiabatic algorithm.

Our amortized data structure viewpoint is also related in spirit to recent quantum algorithms that maintain samplers for slowly changing distributions. Of particular relevance is the work of Bouland, Getachew, Jin, Sidford, and Tian~\cite{BGJ+23c}, who apply such ideas to zero-sum games. There, the evolving distributions are Gibbs distributions, whereas in our setting they are leverage-score distributions along a cooling path.

\subsection{Main result}

The main result of this paper is a sublinear-query quantum algorithm for preparing q-samples of the spanning tree distribution. Beyond preparing a single coherent sample, the algorithm gives a range of preprocessing--sampling tradeoffs and is particularly useful when many q-samples are needed from the same graph. Throughout the paper, query complexity is measured with respect to the sparse adjacency-list oracle $\mcl O_G$, where the neighbor list of each vertex is ordered lexicographically. Below is an informal statement of the theorem.

\begin{rtheorem}[\Cref{thm:w-optimal-query} (Informal)]
    Let $G=(V,E,w)$ be a connected weighted graph with $n$ vertices and $m\ge 100n$ edges, and let $\pi$ be its spanning tree distribution. 
    
    For any $\delta \in[0,1]$, there is an algorithm that prepares a q-sample of $\pi$ that admits a pre-processing step using $\toh\left(\sqrt{mn}+m^{1-\delta}\right)$ queries to the adjacency-list oracle $\mathcal O_G$, after which each q-sample can be generated using $\toh\left(n^{1+2\delta}\right)$ queries to $\mathcal O_G$.

    Alternatively, for any integer $k \geq 1$, there is an algorithm that prepares $k$ independent q-samples of $\pi$ with total query cost $\toh\left(\sqrt{kmn}\right)$, after the initial preprocessing of same cost.
\end{rtheorem}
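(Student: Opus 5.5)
The plan follows the abstract: quantum simulated annealing over a sequence of slowly changing Markov chains on spanning trees, with an amortized data structure supplying the walk operators.

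\textbf{Annealing path.} I would define a cooling path of graphs $G_0,G_1,\dots,G_L$ that interpolates between an easy graph and the input $G$. Concretely, take a fixed spanning tree or star-like skeleton with heavy weight, and let the edge weights of $G$ be scaled by $\beta_\ell$, with $\beta_\ell$ increasing geometrically. Each step should change the spanning tree distribution by a constant factor in overlap, $\langle \pi_\ell|\pi_{\ell+1}\rangle\ge \Omega(1)$. Since $\log$ of the partition function changes by at most $O(n)$ per doubling, $L=\toh(\sqrt{n})$ or $\toh(n)$ stages should suffice. For each $G_\ell$ I would use the isotropized up-down walk $P_\ell$. Its state space is spanning trees; one step adds an edge sampled approximately proportional to its leverage score in $G_\ell$ and then removes a uniform edge of the created cycle. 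Following the domain sparsification analysis of~\cite{ALV22c}, $P_\ell$ has spectral gap $\Omega(1/n)$ up to polylogs. Quantum simulated annealing~\cite{WA08j,HW20c} then prepares $\ket{\pi_L}=\ket{\pi}$ using $\toh(L\sqrt{n})$ applications of the quantum walk operators $W(P_\ell)$. Alternatively, combining the spectral gap with the overlap gives $\toh(n)$ walk steps in total, up to logs.

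\textbf{Walk operators via an amortized data structure.} The cost of each $W(P_\ell)$ is dominated by coherently sampling an edge with probability $\approx$ its leverage score in $G_\ell$, and by computing the normalized transition amplitudes. I would maintain classically a spectral sparsifier $H_\ell$ of $G_\ell$ with $\toh(n)$ edges, built with $\toh(\sqrt{mn})$ queries via the quantum sparsification algorithm of~\cite{AGJL25c}. From $H_\ell$ one gets approximate effective resistances, and hence a leverage-score sampler. A quantum sampler over $m$ edges with approximate weights requires a QRAM/tree of partial sums. This is where the tradeoff parameter enters. During preprocessing, I would explicitly tabulate leverage-score overestimates for a $m^{1-\delta}$-sized portion of the structure, bucketing edges by vertex. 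At sampling time, each walk step pays $\toh(n^{2\delta})$ to reconstruct a local bucket, for example by rejection sampling from vertex-level overestimates. This gives $\toh(n)\cdot\toh(n^{2\delta})=\toh(n^{1+2\delta})$ per q-sample. Because the path is slowly changing, $H_\ell$ remains a constant-factor sparsifier of $G_{\ell'}$ for $\ell'$ in a window. I would refresh it only $\toh(1)$ times along the path, rebuilding only the edges whose weights crossed a threshold. This amortization is what keeps preprocessing at $\toh(\sqrt{mn}+m^{1-\delta})$ rather than multiplying by $L$.

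\textbf{The $k$-sample bound.} I would pick $\delta$ to balance $\sqrt{mn}+m^{1-\delta}+k\,n^{1+2\delta}$. Alternatively, I would amortize the Grover-style rejection sampling across the $k$ samples: the leverage-score sampler over all $m$ edges costs $\toh(\sqrt{m\cdot t})$ queries to prepare $t$ samples. Setting the total number of edge samples to $t=\toh(kn)$ gives $\toh(\sqrt{kmn})$, which dominates $\sqrt{mn}$.

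\textbf{Main obstacle.} The hardest step is showing that an \emph{approximate} leverage-score sampler still yields a reversible chain with stationary distribution exactly $\pi_\ell$, and that the corresponding walk operators can be built coherently with controllable error. The chain's stationary distribution must not drift with the sparsifier's errors. I would first try a Metropolis correction: propose with overestimates $\tilde r_e\ge r_e$, and accept with probability depending only on $w_e$ and the cycle structure. This makes the stationary law exact while the gap degrades by only the overestimate ratio, which is $O(1)$. The approximation errors would then be absorbed into the annealing error budget via a union bound over the $\toh(n)$ walk applications.
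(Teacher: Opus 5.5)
Your high-level plan (anneal from a fixed tree, a marginal-biased up-down walk with gap $\toh(1/n)$, a log-convex schedule of length $\toh(\sqrt n)$, sparsifier-based resistances) matches the paper. The gap is the mechanism that produces the tradeoff, which is missing. The vertex-bucketing scheme never says what is tabulated or why a bucket costs $n^{2\delta}$, and as described it does no better than the uniform proposal. If the proposal cannot single out the individual high-leverage edges, it must give every edge the worst-case multiplicity $t(e)\approx m/n$. Amplitude amplification then costs $\Theta(\sqrt{m/n})$ per Up step.

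The paper's idea is a $\rho$-heavy proposal. With $2$-overestimates, the set $\widehat{\heavy}_\rho(G)=\{e: w(e)\tdreff(e)\ge\rho\}$ has at most $2(n-1)/\rho$ edges. It is found by repeated Grover search over the $2m$ adjacency positions with $\toh(\sqrt{mn/\rho})$ queries; so $m^{1-\delta}$ is the cost of finding the stored set, not its size. The proposal gives stored edges their exact $t(e)$ and all others the cap $\lceil m\rho/(n-1)\rceil$. It dominates $t$ and has mass $O(m^2\rho/n)$ against $M\ge m$, so $\ket{\mu_G}$ costs $\toh(\sqrt{m\rho/n})$ queries. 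A q-sample then costs $\toh(\sqrt n)\cdot\toh(\sqrt n)\cdot\toh(\sqrt{m\rho/n})=\toh(\sqrt{mn\rho})$ after $\toh(\sqrt{mn/\rho})$ preprocessing. Taking $\rho=\min\{1,nm^{2\delta-1}\}$, resp.\ $\rho=1/k$, gives the two claims.

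Your $k$-sample step also fails as written. The per-sample bound $n^{1+2\delta}$ is lossy (the real one is $nm^{\delta}$), and balancing $m^{1-\delta}$ against $kn^{1+2\delta}$ does not yield $\sqrt{kmn}$: for $k=1$, $m=n^{3/2}$ it gives $n^{9/7}>n^{5/4}$. Moreover, ``$\sqrt{mt}$ queries for $t$ samples'' bounds producing classical outputs, not $t$ separate coherent preparations inside unitary walk operators. That amortization is exactly what the heavy set with $\rho=n/t$ buys, so you need that mechanism anyway.

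Second, reuse along the schedule does not work as you argue it. A sparsifier of $G_\beta$ only certifies $G_{\beta'}$ up to a factor $e^{|\beta-\beta'|}$, while $\beta$ ranges over $[0,\Theta(n\log n)]$. A windowing argument based on that certificate therefore needs $\Omega(n)$ rebuilds at $\toh(\sqrt{mn})$ each. And since every non-skeleton edge is rescaled by the same factor, there is no subset of ``edges that crossed a threshold'' to rebuild. The paper instead uses linearity, $L_{G_\beta}=L_S+e^{-\beta}L_{G\setminus S}$: one sparsifier $\wtd H$ of $G\setminus S$ makes $S+e^{-\beta}\wtd H$ a sparsifier of every $G_\beta$ with no further queries.

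The heavy set also moves with $\beta$. The paper needs a Loewner-order monotonicity lemma (leverage scores of edges outside $S$ are non-increasing in $\beta$) to make $S\cup\widehat{\heavy}_\rho(G)$ a single static envelope. Both arguments need $S$ to be a spanning tree of $G$ itself; a heavy star not in $G$ changes the state space. The schedule length also needs $S$ admissible, e.g.\ a maximum weight-product tree. What matters is $\ln(Z(0)/Z(\beta^\star))\le\ln(1/\pi(S))=O(n\log n)$, not a per-doubling bound.

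Finally, your walk is incorrect. Removing a uniform cycle edge after a biased Up step violates detailed balance unless $w(e)t(f)=w(f)t(e)$, and the Metropolis repair leaves the gap unproven. The paper removes $e$ with probability proportional to $\nu(T+f-e)\propto t(e)/w(e)$, with $t(e)=\lceil m\tilde p(e)/(\lambda(n-1))\rceil$. This makes the chain a lazy lumping of the up-down walk on the isotropic multigraph. It is therefore exactly reversible for $\pi_\beta$ under any overestimates, and inherits the $\Omega(1/(n\log^2 n))$ gap and nonnegative spectrum.
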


We also prove a matching lower bound up to polylogarithmic factors, showing that the query complexity of our quantum algorithm is essentially optimal.

\begin{rtheorem}[\Cref{thm:lower-bound} (Informal)]
    Any algorithm preparing $k$ independent q-samples must perform at least~$\wtd{\Omega}(\sqrt{kmn})$ queries to $\orcl$.
\end{rtheorem}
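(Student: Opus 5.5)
The lower bound follows by reduction from an unstructured search problem. A hard instance is built whose edge set hides $\Theta(n)$ ``special'' edges among $\Theta(m)$ candidate slots. Any algorithm producing $k$ independent q-samples of $\pi$ lets us measure $k$ spanning trees. Each tree exposes some special edges with constant probability, so the $k$ trees together reveal $\Omega(\min(kn, n))$ special edges. More precisely, I aim for a hidden structure where one q-sample reveals a random $\Theta(n)$-size portion of the special edges. Then $k$ samples amount to finding $\Omega(k\cdot n)$ marked items (capped at the total), or to solving $k$ independent search problems.

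The concrete instance is as follows. Take $\Theta(n)$ vertices $u_1,\dots,u_{n/2}$, each attached to a dense block so that $u_i$ has degree $d=\Theta(m/n)$. Among $u_i$'s neighbors, exactly one (at a hidden position in its adjacency list) leads to the ``other side'' of the graph, and that edge is a bridge-like connector. Equivalently, the weights are arranged so that this edge lies in every spanning tree, or in a spanning tree with constant probability. Then any measured spanning tree, being a classical object, must list the hidden neighbor of $u_i$ for a constant fraction of the $i$. So producing one sample solves $\Omega(n)$ independent instances of search among $d$ items, each with constant success probability. By the strong direct product / multiple-search lower bound for quantum query complexity (Ambainis--Špalek--de Wolf, or the general adversary composition for finding $t$ marked items among $N$: $\Omega(\sqrt{tN})$), finding $t=\Theta(n)$ hidden items spread over $N=\Theta(m)$ adjacency entries costs $\Omega(\sqrt{nm})$.

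To get the $\sqrt{k}$ factor, I will randomize the instance so that different samples expose different hidden items. I use $\Theta(kn)$ hidden items in total, capped at $m$, spread over $\Theta(m)$ slots. The spanning tree distribution is designed so that each tree includes a uniformly random subset of size $\Theta(n)$ of the hidden edges. One way is to replace each bridge by a gadget of $k$ parallel alternatives, i.e.\ multiple paths, of which a tree uses exactly one uniformly at random. Then $k$ independent samples reveal, with high probability, $\Omega(kn)$ distinct hidden items, up to a coupon-collector logarithmic loss. The multiple-search lower bound then gives $\Omega(\sqrt{kn\cdot m})$ queries, matching the upper bound up to polylogarithmic factors. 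Since measuring q-samples only helps a classical-output task, the bound transfers to q-sample preparation.

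The main obstacle is designing the gadget so that three conditions hold at once: the graph has $n$ vertices and $m$ edges, the adjacency-list oracle hides the $kn$ positions without leaking them through degrees or lexicographic ordering, and the spanning-tree marginals on the hidden edges are $\Theta(1/k)$ and nearly independent across gadgets. I would first try weighted parallel-edge gadgets, subdivided to keep the graph simple, and compute marginals via effective resistances. After that, the remaining step is to verify that the multiple-search bound tolerates the constant-probability, rather than certain, recovery of each item.
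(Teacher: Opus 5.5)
Your skeleton matches the paper's: each sampled tree attaches every instance vertex $u_i$ to a uniformly random hidden neighbor, $\widetilde O(k)$ trees expose $k$ hidden positions per instance, and a direct product theorem over the $n$ instances gives $n\sqrt{k\cdot m/n}=\sqrt{kmn}$. The gap is the part you leave open, namely the gadget, and it is where the whole proof lies. Your concrete instance encodes the input in \emph{which edges exist}, and in the adjacency-list model this leaks. An edge changes the degrees of both endpoints, and it also appears in the sorted list of its other endpoint. For example, if the connectors all lead to a known hub on the ``other side'', reading the hub's list exposes all of them with $O(n)$ queries, far below $\sqrt{mn}$. You also do not show that the hidden edges enter a sampled tree with the uniform, independent law you need. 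With $u_i$ attached to a dense block, the tree has other ways to connect $u_i$. Finally, the alternative reading ``$k$ samples amount to $k$ independent search problems'' cannot hold: it would give $k\sqrt{mn}$ queries, above the $\widetilde O(\sqrt{kmn})$ upper bound.

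The missing idea is to fix the adjacency structure completely and hide the input only in edge weights. This works because a zero-weight edge never occurs in a sampled tree, and weights in $\mathbb{R}_{\ge 0}$ are allowed. The paper's graph has the following parts:
\begin{itemize}
\item vertices $s,v_0,\dots,v_q,u_1,\dots,u_n$ with $q=\lceil 2m/n\rceil$;
\item every edge $(u_i,v_j)$ present with weight $x^{(i)}_j\in\{0,1\}$;
\item weight $1$ on $(u_i,v_0)$, to keep the graph connected;
\item a star $(s,v_j)$ of weight $\mathcal W=12m$.
\end{itemize}
Degrees are then input-independent, and each oracle query is answered with at most one query to some $x^{(i)}$. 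Rayleigh's shorting law puts the whole star in the tree with probability at least $2/3$. Conditioned on this, each $u_i$ is a leaf attached to a uniform element of $\{0\}\cup\{j:x^{(i)}_j=1\}$, independently over $i$.

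This also fixes the right target problem. It is not finding a constant fraction of marked items, whose composed lower bound you leave unverified. It is $n$ independent $k$-threshold instances on $q$ bits, solved with one-sided error. Revealed indices are always genuine ones. Then $k\lceil 8\ln(nk)\rceil$ samples reveal $k$ of them whenever $|x^{(i)}|\ge k$, by a coupon-collector/supermartingale argument. That argument also absorbs the TV error $\epsilon\le 1/(12k)$, and it needs $k\le m/n$ so that $q\ge 2k$. The one-sided direct product theorem of Ambainis, \v{S}palek and de~Wolf then gives $\Omega(n\sqrt{kq})=\Omega(\sqrt{kmn})$, hence $\Omega(\sqrt{kmn}/\ln(nk))$ queries for the sampler.
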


In this work, we focus only on adjacency-list query complexity. A loose implementation-level accounting gives an $O(n^2)$ time bound (cf \Cref{Sec:time-complexity}) for both preprocessing and q-sampling, but we do not optimize runtime here. It is natural to expect that the time complexity of our algorithm can be brought closer to its query complexity with suitable quantum analogues of the dynamic tree data structures used in classical spanning-tree algorithms, such as link-cut trees~\cite{ALG+21c,ALV22c,RTF18j}. Developing such quantum dynamic-tree data structures is beyond the scope of this work.

\subsection{Applications}
  After the preprocessing phase, the spanning-tree q-sample can be used as a primitive for several downstream tasks. We describe two such applications: approximate counting of spanning trees and quantum-walk search for marked spanning trees.

\paragraph{Counting the number of spanning trees.}
For unweighted graphs, q-sampling can also be used to approximate the number of spanning trees. After the preprocessing step, we combine the q-sampling procedure along the slowly-varying cooling schedule with the state-of-the-art quantum partition-function approximation algorithm of Cornelissen and Hamoudi~\cite{CH23c}. This introduces an additional factor of $\toh(n^{1/4})$ over the q-sampling cost. Since the partition function is the weighted spanning-tree sum, and equals the number of spanning trees in the unweighted case, after preprocessing we obtain a $\toh(n^{5/4+2\delta})$-query algorithm for approximate spanning-tree counting, for any $\delta>0$. Including preprocessing, for $m\ge n^{5/4}$ we get a total query complexity of $\toh(\sqrt{m}\,n^{5/8})$.

For comparison, the recent algorithm of Liu, Peng, and Yang~\cite{LPY25p} estimates the spanning-tree count in time $\oh(m^{3/2})$. In sparse graphs, this improves over the previously best general bound $\toh(m+n^{15/8})$ by Chu, Gao, Peng, Sachdeva, Sawlani and Wang~\cite{CGP+20j}. Their result is a classical time bound, and since the algorithm reads the input graph, its query complexity is $\Theta(m)$. By contrast, our overall query complexity is $\toh(\sqrt{m}\,n^{5/8})$ for $m\ge n^{5/4}$, which is $o(m)$ whenever $m=\omega(n^{5/4})$. Thus, in this regime, our quantum counting algorithm is sublinear in the number of edges, including preprocessing.

\paragraph{Searching for marked spanning trees.}
The q-sample also gives the setup state needed in quantum-walk search algorithms. For instance, in the MNRS framework~\cite{MNRS11j}, for a reversible ergodic Markov chain with stationary distribution $\pi$, spectral gap $\gamma$, setup cost $S$, update cost $U$, checking cost $C$, and marked set of stationary measure at least $\alpha$, one can find a marked element with high probability using $\toh\left(S+\frac{1}{\sqrt{\alpha}}\left(\frac{1}{\sqrt{\gamma}}U+C\right)\right)$ queries.

In our setting, the setup cost corresponds to the cost of pre-processing and producing one q-sample, the update cost corresponds to the cost of reflection around the q-sample $\ket{\pi}$ which is summarized in \Cref{thm:w-optimal-query}. Thus our algorithm can be used as a black box inside quantum walk search algorithms for marked spanning trees, with an additional dependence on the marked fraction and the checking cost.

\subsection{Proof overview}
We provide high-level ideas behind the proof of \Cref{thm:w-optimal-query}. We begin with a simpler approach that illustrates the main ideas underlying our algorithm and highlights the key challenges that arise in achieving the improved bounds. We then describe how our techniques address these challenges to obtain the final result.

It is well known that a rapidly mixing reversible Markov chain with stationary distribution $\pi$ can be quantized to obtain a unitary operator $W$, referred to as the quantum walk operator, that enables efficient reflection about the q-sample $\ket{\pi}$. This construction is known as Szegedy quantization~\cite{Sze04c}. Given the ability to reflect about $\ket{\pi}$, one can start from an arbitrary state $\ket{\psi}$ and use Grover-style amplitude amplification to prepare $\ket{\pi}$. This requires $O(1/\braket{\psi}{\pi})$ applications of the quantum walk operator $W$. However, the overlap $\braket{\psi}{\pi}$ can be exponentially small for an arbitrary initial state $\ket{\psi}$. Thus, having a rapidly mixing Markov chain that allows for efficient sampling from $\pi$ does not, by itself, imply an efficient procedure for preparing the q-sample $\ket{\pi}$.

A standard approach for overcoming this issue is simulated annealing, which considers a sequence of slowly varying Markov chains whose stationary distributions interpolate between an easily preparable initial distribution and the target distribution. The parameters defining this sequence form a cooling schedule, which is chosen so that the q-samples of consecutive distributions have sufficiently large overlap. Consequently, their corresponding quantum walk operators can be used to transform one q-sample into the next, allowing the target q-sample to be prepared efficiently.

We design the following cooling schedule for spanning tree distributions: a sequence of slowly varying graphs that interpolates between an easily preparable state concentrated on a particular spanning tree and the target graph $G$. The schedule is defined using an admissible spanning tree $S$, namely, a spanning tree with non-negligible sampling probability, which can be identified using $\toh(\sqrt{mn})$ queries via the algorithm for finding the maximum weight-product spanning tree~\cite{DHHM06j}.

We parameterize the schedule by an inverse temperature $\beta \geq 0$. For each $\beta$, let $G_\beta$ be the graph obtained from $G$ by scaling the weight of every edge outside $S$ by $e^{-\beta}$ and leaving the edges of $S$ unchanged. Writing $\pi_\beta$ for the spanning tree distribution of $G_\beta$, we have, for every spanning tree $T$,
\[
    \pi_\beta(T)\propto e^{-\beta |T\setminus S|}\,w(T),
\]
where $|T\setminus S|$ is the number of edges of $T$ outside $S$. At $\beta=0$ we recover the target distribution, $\pi_0=\pi$. As $\beta$ grows, the distribution increasingly favors $S$, and as $\beta\to\infty$ it collapses onto $S$, so that $\pi_\infty=\mathds{1}_S$. Query access to every $G_\beta$ can be simulated using $\orcl$, the adjacency-list oracle for $G$.

We then show that a short schedule exists: there are inverse temperatures $\beta_l \geq \cdots \geq \beta_0=0$ such that (i) $\pi_{\beta_l}$ is concentrated mostly on the admissible tree $S$, (ii) consecutive distributions have constant overlap, $\braket{\pi_{\beta_i}}{\pi_{\beta_{i+1}}} = \Omega(1)$, and (iii) the schedule has length $l = \toh(\sqrt{n})$ (\Cref{prop:compute-adapt}), and it can be found using the adaptive-schedule procedure of Harrow and Wei~\cite{HW20c} using reflections around $\ket{\pi_{\beta}}$. Given an efficient way to reflect about each $\ket{\pi_\beta}$, we can compute such a schedule, start from $\ket{S}$ and walk down the schedule from $\beta_l$ to $\beta_0$, converting each q-sample into the next, to end with $\ket{\pi_{\beta_0}} = \ket{\pi}$.

One approach is to obtain these reflections by quantizing the standard up-down walk on spanning trees~\cite{CSV23p}. From a tree $T$, this walk adds an edge $f$ (the Up step), then removes an edge $e$ from the unique cycle in $T+f$ (the Down step), landing on $T+f-e$. The chain mixes in $\oh(m)$ steps~\cite{ALGV19c}, so the corresponding Szegedy walk operator $W_\beta$ reflects about $\ket{\pi_\beta}$ using $\toh(\sqrt{m})$ queries to $\orcl$. Using these, one can compute the schedule using the adaptive-schedule procedure takes $\toh(\sqrt{mn})$ queries to $\orcl$. Once the schedule is computed, traversing its $l=\toh(\sqrt{n})$ steps takes $\toh(\sqrt{n})\cdot\toh(\sqrt{m})=\toh(\sqrt{mn})$ queries, so the total cost of preparing the q-sample $\ket{\pi}$ is $\toh(\sqrt{mn})$.

This raises two natural questions. Can $k$ copies of $\ket{\pi}$ be prepared with fewer than the naive $\toh(k\sqrt{mn})$ queries? And can preprocessing reduce the cost of even a single q-sample? Our main result answers both in the affirmative.

\paragraph{Coherent Isotropization.}
Our approach is to quantize a different Markov chain from the standard up-down walk: one in which the transition probabilities of the Up and Down steps are biased by edge marginals. For the spanning tree distribution, the marginal of an edge $e$ is its leverage score $\ell(e)$:
\[
    \Pr_{T\sim \pi}(e\in T) = \ell(e)=w(e)\cdot\reff^G(e),
\]
where $\reff^G(e)$ is the effective resistance of $e$ in $G$. The Up step adds an edge sampled with probability proportional to its marginal, and the Down step removes an edge from the resulting cycle using a second distribution, also biased by marginals. We call this the marginal-aware walk, see \Cref{mc:marginal-aware} for a detailed description. Biasing by marginals acts as an implicit isotropic transformation: the walk behaves as if every edge had roughly the same marginal, that is, as if the distribution were in isotropic position. Its transition matrix has spectral gap $\wtd{\Omega}(1/n)$, so it mixes even faster than the standard up-down walk (\Cref{lem:margaw-walk-gap} and \cite{ALV22c}). Consequently, quantizing this chain allows us to implement a reflection about $\ket{\pi_\beta}$ using $\toh(\sqrt{n})$ applications of the resulting quantum walk operator. This walk is similar in spirit to the approach of~\cite{ALV22c}, but unlike theirs, ours still operates on the set of spanning trees, the same state space as the standard up-down walk. This preserves \emph{coherence}: the encoding of the q-sample is independent of $\beta$. Implementing the walk thus requires sampling edges according to the leverage score distribution. Since we quantize the chain for each $G_\beta$, we need effective-resistance overestimates for each of them, and we show how to avoid repeating this work across the schedule.

\paragraph{Preprocessing of effective resistances.}
We first compute a data structure that, once prepared, provides effective-resistance overestimates (whose sum over all edges is $O(n)$) for any $G_{\beta}$ without further queries to $\orcl$. A spectral sparsifier (a subgraph with fewer edges that approximates the Laplacian of $G$) suffices to compute such overestimates~\cite{SS11j}. The direct approach of building a new sparsifier for every $G_{\beta}$ is too expensive. Instead, we exploit the structure of $G_{\beta}$: for every $\beta$, its edge set is the same as that of $G$, and only the weights of the edges outside the admissible tree $S$ are rescaled, all by the same factor. We show that a sparsifier for the subgraph of $G$ outside $S$ is enough to obtain a sparsifier for any $G_{\beta}$ without further queries to $\orcl$. Such a sparsifier can be computed by the quantum sparsification procedure of Apers and de Wolf~\cite{AdW20c} using $\toh(\sqrt{mn})$ queries to $\orcl$. Thus, the same sparsification data, once computed, can be reused for every $G_{\beta}$ with only lightweight updates (\Cref{lem:spars-reuse}). This yields effective-resistance overestimates for every graph $G_{\beta}$ in the schedule using only $\toh(\sqrt{mn})$ queries to $\orcl$. This cost is paid once, during preprocessing, and is not incurred again for each $G_{\beta}$ or for each q-sample prepared afterwards.

\paragraph{Faster sampling from the leverage-score distribution.}
Once we have a data structure that gives effective-resistance overestimates for every $G_\beta$, we turn to implementing the quantum walk operators $W_\beta$ at each inverse temperature $\beta$ on the schedule, corresponding to the marginal-aware walk on $G_\beta$. A necessary primitive for implementing $W_\beta$ is a unitary $U$ that, given a spanning tree $\ket T$, prepares a superposition over the neighbors of $T$ in the chain, weighted by the square roots of the transition probabilities. Following the two steps of the walk, $U$ consists of an Up step that adds an edge $f$ sampled according to the leverage-score distribution of $G_\beta$, followed by a Down step that removes an edge $e$ from the unique cycle in $T+f$. The Down step acts only on the registers already holding $T$ and $f$ (with their weights): it finds the cycle by a search on $T+f$ and samples $e$ from a distribution on the cycle depending only on the weights stored in the registers, so it needs no queries to $\orcl$ (we discuss its time cost in \Cref{Sec:time-complexity}). The Up step, in contrast, does not depend on $T$ at all. It is the preparation of the state
\[
   \ket{\mu_\beta}=\frac{1}{\sqrt{Z}}\sum_{e\in E}\sqrt{\ell_\beta(e)}\ket e,
\]
where $\ell_\beta(e)=w_\beta(e)\,\reff^{G_\beta}(e)$ is the leverage score of $e$ in $G_\beta$, or equivalently its marginal under $\pi_\beta$, and $Z=n-1$ since leverage scores sum to $n-1$. Thus all queries made by $W_\beta$ are spent on q-sampling from the leverage-score distribution.

A natural way to prepare $\ket{\mu_\beta}$ is rejection sampling implemented with amplitude amplification, starting from a proposal state that is cheap to prepare. With the uniform proposal $\frac1{\sqrt m}\sum_e\ket e$, which costs $\toh(1)$ queries once the vertex degrees and their prefix sums are stored, and given effective-resistance overestimates of $G_\beta$, this uses $\toh(\sqrt{m/n})$ queries to $\orcl$. 

We improve on this with a \emph{$\rho$-heavy proposal}. For a threshold $\rho\in[n/m,1]$, let $\mcl E_\rho(\beta)$ be the set of edges of $G_\beta$ whose leverage score is at least $\rho$. The proposal gives each edge in $\mcl E_\rho(\beta)$ amplitude proportional to $\sqrt{\ell_\beta(e)}$ and every other edge the same amplitude $\sqrt\rho$:
\[
   \frac{1}{\sqrt{\widetilde{Z}}}\Big(
   \sum_{e\in \mcl{E}_{\rho}(\beta)} \sqrt{\ell_{\beta}(e)}\ket{e}
   + \sum_{e\notin \mcl{E}_{\rho}(\beta)} \sqrt{\rho}\ket{e}
   \Big).
\]
Since leverage scores sum to $n-1$, at most $(n-1)/\rho$ edges are heavy, so $\mcl E_\rho(\beta)$ can be stored explicitly. Given this stored set, the proposal can be prepared with $\toh(1)$ queries. Using amplitude amplification, this prepares $\ket{\mu_\beta}$ with $\toh(\sqrt{m\rho/n})$ queries to $\orcl$, improving on the uniform proposal by a factor $\sqrt\rho$. The restriction $\rho\ge n/m$ ensures the stated query bound.

The catch is that $G_\beta$ varies with $\beta$, so the leverage scores and hence the heavy edges may change along the schedule. We avoid recomputing the heavy edges at every $\beta$ by exploiting the monotonicity of leverage scores along the schedule. Recall that $S$ is the admissible spanning tree. As $\beta$ increases, leverage scores outside $S$ decrease, while those of edges in $S$ increase. Thus, the heavy edges at $\beta=0$, together with $S$, form an envelope $\mcl A_\rho$ containing all heavy edges throughout the schedule. In other words, every edge outside $\mcl A_\rho$ has leverage score below $\rho$ in every $G_\beta$. This envelope has size $O(n/\rho)$ and, together with the effective-resistance data structure, can be computed once during preprocessing with $\toh(\sqrt{mn/\rho})$ queries to $\orcl$. Storing these data therefore suffices to prepare the $\rho$-heavy proposal, and hence $\ket{\mu_\beta}$, for any $G_\beta$ using $\toh(\sqrt{m\rho/n})$ queries to $\orcl$, with no recomputation as $\beta$ changes. This is exactly the per-application query cost of $W_\beta$ in \Cref{prop:qw-operator}.

\paragraph{Preparing the q-sample and putting the costs together.}
Once these data structures are available, we start from $\ket{S}$ and anneal along the adaptive schedule in reverse order to prepare the target state $\ket{\pi}$. Since the adaptive schedule is constructed so that consecutive q-samples have sufficiently large overlap, we can efficiently transform $\ket{\pi_{\tau_j}}$ to $\ket{\pi_{\tau_{j-1}}}$ using the quantum walk operators at the relevant adaptive temperatures and fixed-point amplification~\cite{Gro05j} (\Cref{lem:q-sample-transform-walk}). Iterating these transformations prepares the target q-sample $\ket\pi$. The preprocessing phase builds the reusable data structures needed throughout the cooling schedule. Choosing the threshold parameter $\rho$ gives the tradeoff in \Cref{thm:w-optimal-query}: for any $\delta\in[0,1]$, the preprocessing cost is $\toh(\sqrt{mn}+m^{1-\delta})$, and each q-sample after preprocessing costs $\toh(n^{1+2\delta})$. If we want $k$ q-samples, we choose the threshold differently, giving preprocessing cost $\toh(\sqrt{kmn})$ and total q-sampling cost $\toh(\sqrt{kmn})$.

\section{Preliminaries}
We assume sparse adjacency-list access to $G$ through an oracle $\mcl O_G$ \cite{AdW20c}. Given a vertex $u$, the oracle returns $\deg(u)$; given additionally an index $i\in[\deg(u)]$, it returns the $i$th neighbor of $u$ and the corresponding edge weight. Neighbors are ordered lexicographically. This sparse-access model is standard in quantum algorithms for linear-system solving and Hamiltonian simulation \cite{HHL09j,CKS17j}. One has
\[
\mcl O_G\ket{u}\ket{0}
=
\ket{u}\ket{\deg(u)},
\]
and
\[
\mcl O_G\ket{u}\ket{i}\ket{0}\ket{0}
=
\ket{u}\ket{i}\ket{v}\ket{w(u,v)},
\]
where $v$ is the $i$th neighbor of $u$. Thus every undirected edge $\{u,v\}$ has two adjacency-list positions, $(u,i)$ and $(v,j)$, where $v$ is the $i$th neighbor of $u$ and $u$ is the $j$th neighbor of $v$. Throughout, adjacency-list oracle refers to this sparse-access oracle model.

Since adjacency lists are lexicographically sorted, we can also recover the adjacency-list index of an edge using binary search. Given $(u,v)$, we find the index $i$ such that $v$ is the $i$th neighbor of $u$ using $O(\log n)$ queries to $\mcl O_G$, implementing
\[
\ket{u}\ket{v}\ket{0}
\mapsto
\ket{u}\ket{v}\ket{i}.
\]

Our computational model is a quantum-accessible classical control model similar to \cite{AdW20c,AGJL25c}. For a graph $G$ with $n$ vertices, let $N$ denote the number of spanning trees of $G$. (Note that $N$ is generally exponential in $n$, and can be as large as $n^{n-2}$ for the complete graph.) The algorithm may run quantum subroutines on $\oh(\poly\log(N))$ qubits, make quantum queries to the input oracle $\mcl{O}_G$, and access classical data structures constructed during preprocessing through quantum-read/classical-write memory. A QRAM operation is either a classical write to such a data structure or a coherent quantum read, possibly in superposition.

Throughout, query complexity refers only to the number of queries to the input oracle $\mcl O_G$. QRAM operations and ordinary classical computation are not counted as input queries. When discussing time complexity, we count classical gates, quantum gates, input-oracle queries, and QRAM operations. The amount of QRAM used is the size of the data structures explicitly constructed during preprocessing, and will be specified when the relevant preprocessing procedure is invoked.

\subsection{Spanning Trees}

We consider only weighted connected graphs $G(V,E,w)$ with $n$ vertices, $m$ edges and non-negative edge weights $w$. Let $\Delta$ denote the set of spanning trees of $G$.
\begin{definition}[\sc Spanning tree distribution]
For $T\in\Delta$, define $w(T):=\prod_{e\in T}w(e)$. The spanning tree distribution $\pi_G$ is
\[
    \pi_G(T)=\frac{w(T)}{\sum_{S\in\Delta}w(S)}.
\]
When $G$ is clear from context, we simply write $\pi$.
\end{definition}

We recall the quantum speedup for classical sampling from $\pi$.
\begin{theorem}[\sc Quantum algorithm for classical sampling {\cite{AGJL25c}}]\label{thm:quantclassicalsample}
    There is a quantum algorithm that, given query access to $\mcl O_G$ for a connected weighted graph $G$, outputs a spanning tree drawn from a distribution $\epsilon$-close to $\pi$ in total variation distance with high probability. The algorithm uses $\toh(\sqrt{mn}\log(1/\epsilon))$ queries to $\mcl O_G$ and runs in $\toh(\sqrt{mn}\log(1/\epsilon))$ time.
\end{theorem}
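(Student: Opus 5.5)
The plan is to follow the classical domain-sparsification pipeline of \cite{ALV22c} and replace each step that reads the whole graph with a quantum subroutine using $\toh(\sqrt{mn})$ queries. There are three stages. First, estimate the marginals (leverage scores) of all edges implicitly. Second, draw a sublinear number of edge samples biased by these marginals. Third, run a classical rapidly mixing walk on the resulting small multigraph, which needs no further queries to $\mcl O_G$. The resulting spanning tree is then mapped back to a tree of $G$.

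\emph{Step 1: effective-resistance overestimates.} Run the quantum spectral sparsification algorithm of \cite{AdW20c} to obtain a $(1\pm\tfrac12)$-spectral sparsifier $H$ of $G$ with $\toh(n)$ edges, using $\toh(\sqrt{mn})$ queries and time, and store $H$ classically.
\begin{itemize}
\item For any edge $e=(u,v)$ returned by $\mcl O_G$, the quantity $\tilde\ell(e)=2\,w(e)\,\reff^H(e)$ is an overestimate of $\ell(e)=w(e)\reff^G(e)$, and $\sum_e\tilde\ell(e)=O(n)$.
\item $\reff^H(e)$ can be computed from $H$ alone, with no queries; a Laplacian solve or a stored pseudoinverse of $L_H$ suffices for query complexity.
\item To keep the time bound near-linear, I would instead use a Johnson--Lindenstrauss sketch of $W_H^{1/2}B_HL_H^{+}$ with $O(\log n)$ rows, precomputed in $\toh(n)$ time. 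This gives constant-factor approximations of $\reff^H$ in $\toh(1)$ time per edge, and hence of $\reff^G$.
\end{itemize}

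\emph{Step 2: marginal-biased edge sampling.} The uniform edge state $\frac{1}{\sqrt{2m}}\sum_{(u,i)}\ket{u,i}$ is prepared from stored degree prefix sums, at a one-time cost of $n$ degree queries. On this state, compute $\tilde\ell(e)$ in an ancilla and rotate a flag qubit by $\sqrt{\tilde\ell(e)/c}$ for a suitable normalizer $c$.
\begin{itemize}
\item The success amplitude is $\Theta(\sqrt{n/m})$ after capping $\tilde\ell\le 1$.
\item Amplitude amplification therefore returns one edge distributed proportionally to $\tilde\ell$ using $\toh(\sqrt{m/n})$ queries.
\item Repeating $t=\toh(n)$ times, as required by domain sparsification for strongly Rayleigh measures with marginal overestimates summing to $O(n)$, costs $\toh(\sqrt{mn})$ queries in total.
\end{itemize}
Each sampled edge is recorded together with its weight and its importance weight $1/\tilde\ell(e)$.

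\emph{Step 3: sampling on the sparsified domain.} Invoke the domain-sparsification theorem of \cite{ALV22c}. Sampling a spanning tree of the multigraph on the $t$ sampled edges, with reweighted edge weights $w(e)/\tilde\ell(e)$, and projecting back to $G$ yields a tree whose law is within small TV distance of $\pi$. Concretely:
\begin{itemize}
\item The sampled multigraph is sparse, so we sample from its spanning tree distribution with the classical down-up walk of \cite{ALG+21c}.
\item By the spectral-independence mixing bound, this walk reaches $\epsilon$ TV distance in $\toh(n\log(1/\epsilon))$ steps, each implemented with link-cut trees in $\toh(1)$ time and zero queries.
\item To obtain the $\log(1/\epsilon)$ dependence uniformly, I would repeat the domain-sparsification outer loop. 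Each round resamples $\toh(n)$ edges, so each round costs another $\toh(\sqrt{mn})$ queries, and $O(\log(1/\epsilon))$ rounds of the boosted scheme contract the error geometrically, as in \cite{ALV22c}.
\end{itemize}

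The main obstacle is Step 3's error accounting. The domain-sparsification guarantee of \cite{ALV22c} assumes exact marginal overestimates. Here $\tilde\ell$ comes from an approximate sparsifier and a sketch, and the amplitude-amplified sampler itself is only approximately exact. I would handle this as follows.
\begin{itemize}
\item Use exact amplitude amplification, with known success probability computed classically as $\sum_e\tilde\ell(e)$ over $H$-derived quantities, or fixed-point amplification.
\item Use a union bound over the $\toh(n\log(1/\epsilon))$ sampling calls, each failing with probability $1/\poly(n)$.
\item Absorb the multiplicative slack in $\tilde\ell$ into the constant-factor overestimate that the theorem permits, so that the per-round contraction and the overall $\toh(\sqrt{mn}\log(1/\epsilon))$ query and time bounds follow.
\end{itemize}
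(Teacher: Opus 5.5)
The paper gives no proof of this statement; it quotes it from \cite{AGJL25c}. Your Steps~1--2 are the right quantum ingredients: overestimates $\tilde\ell$ from an \cite{AdW20c} sparsifier, and Grover-based sampling of edges $\propto\tilde\ell$ at $\toh(\sqrt{m/n})$ queries each, hence $\toh(\sqrt{mn})$ per batch of $t=\toh(n)$ edges. Step~3, however, has a genuine gap. Its first claim is that a spanning tree of the reweighted multigraph $H$ built from a \emph{single} fresh batch of $t=\toh(n)$ edges is $\epsilon$-close to $\pi$, and this is false in general. The one-round law is $\nu(T)\propto\mathbb{E}_H\big[\mathbf{1}\{T\subseteq H\}\,w'(T)/\tau_{w'}(H)\big]$, where $\tau_{w'}(H)$ is the weighted spanning-tree count of $H$, and this normalizer depends on $T$ through the conditioning. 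Heuristically, already on $K_n$ (isotropic, all weights equal) this tilts the uniform tree law by roughly $\exp\big(c\sum_v\deg_T(v)^2/\bar d^{\,2}\big)$ for some $c>0$, where $\bar d=2t/n=\toh(1)$ is the average degree of $H$. Since $\sum_v\deg_T(v)^2$ fluctuates by $\Theta(\sqrt n)$ over uniform trees, $\nu$ is far from $\pi$ in total variation. The $t=\toh(k)$ guarantee of \cite{ALV22c} is not a one-shot statement. It concerns the large-step up-down Markov chain, in which $T_{i+1}$ is drawn from the reweighted spanning-tree distribution of the multigraph on $T_i\cup A_i$, with $A_i$ a fresh batch; this chain is stationary at $\pi$ and mixes in $\toh(1)$ rounds. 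Your fix, ``repeat the outer loop and contract the error geometrically'', only works if the current tree is carried into the next round's domain. Independent repetitions of a biased one-round sampler do not reduce its bias at all.

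Once the chain is set up correctly, you also need an initial spanning tree, which your proposal never constructs. The round count also depends on $\pi(T_0)$; for an entropy-contracting chain it grows like $\log\big(\log(1/\pi(T_0))/\epsilon\big)$, so the start matters for general weights. The missing step is to start from a maximum-weight-product spanning tree, found with $O(\sqrt{mn}\ln(1/\eta))$ queries (\Cref{thm:max-product}, which this paper attributes to \cite{AGJL25c}). This tree is admissible by \Cref{prop:sample-admissible}, so $\log(1/\pi(T_0))=O(n\log n)$ and $O(\log(n/\epsilon))$ rounds suffice. Each round costs $\toh(\sqrt{mn})$ queries for the batch and $\toh(n)\le\toh(\sqrt{mn})$ classical time for the sparse-graph sampler, which gives the stated $\toh(\sqrt{mn}\log(1/\epsilon))$. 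A smaller point: $\sum_{e\in E}\tilde\ell(e)$ ranges over the unknown edges of $G$, so it cannot be computed from $H$, and ``exact amplitude amplification with known success probability'' is unavailable. Use exponential-search amplitude amplification and measure the flag instead. The accepted edge is then exactly distributed $\propto\tilde\ell$ at expected cost $O(\sqrt{m/n})$, so no approximate-sampler error needs to be accounted for.
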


We emphasize here that our goal is stronger: instead of having as an output a classical tree sampled from $\pi$, we prepare the coherent encoding of the distribution. In our algorithms, the tree register stores not only the edge identities but also their weights.
 We thus use the following canonical weighted encoding:

\begin{definition}[\sc Q-sample of $\pi$]
Let $\pi$ be the spanning tree distribution of $G=(V,E,w)$. We encode a spanning tree $T$ by listing its $n-1$ edges in lexicographic order, with each edge stored as an ordered pair of endpoints together with its weight in $G$. Thus the spanning tree writes
\[
    T
    =
    \bigl((u_1,v_1,w(u_1,v_1)),\ldots,(u_{n-1},v_{n-1},w(u_{n-1},v_{n-1}))\bigr),
\]
where $u_i<v_i$ for each $i$ and $(u_1,v_1)<_{\rm lex}\cdots<_{\rm lex}(u_{n-1},v_{n-1})$. The q-sample of $\pi$ is the coherent state
\[
    \ket{\pi}
    =
    \sum_{T\in\Delta}\sqrt{\pi(T)}\ket{T}.
\]
Here $\ket{T}$ denotes the canonical weighted edge-list encoding above.
\end{definition}

\begin{definition}[\sc Unique cut and cycle of a spanning tree]\label{def:cutcycl}
    For a spanning tree $T$ and an edge $e\in T$, let $\Cut(T-e)$ be the cut created by removing $e$. For an edge $f\notin T$, let $\Cycle(T+f)$ be the unique cycle created by adding $f$ to $T$.
\end{definition}

\begin{definition}[\sc Graph deletion, scaled addition and rescaling]\label{Def:del-scale}
Let $G=(V,E,w)$ and $K=(V,F,u)$ be weighted graphs on the same
vertex set. We define
\[
G\setminus K
:=
\bigl(V,E\setminus F,w|_{E\setminus F}\bigr),
\]
that is, $G\setminus K$ is obtained from $G$ by deleting the edges
in $E\cap F$.

For $c\geq0$, define $G+cK:=(V,E\cup F,w')$, where
\[
    w'(e)=
    \begin{cases}
    w(e), & e\in E\setminus F,\\
    c\cdot u(e), & e\in F\setminus E,\\
    w(e)+c\cdot u(e), & e\in E\cap F.
    \end{cases}
\]

For $c\geq0$ and $E'\subseteq E$, we define $G_{E',c}:=(V,E,w_{E',c})$, where
\[
    w_{E',c}(e)=
    \begin{cases}
    w(e), & e\notin E',\\
    c\cdot w(e), & e\in E'.
    \end{cases}
\]
\end{definition}

  \subsection{Leverage Scores and Spectral Sparsifiers}
  \begin{definition}[\sc Laplacian of a graph]
Let $G=(V,E,w)$ be an undirected weighted graph. For each edge $e=\{u,v\}\in E$, fix an arbitrary orientation and denote by $\chi_u$, the $u$-th standard basis vector in $\mathbb{R}^V$. The Laplacian of $G$ is 
\[
    L_G=\sum_{e = uv \in E} w(e)(\chi_u - \chi_v)(\chi_u - \chi_v)^\top .
\]
Equivalently, $L_G\in\mathbb{R}^{V\times V}$ is given entrywise by
\[
    \left(L_G\right)_{uv}= 
    \begin{cases}
        \sum_{e \in E: v \in e} w(e) & \text { if } u=v, \\ -w(uv) & \text { if }\left\{u,v\right\} \in E, \\ 0 & \text { otherwise. }
    \end{cases}
\]
The matrix $(\chi_u - \chi_v)(\chi_u - \chi_v)^\top$ is independent of the chosen orientation of $e$.
\end{definition}

\begin{fact}[\sc Laplacians are PSD]\label{fact:psd}
    The Laplacian matrix $L_G$ is positive semidefinite. 
\end{fact}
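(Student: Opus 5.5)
The plan is to compute the quadratic form of $L_G$ directly from the edge-sum representation in the definition of the Laplacian, and to show that it is a sum of nonnegative terms. Since $L_G$ is real and symmetric (each summand $(\chi_u-\chi_v)(\chi_u-\chi_v)^\top$ is symmetric), positive semidefiniteness is equivalent to $x^\top L_G x \ge 0$ for every $x\in\mathbb{R}^V$.

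Fix $x\in\mathbb{R}^V$. By linearity,
\[
x^\top L_G x = \sum_{e=uv\in E} w(e)\, x^\top(\chi_u-\chi_v)(\chi_u-\chi_v)^\top x .
\]
Each term simplifies because $(\chi_u-\chi_v)^\top x = x_u - x_v$. It therefore equals $w(e)\,(x_u-x_v)^2$, and we obtain
\[
x^\top L_G x=\sum_{uv\in E} w(e)\,(x_u-x_v)^2 .
\]
The weights satisfy $w(e)\ge 0$ by the standing assumption of the subsection, and squares are nonnegative. Hence every term is nonnegative, so the sum is as well, and $L_G\succeq 0$.

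As a remark, the argument also shows that each summand $w(e)(\chi_u-\chi_v)(\chi_u-\chi_v)^\top$ is PSD on its own, being a nonnegative multiple of a rank-one outer product $bb^\top$. Since sums of PSD matrices are PSD, this gives an equivalent one-line proof. The equivalence with the entrywise description is not needed here.

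There is no real obstacle; the only point needing care is to invoke the nonnegativity of the weights, without which the claim fails.
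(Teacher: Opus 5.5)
Your proof is correct: writing $x^\top L_G x=\sum_{e=uv\in E} w(e)\,(x_u-x_v)^2$ and using the standing assumption $w\ge 0$ is the standard argument. The paper states this as a Fact without proof, relying on this same edge-sum representation $L_G=\sum_{e} w(e)\,b_e b_e^\top$, so your route is essentially the intended one.
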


Graph sparsification produces a re-weighted graph with fewer edges, known as a graph spectral sparsifier. A graph spectral sparsifier of $G$ is a re-weighted subgraph that closely approximates the quadratic form of the Laplacian for any vector $x \in \mathbb{R}^{|V|}$. By a re-weighted subgraph of $G$, we mean a graph $\wtd{G}=(V,\wtd{E},\wtd{w})$ such that $\wtd{E} \subseteq E$, where $\wtd{w}$ assigns arbitrary nonnegative weights to edges in $\wtd{E}$ (not necessarily equal to the original weights $w$).

\begin{definition}[\sc Graph Spectral Sparsifier]\label{def:spec-sparsf}
    Let $G=(V, E, w)$ be a weighted graph, and let $\wtd{G}= (V, \wtd{E}, \wtd{w})$ be a re-weighted subgraph of $G$, where $\wtd{w}: \wtd{E} \rightarrow \mathbb{R}_{\geq 0}$ and $\wtd{E} \subseteq E$. 
    
    For any $\epsilon>0, \wtd{G}$ is an $\epsilon$-spectral sparsifier of $G$ if for any vector $x \in \mathbb{R}^V$, the following holds:
    $$
    \left|x^{\top} L_{\wtd{G}} x-x^{\top} L_G x\right| \leq  \epsilon \cdot x^{\top} L_G x .
    $$
\end{definition}

In the groundbreaking work \cite{SS11j}, the authors demonstrated that graphs can be efficiently sparsified by sampling $\toh(n)$ edges with probability proportional to the product of their weight and effective resistance (which is called the \emph{leverage score}). Let us now introduce the notion of effective resistance:

\begin{definition}[\sc Effective Resistance]\label{def:eff-res}
    Given a weighted graph $G=(V, E, w)$, the effective resistance $\reff(u,v)$ of a pair of vertices $u, v \in V$ is defined as
    $$
    \reff(u,v)=\left(\chi_u-\chi_v\right)^{\top} L_G^{+}\left(\chi_u-\chi_v\right)= \norm{\left(L_G^{+}\right)^{1 / 2}\left(\chi_u-\chi_v\right)}^2 .
    $$
    Here, $L_G^{+}$ is the Moore-Penrose inverse of the Laplacian matrix $L_G$.
\end{definition}

The effective resistances are connected to the marginals of the spanning tree distribution as follows.

\begin{lemma}[\sc Spanning tree marginals and leverage scores {\cite[Section~4.2]{LP17ba}}]
\label{lem:marg}
    Let $G=(V,E,w)$ be a weighted graph, and let $\pi$ be its weighted spanning tree distribution. For an edge $e=\{u,v\}\in E$, let $\reff(u,v)$ denote the effective resistance between its endpoints. Then $\Pr_{T\sim\pi}[e\in T] = w(e)\reff(u,v)$. The quantity $\ell_G(e):=w(e)\reff(u,v)$ is the leverage score of $e$. When the underlying graph is clear from context, we write $\ell(e)$ for $\ell_G(e)$.
\end{lemma}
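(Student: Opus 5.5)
We prove the identity $\Pr_{T\sim\pi}[e\in T]=w(e)\reff(u,v)$ through the weighted matrix-tree theorem and the rank-one (matrix determinant lemma) update of the Laplacian. Throughout, write $b_e:=\chi_u-\chi_v$, so $L_G=\sum_e w(e)b_eb_e^\top$. Assume $G$ is connected; otherwise both sides are interpreted as zero when $u,v$ lie in different components, or we restrict to the component containing $e$.

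\emph{Step 1: express the marginal as a ratio of tree weights.} Split the spanning trees according to whether they contain $e$. Then
\[
\Pr_{T\sim\pi}[e\notin T]=\frac{\sum_{T\in\Delta,\,e\notin T}w(T)}{\sum_{T\in\Delta}w(T)}=\frac{\kappa(G-e)}{\kappa(G)},
\]
where $\kappa(H)$ denotes the weighted tree sum of $H$. It therefore suffices to show $\kappa(G-e)/\kappa(G)=1-w(e)\reff(u,v)$.

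\emph{Step 2: matrix-tree theorem with a grounded vertex.} Fix a vertex $r$ and let $L^{(r)}$ denote a Laplacian with row and column $r$ deleted. By the weighted matrix-tree theorem (Kirchhoff), $\kappa(H)=\det L_H^{(r)}$ for every graph $H$ on $V$. Since $G$ is connected, $L_G^{(r)}$ is invertible. Moreover, $L_{G-e}^{(r)}=L_G^{(r)}-w(e)\,b_e^{(r)}b_e^{(r)\top}$.

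\emph{Step 3: apply the matrix determinant lemma.} The lemma gives
\[
\frac{\kappa(G-e)}{\kappa(G)}=1-w(e)\,b_e^{(r)\top}\bigl(L_G^{(r)}\bigr)^{-1}b_e^{(r)}.
\]

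\emph{Step 4: identify the quadratic form as the effective resistance.} We need
\[
b_e^{(r)\top}\bigl(L_G^{(r)}\bigr)^{-1}b_e^{(r)}=b_e^\top L_G^+b_e=\reff(u,v).
\]
To see this, let $x$ be the vector obtained from $\bigl(L_G^{(r)}\bigr)^{-1}b_e^{(r)}$ by padding with $x_r=0$. Then $L_Gx-b_e$ vanishes off coordinate $r$. Since both $L_Gx$ and $b_e$ have coordinates summing to zero, the $r$ coordinate of $L_Gx-b_e$ also vanishes, so $L_Gx=b_e$. Now $b_e\perp\mathbf 1=\ker L_G$, and $L_G^+b_e$ also solves $L_G y=b_e$. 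Hence $x=L_G^+b_e+c\mathbf 1$ for some constant $c$. Because $b_e^\top\mathbf 1=0$, this gives $b_e^\top x=b_e^\top L_G^+b_e$, and $b_e^\top x=b_e^{(r)\top}\bigl(L_G^{(r)}\bigr)^{-1}b_e^{(r)}$ since $x_r=0$.

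Combining Steps 1 through 4 gives $\Pr_{T\sim\pi}[e\in T]=1-\kappa(G-e)/\kappa(G)=w(e)\reff(u,v)$. The only subtle step is Step 4, the passage between the grounded inverse and the pseudoinverse; the rest is standard algebra.
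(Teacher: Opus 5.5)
Your proof is correct, but it takes a different route from the one the paper relies on. The paper gives no proof of its own and cites Lyons--Peres, Section~4.2, where Kirchhoff's formula is derived probabilistically from Wilson's algorithm and the random-walk interpretation of electrical current. There, the tree path from $u$ to $v$ is a loop-erased random walk, its expected signed use of $e$ equals the unit current $i(e)$ from $u$ to $v$, and so $\Pr_{T\sim\pi}[e\in T]=i(e)=w(e)\reff(u,v)$.

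Your argument is purely algebraic. Deletion reduces the marginal to $\kappa(G-e)/\kappa(G)$. The matrix-tree theorem and the matrix determinant lemma turn this into $1-w(e)\,b_e^{(r)\top}\bigl(L_G^{(r)}\bigr)^{-1}b_e^{(r)}$. Your Step~4 then correctly reconciles the grounded inverse with the pseudoinverse definition of $\reff$ that the paper actually uses (\Cref{def:eff-res}), so no electrical-network dictionary is needed. The probabilistic route buys more generality: it gives signed path probabilities for arbitrary vertex pairs and leads to the transfer-current theorem. Yours is shorter and self-contained given the matrix-tree theorem.

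One point needs a sentence of care, since the paper allows zero edge weights (as in its lower-bound construction). Justify the invertibility of $L_G^{(r)}$ and the identity $\ker L_G=\Span\{\mathbf 1\}$ by $\det L_G^{(r)}=\kappa(G)>0$. This is connectivity through positive-weight edges, which is exactly what is required for $\pi$ to be defined; connectivity of the edge support alone is not enough.
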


Let us now recall a property of the marginals of the spanning tree distribution:

\begin{fact}[\sc Foster's theorem \cite{Fos49j}]
    \label{fact:foster}
    For any weighted graph $G=(V, E, w)$, the sum of the marginals of the spanning tree distribution $\pi$ is always at most $n-1$ i.e.,
    \[
        \sum\limits_{e\in E}\Pr_{T\sim \pi}[e\in T] = \sum\limits_{e\in E} \ell(e) \leq n-1
    \]
    with equality iff $G$ is connected.
\end{fact}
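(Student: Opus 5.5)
The plan is to prove the identity through the Laplacian trace and to use \Cref{lem:marg} only to turn marginals into leverage scores. For each edge $e=uv$ write $b_e:=\chi_u-\chi_v$, so that $L_G=\sum_{e\in E} w(e)\, b_e b_e^\top$. By \Cref{def:eff-res},
\[
\sum_{e\in E}\ell(e)=\sum_{e\in E} w(e)\, b_e^\top L_G^+ b_e=\sum_{e\in E} w(e)\,\mathrm{tr}\!\left(L_G^+ b_e b_e^\top\right)=\mathrm{tr}\!\left(L_G^+ L_G\right).
\]
Here we use cyclicity and linearity of the trace. Since $L_G$ is symmetric PSD (\Cref{fact:psd}), $L_G^+L_G$ is the orthogonal projector onto the range of $L_G$. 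Its trace is therefore $\mathrm{rank}(L_G)=n-\dim\ker L_G$.

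The second step is to identify $\ker L_G$. For $x\in\mathbb{R}^V$ we have $x^\top L_G x=\sum_{e=uv} w(e)(x_u-x_v)^2$. Since $L_G$ is PSD, $L_Gx=0$ iff $x^\top L_G x=0$. With strictly positive weights, this holds iff $x$ is constant on every connected component; edges of weight $0$ are treated as absent, since they contribute neither to $L_G$ nor to any spanning tree weight. Hence $\dim\ker L_G=c$, the number of connected components, which gives $\sum_e \ell(e)=n-c\le n-1$, with equality iff $c=1$, i.e.\ iff $G$ is connected.

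Finally, when $G$ is connected, \Cref{lem:marg} gives $\Pr_{T\sim\pi}[e\in T]=\ell(e)$ edge by edge, so the sum of marginals equals the sum of leverage scores. (Directly, $\sum_e \Pr[e\in T]=\E|T|=n-1$, which double-checks the equality case.) In the disconnected case $\Delta$ is empty and only the leverage-score form of the statement is meaningful, and the trace computation covers it. The only delicate point is this bookkeeping of what the marginal means when $G$ is disconnected, together with the handling of zero-weight edges; the linear algebra itself is routine.
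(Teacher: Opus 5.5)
Your proof is correct. There is no competing proof to compare it with, because the paper does not prove this statement: it imports it as a cited fact from Foster. Your computation $\sum_e \ell(e)=\mathrm{tr}(L_G^+L_G)=\operatorname{rank}(L_G)=n-c$ is the standard self-contained argument.

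It helps to see what your route buys over the quicker one. Every place the paper actually invokes the fact concerns a connected graph, for example the bound $|\widehat{\heavy}_\rho(G)|\le\lambda(n-1)/\rho$ in \Cref{lem:heavy-set-reff-overestimates}. There, $\sum_e\Pr_{T\sim\pi}[e\in T]=\E|T|=n-1$ combined with \Cref{lem:marg} already gives the identity in one line; this is the double-check you mention. Your rank computation is needed only for the strict inequality in the disconnected case.

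The rank argument also makes explicit that ``connected'' must mean connected through positive-weight edges. With the paper's nonnegative weights, a graph connected only via zero-weight edges has $\sum_e\ell(e)<n-1$ and no well-defined $\pi$. Your convention matches the paper's own usage in the lower-bound construction, where the weight-one edges to $v_0$ are added precisely to keep $G$ connected despite the zero-weight edges $(u_i,v_j)$.
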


We usually do not have exact marginals and hence will use the following notion of approximate marginals and effective resistances in our algorithms.

\begin{definition}[\sc Marginal and effective-resistance overestimates]
    \label{def:marg-res-ovest}
    Consider a connected weighted graph $G=(V,E,w)$ with spanning tree distribution $\pi$. Let $\lambda \geq 1$ be a constant. A function $\wtd p:E\to\mathbb{R}_{\geq 0}$ is a $\lambda$-marginal overestimate if, for every $e\in E$, $\wtd p(e) \geq \Pr_{T \sim \pi}[e\in T]$ and $\sum_{e\in E}\wtd p(e)\le \lambda(n-1)$.

    Similarly, a function $\wtd R:V\times V\to\mathbb{R}_{\geq 0}$ is a
    $\lambda$-effective-resistance overestimate if, for every $u,v\in V$, it holds that $\reff(u,v) \leq \tdreff(u,v) \leq \lambda \cdot \reff(u,v)$. 
\end{definition}

\begin{remark}
    A $\lambda$-effective-resistance overestimate is a pointwise $\lambda$-approximation, whereas a $\lambda$-marginal overestimate only requires a pointwise lower bound and an upper bound on the total mass; it need not approximate each marginal within a factor $\lambda$.
\end{remark}

Apers and de Wolf~\cite{AdW20c} give a quantum speedup for graph sparsification over the classical algorithm of Spielman and Srivastava~\cite{SS11j}.

\begin{theorem}[\sc Quantum Speedup for Sparsification and Effective-resistances, adapted from {\cite{AdW20c}}]
\label{thm:spars-apersdw}
    There exists a quantum algorithm that, given query access to the adjacency-list oracle $\mcl O_G$ of a weighted graph $G=(V,E,w)$ with $n$ vertices and $m$ edges, and parameters $\epsilon \in [0,1/3]$ and $\eta>0$, outputs an explicit description of a $\epsilon$-spectral sparsifier of $G$ with $\toh(n/\epsilon^2)$ edges with probability at least $1-\eta$. The algorithm uses $\toh(\sqrt{mn}\ln(1/\eta)/\epsilon)$ queries to $\mcl O_G$ and runs in the same time.
\end{theorem}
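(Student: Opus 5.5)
The plan is to obtain the statement as a direct specialization of the main sparsification theorem of Apers and de Wolf~\cite{AdW20c}, checking three adaptations. First, that it holds for weighted graphs in the sparse adjacency-list model $\mcl O_G$ as defined above. Second, that the sparsifier is output explicitly, with $\toh(n/\epsilon^2)$ edges. Third, that the success probability can be driven to $1-\eta$ at a multiplicative cost of $\ln(1/\eta)$.

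\textbf{Structure of the algorithm.} I would recall the recursive structure of the algorithm of~\cite{AdW20c}, which runs in $O(\log n)$ rounds. Each round does the following.
\begin{enumerate}
\item It builds a $\toh(n)$-edge spanner of the current graph using Grover search over adjacency lists. The spanner stretch gives effective-resistance overestimates, hence leverage-score overestimates with total mass $\toh(n)$.
\item It keeps every spanner edge.
\item It keeps every other edge independently with probability $1/2$, doubling its weight. This step is implemented lazily through a random oracle (a $k$-wise independent hash function), so the current graph is never written down.
\end{enumerate}
After the rounds, the composition of the per-round approximations gives an $\epsilon$-spectral sparsifier. Each approximation has error $\epsilon/\log n$, and these combine multiplicatively. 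The query cost is dominated by the spanner constructions: each costs $\toh(\sqrt{mn}/\epsilon)$ queries to $\mcl O_G$, since each adjacency query to the implicit subsampled graph is simulated by $O(1)$ queries to $\mcl O_G$ plus hash evaluations.

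\textbf{Weights and explicit output.} Weighted edges only rescale the Laplacian summands $w(e)(\chi_u-\chi_v)(\chi_u-\chi_v)^\top$. The matrix Chernoff argument of~\cite{SS11j} is insensitive to this, provided the spanner is built with weight classes. I would handle the weight classes by bucketing the weights into powers of $2$ in the spanner construction, which costs only logarithmic overhead. Because the final graph has $\toh(n/\epsilon^2)$ edges, it can be written explicitly into classical memory.

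\textbf{Boosting the success probability.} This is the step I expect to be the main obstacle, because a spectral sparsifier cannot be cheaply verified, so plain repetition followed by selection is unavailable. Instead, I would go inside the algorithm. Each failure event is one of two kinds.
\begin{itemize}
\item A quantum search (or the spanner construction) fails. Each such search is repeated $O(\log(\poly(n)/\eta))$ times, with its outcome checked classically, since membership of found edges is directly verifiable.
\item A matrix Chernoff deviation occurs in a subsampling round. Its failure probability is $n\cdot e^{-\Omega(\epsilon^2 t/\log n)}$, where $t$ is the oversampling factor. Increasing $t$ by a factor $\ln(1/\eta)$ makes it at most $\eta/(2\log n)$. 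In the spanner-based scheme this can also be absorbed by using $O(\ln(1/\eta))$ independent spanners (a bundle), which only multiplies the cost by $\ln(1/\eta)$.
\end{itemize}
A union bound over the $O(\log n)$ rounds and the $\poly(n)$ searches then gives overall success probability $1-\eta$, with queries and time $\toh(\sqrt{mn}\ln(1/\eta)/\epsilon)$, as claimed.
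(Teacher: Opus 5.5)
Your route is the same as the paper's. The paper gives no argument for this statement and simply imports it from~\cite{AdW20c}, whose main theorem already covers weighted graphs in the adjacency-list model with an explicit $\toh(n/\epsilon^2)$-edge output. So your first two adaptations need no work. The weight bucketing you propose would also cost a factor $\log(w_{\max}/w_{\min})$, which is not polylogarithmic in $n$ for arbitrary weights.

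The one ingredient beyond the citation is the explicit $\eta$-dependence, and there your argument falls short of the statement. Enlarging the oversampling factor or the spanner bundle by $\ln(1/\eta)$ enlarges the output, because the bundles (or samples) are part of the final graph. You end with $\toh(n\ln(1/\eta)/\epsilon^2)$ edges, which for $\eta=n^{-\omega(1)}$ is not the claimed $\toh(n/\epsilon^2)$; for $\eta=2^{-n}$ the bound is $\toh(n^2/\epsilon^2)$, i.e.\ trivial.

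Your reason for rejecting black-box amplification is also mistaken. Verifying a candidate against $G$ is indeed expensive, but the candidates are explicit classical graphs and can be compared \emph{with each other} at zero query cost. The fix works as follows.
\begin{itemize}
\item Run the high-probability algorithm $r=O(\ln(1/\eta))$ times with accuracy $\epsilon'=\epsilon/5$.
\item Output any $H_i$ within the size bound such that $(1-3\epsilon')L_{H_j}\preceq L_{H_i}\preceq(1+3\epsilon')L_{H_j}$ for more than $r/2$ indices $j$.
\item With probability $1-\eta$, more than half of the runs are good.
\item Two good outputs are mutually close in this sense, since $(1+\epsilon')/(1-\epsilon')\le 1+3\epsilon'$. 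Hence a valid $i$ exists.
\item Any selected $H_i$ is close to some good $H_j$, which gives $(1-4\epsilon')L_G\preceq L_{H_i}\preceq(1+5\epsilon')L_G$.
\end{itemize}
This costs $\toh(\sqrt{mn}\ln(1/\eta)/\epsilon)$ queries and keeps the original size bound. Alternatively, you can re-sparsify your inflated explicit output classically, again without queries.

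Separately, the internal model on which your inside-the-algorithm boosting rests is inconsistent. Keeping a single $\toh(n)$-edge spanner only certifies that every off-spanner edge has leverage score $O(\log n)$, which is vacuous. So no per-round error of $\epsilon/\log n$ follows. The Koutis--Xu scheme instead keeps a bundle of $\toh(1/\epsilon^2)$ edge-disjoint spanners per round. Nothing in your sketch explains why building that bundle costs $\toh(\sqrt{mn}/\epsilon)$ rather than $\toh(\sqrt{mn}/\epsilon^2)$. Since this model does not reliably identify the actual failure events of~\cite{AdW20c}, the black-box amplification above is the safer way to obtain the $\eta$-dependence.
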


Given a sparsifier, one can approximate the effective-resistances using the following theorem.

\begin{theorem}[\sc Effective-Resistance Overestimates from Sparsifiers, adapted from {\cite{AdW20c,SS11j}}]
\label{thm:eff-res-apersdw}
    Consider a weighted graph $G=(V,E,w)$ with $n$ vertices and $m$ edges. For any $\epsilon\in[0,1]$, suppose we are given an $\epsilon$-sparsifier $H$ of $G$. Then we can prepare query access to an efficiently implementable oracle $O^G_{\mcl R}$ such that, for all $u,v\in V$,
    \[
        O^G_{\mcl R}\ket{u}\ket{v}\ket{0}
        =
        \ket{u}\ket{v}\ket{\tdreff^G(u,v)},
    \]
    where $\tdreff(\cdot,\cdot)$ is a $(1+3\epsilon)$-effective-resistance overestimate of $G$. Preparing it requires $\toh(n/\epsilon^4)$ time and after that each query to $O^G_{\mcl R}$ runs in time $\wtd O(1/\epsilon^2)$ with no additional queries to $\mcl O_G$.
\end{theorem}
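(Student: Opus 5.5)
The plan is the Spielman--Srivastava construction: a Johnson--Lindenstrauss sketch of effective resistances, computed entirely on $H$. The key point is that the spectral guarantee of $H$ transfers to effective resistances, and that sketching keeps both preprocessing and per-query cost small. No step touches $\mcl O_G$, since all computation uses only the explicit description of $H$.

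\emph{Step 1: from spectral approximation to resistance approximation.} By \Cref{def:spec-sparsf}, $(1-\epsilon)L_G \preceq L_H \preceq (1+\epsilon)L_G$. Hence $L_H$ and $L_G$ have the same kernel, spanned by indicator vectors of connected components. Inverting on the common range gives
\[
(1+\epsilon)^{-1}L_G^+ \preceq L_H^+ \preceq (1-\epsilon)^{-1}L_G^+ .
\]
By \Cref{def:eff-res}, this yields $(1+\epsilon)^{-1}\reff^G(u,v)\le \reff^H(u,v)\le (1-\epsilon)^{-1}\reff^G(u,v)$ for all $u,v$.

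\emph{Step 2: sketching.} Let $B$ be the signed edge--vertex incidence matrix of $H$ and $W$ its diagonal weight matrix. Then
\[
\reff^H(u,v)=\norm{W^{1/2}BL_H^+(\chi_u-\chi_v)}^2 .
\]
Draw a random $\pm1/\sqrt{k}$ matrix $Q$ with $k=O(\log n/\epsilon^2)$ rows. By Johnson--Lindenstrauss with a union bound over all $n^2$ pairs, the matrix $Z=QW^{1/2}BL_H^+$ satisfies, with high probability,
\[
\norm{Z(\chi_u-\chi_v)}^2\in(1\pm\epsilon')\,\reff^H(u,v)\quad\text{for all }u,v,
\]
where $\epsilon'$ is a constant fraction of $\epsilon$. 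To compute $Z$, form each of the $k$ rows of $QW^{1/2}B$ in $\toh(n/\epsilon^2)$ time, since $H$ has $\toh(n/\epsilon^2)$ edges. Then apply a near-linear-time Laplacian solver for $L_H$ to each row, to accuracy $\poly(\epsilon/n)$, which costs $\toh(n/\epsilon^2)$ per solve. The total is $k\cdot\toh(n/\epsilon^2)=\toh(n/\epsilon^4)$. The solver error perturbs the sketched values only by a further $(1\pm\epsilon')$ factor, by the standard argument of \cite{SS11j}. Store the $k\times n$ matrix $Z$ in QRAM.

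\emph{Step 3: the oracle and calibration.} Define $O^G_{\mcl R}$ on $\ket{u}\ket{v}\ket{0}$ to read columns $u$ and $v$ of $Z$ coherently and output
\[
\tdreff^G(u,v)=c_\epsilon\norm{Z(\chi_u-\chi_v)}^2 ,
\]
with scaling constant $c_\epsilon=(1+\epsilon)/(1-\epsilon')^2$. Each query costs $O(k)=\toh(1/\epsilon^2)$ QRAM reads and arithmetic, and makes no queries to $\mcl O_G$. Combining Steps 1 and 2, the sketched value is at least $(1-\epsilon')^2(1+\epsilon)^{-1}\reff^G$, so $\tdreff^G\ge\reff^G$. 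It is also at most $(1+\epsilon')^2(1-\epsilon)^{-1}\reff^G$, so
\[
\tdreff^G\le \frac{(1+\epsilon)(1+\epsilon')^2}{(1-\epsilon)(1-\epsilon')^2}\,\reff^G .
\]

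I expect the main obstacle to be this bookkeeping: making the final ratio at most $1+3\epsilon$ across the whole range $\epsilon\in[0,1]$. For small $\epsilon$, taking $\epsilon'$ a small constant fraction of $\epsilon$ gives a ratio of $1+2\epsilon+O(\epsilon^2)$, which is fine. For $\epsilon$ near constant, the factor $(1-\epsilon)^{-1}$ from Step 1 blows up. The first remedy I would try is to sharpen the spectral guarantee of $H$ by a constant factor inside the adapted statement, which changes only constants hidden in $\toh(\cdot)$. Failing that, I would restrict the claim to $\epsilon\le 1/3$, matching \Cref{thm:spars-apersdw}.
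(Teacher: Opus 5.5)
The paper gives no proof of this statement; it imports it from Apers--de Wolf and Spielman--Srivastava, and your Steps 1--3 are exactly that construction. They prove the claim with the stated costs whenever $\epsilon\le\epsilon_0$ for a fixed constant $\epsilon_0<1/3$. That covers the only use in the paper ($\epsilon=1/10$ in \Cref{prop:er-ovest}, target factor $2$).

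The problem you leave open is not an artifact of your method: the statement is false for $\epsilon>1/3$. Given any $H$, let $G_1$ and $G_2$ be $H$ with every weight multiplied by $1/(1-\epsilon)$ and by $1/(1+\epsilon)$, respectively. Then $H$ is an $\epsilon$-spectral sparsifier of both, while $\reff^{G_1}=(1-\epsilon)\reff^H$ and $\reff^{G_2}=(1+\epsilon)\reff^H$. So any $\tdreff$ computed from $H$ alone must satisfy $(1+\epsilon)\reff^H\le\tdreff\le(1+3\epsilon)(1-\epsilon)\reff^H$, which forces $\epsilon\le 1/3$. At $\epsilon=1$ even the empty graph is a valid sparsifier of every $G$.

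This affects both of your remedies.
\begin{itemize}
\item Your first remedy is not available. It replaces the given sparsifier by a better one, which cannot be obtained without querying $\mcl O_G$.
\item Your second remedy fails at the endpoint. At $\epsilon=1/3$, Step 1 alone exhausts the budget, since $(1+\epsilon)/(1-\epsilon)=2=1+3\epsilon$, so any sketching error $\epsilon'>0$ pushes your ratio above $2$.
\end{itemize}
In general the room left after Step 1 is a factor $1+\epsilon(1-3\epsilon)/(1+\epsilon)$, so $\epsilon'$ must be $O(\epsilon(1-3\epsilon))$. This is compatible with $k=O(\log n/\epsilon^2)$ only if $\epsilon$ stays below a constant $\epsilon_0<1/3$. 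For instance, $\epsilon\le 1/4$ with $\epsilon'=\epsilon/40$ gives ratio at most $1+2.84\epsilon$.

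There are three smaller corrections.
\begin{itemize}
\item With $\epsilon'=c\epsilon$ your ratio is $1+(2+4c)\epsilon+O(\epsilon^2)$, not $1+2\epsilon+O(\epsilon^2)$. So you need $c<1/4$ even for small $\epsilon$.
\item The Johnson--Lindenstrauss step succeeds only with probability $1-n^{-\Omega(1)}$, or $1-\eta$ with $k=O(\log(n/\eta)/\epsilon'^2)$. You therefore prove a Monte Carlo version of a statement phrased deterministically.
\item Spielman--Srivastava's error analysis takes solver accuracy depending on $w_{\min}/w_{\max}$, and their running time carries a $\log(w_{\max}/w_{\min})$ factor. For the graphs $F+c\,\wtd H$ with $c=e^{-\beta}$, to which the paper applies the theorem, this logarithm can be as large as $\beta^\star=\Theta(n\log n)$.
\end{itemize}
Your weight-independent accuracy claim is still true, but it needs a different argument. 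Write $z_i^\top$ for the rows of $Z$. By Cauchy--Schwarz in the $L_H$-norm, $|x^\top(\chi_u-\chi_v)|^2\le\reff^H(u,v)\,x^\top L_Hx$. Since $W^{1/2}BL_H^+B^\top W^{1/2}$ is a projection, $\sum_i z_i^\top L_H z_i\le\norm{Q}_F^2=|E_H|$. Hence solving each system to relative accuracy $\epsilon'/\sqrt{|E_H|}$ in the $L_H$-norm yields a computed sketch $\widetilde Z$ with $\norm{(\widetilde Z-Z)(\chi_u-\chi_v)}\le\epsilon'\sqrt{\reff^H(u,v)}$ for all $u,v$.
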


For our purposes, it suffices to apply the above theorem with constant $\epsilon$. We stress that, while the data structure $\mcl{O}^G_{\mcl{R}}$ allows computing all effective resistance overestimates~$\tdreff(u,v)$ without any additional query to $G$, computing the marginal overestimates $\wtd p(e) = w(e) \tdreff(u,v)$ still requires one query to $G$ per edge. This query cost is expected, as otherwise one could sometimes identify all edges of $G$ (those with nonzero marginals) using only~$\toh(\sqrt{mn})$ queries.

  \subsection{Markov Chains and Quantum Walks}
  \label{sec:quantum-walk}
  For probability distributions $q_1,q_2$ on a finite set $\mcl X$,  
the total variation distance between them is $d_{\rm TV}(p_1,p_2):=\frac12\sum_{x\in\mcl X}\abs{p_1(x)-p_2(x)}$. For quantum states, $\norm{\cdot}_2$ denotes the Euclidean norm.

\paragraph{Reversible Markov chains and spectral gaps.}
Let $P$ be the transition matrix of a reversible, irreducible, and aperiodic Markov chain on $\mcl X$, with stationary distribution $\pi$. We write $D_\pi:=\operatorname{diag}(\pi)$ and define the discriminant matrix of $P$ by $D(P):=D_\pi^{1/2}PD_\pi^{-1/2}$. The matrix $D(P)$ is symmetric and has the same eigenvalues as $P$. We write these eigenvalues as
$1=\lambda_0,\lambda_1,\ldots,\lambda_{|\mcl X|-1}$, ordered so that $\abs{\lambda_1}\ge\cdots\ge\abs{\lambda_{|\mcl X|-1}}$. The absolute spectral gap is
$\gamma_{\rm abs}(P):=1-\abs{\lambda_1}$. If $P$ has nonnegative spectrum, equivalently if $P$ is positive
semidefinite as an operator on $L^2(\pi)$, then this equals the usual spectral gap $\gamma(P):=1-\max_{j\ge1}\lambda_j$.

\paragraph{Labelled Szegedy walks.}
We use a labelled-transition formulation of Szegedy walks, in the convention used for quantum simulated annealing~\cite{WA08j}. Szegedy's original construction gives the spectral mapping theorem for reversible Markov chains~\cite{Sze04c}; see also~\cite{MNRS11j}. The standard equivalence between coined and Szegedy walks, including the use of a coin register and flip-flop shift, which maps a transition to the same transition viewed in reverse, is discussed in~\cite{Won17a,Won17j,PS17j}. 

Define $\mcl{K}:=\Span\{\ket{x}\ket{a}:x\in\mcl{X},\ a\in C_x \cup {0}\}$ as the Hilbert space over which the quantum walk operates. Here, $C_x$ is a finite set of transition labels for each $x\in\mcl X$ and $0$ is a distinguished blank transition label. In our application, $x$ is a spanning tree and the transition labels will be pairs of edges.

Each label $a\in C_x$ has endpoint $h_x(a)\in\mcl X$ and the probability of making a transition from $x$ along the label $a$ is $\Pr[a\mid x]$ . We require $P(x,y) = \sum_{{a\in C_x: h_x(a)=y}}\Pr[a\mid x]$. We assume that non-self-loop transitions are uniquely labelled: for every $x\neq y$ with $P(x,y)>0$, there is a unique label $a_{x,y}\in C_x$ with endpoint $y$. Self-loop labels have endpoint $x$ and total probability $P(x,x)$.

We also assume that the labelled transitions admit an involution $g$ such that, if $g(x,a)=(y,a')$, then $y=h_x(a)$ and $h_y(a')=x$. Self-loop labels are fixed by $g$.

Let $\mcl U$ be a transition-preparation unitary satisfying
\[
    \mcl{U}\ket{x}\ket{0}
    =
    \ket{x}\sum_{a\in C_x}\sqrt{\Pr[a\mid x]}\ket{a}.
\]
Let $\mcl S$ be the flip-flop shift, defined by $\mcl S\ket{x}\ket{a}:=\ket{g(x,a)} = \ket{h_x(a)}\ket{a'}$. Let $\mcl{A}:=\Span\{\ket{x}\ket{0}:x\in\mcl X\}$ and $\mcl{B}:=\mcl U^\dagger \mcl S \mcl U \mcl{A}$ be two subspaces with corresponding orthogonal projectors $\Pi_\mcl{A}$ and $\Pi_\mcl{B}$. Let $R_{\mcl{A}}:=2\Pi_{\mcl{A}}-I$ and
$R_{\mcl{B}}:=2\Pi_{\mcl{B}}-I$; the quantum walk operator is
\[
    W(P):=R_{\mcl{B}} R_{\mcl{A}} = \left(\mcl U^\dagger \mcl S \mcl U R_{\mcl{A}} \mcl U^\dagger \mcl S \mcl U\right) R_{\mcl{A}}.
\]

We call $\mcl{A}+\mcl{B}$ the busy subspace and denote by $\gamma_{\rm ph}(W)$ the smallest nonzero angular distance from $1$ among the eigenvalues of $W$ on this subspace. For a distribution $\pi$ on $\mcl X$, let $\ket{\pi} := \sum_{x\in\mcl X}\sqrt{\pi(x)}\,\ket{x}$ denote its q-sample.

\begin{theorem}[\sc Spectral mapping for labelled Szegedy walks]
\label{thm:labelled-szegedy}
Let $P$ be the transition matrix of a reversible, irreducible, and aperiodic Markov chain on $\mcl X$, with $\pi$ as the stationary~distribution. Then, on the busy subspace, the unique $1$-eigenvector of $W(P)$ is $\ket{\pi}\ket{0}$. Moreover,
\[
    \gamma_{\rm ph}(W(P))
    =
    \Omega(\sqrt{\gamma_{\rm abs}(P)}).
\]
\end{theorem}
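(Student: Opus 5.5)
The plan is to reduce to Szegedy's standard bipartite-walk analysis. We identify $W(P)=R_{\mcl B}R_{\mcl A}$ as a product of two reflections and study it through Jordan's lemma, using the discriminant (overlap) operator between the two subspaces.

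\textbf{Step 1: the two isometries.} Define $A:=\mcl U$ restricted to $\mcl A$, so that $\ket{x}\ket{0}\mapsto\ket{\phi_x}:=\ket{x}\sum_{a}\sqrt{\Pr[a\mid x]}\ket a$. Define $B:=\mcl S\mcl U$ restricted to $\mcl A$. Conjugating by $\mcl U$, the walk $W$ is unitarily equivalent on the busy subspace to $(2\Pi_{\mathrm{im}A}-I)(2\Pi_{\mathrm{im}B}-I)$ (with the reflections possibly in the other order, which does not affect the spectrum). Here $\Pi_{\mathrm{im}A}=\mcl U\Pi_{\mcl A}\mcl U^\dagger$ and $\Pi_{\mathrm{im}B}=\mcl S\,\mcl U\Pi_{\mcl A}\mcl U^\dagger\mcl S$.

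\textbf{Step 2: compute the discriminant.} We compute $\bra{\phi_x}\mcl S\ket{\phi_y}$. The flip-flop involution $g$ pairs the label $a$ at $x$ with the label $a'$ at $y=h_x(a)$. Unique labelling of non-self-loop transitions makes this pairing a bijection between the labels from $x$ to $y$ and those from $y$ to $x$. Hence for $x\ne y$ the overlap is $\sqrt{\Pr[a_{x,y}\mid x]\Pr[a_{y,x}\mid y]}=\sqrt{P(x,y)P(y,x)}$. For $x=y$, self-loop labels are fixed by $g$, so the overlap is $\sum_{\text{loops}}\Pr[a\mid x]=P(x,x)$. We then use reversibility, $\pi(x)P(x,y)=\pi(y)P(y,x)$, to get $\sqrt{P(x,y)P(y,x)}=\sqrt{\pi(x)/\pi(y)}\,P(x,y)$. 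This gives $A^\dagger\mcl S A = D(P)$, whose singular values are $|\lambda_j|$.

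\textbf{Step 3: Jordan's lemma and the spectral mapping.} Jordan's lemma decomposes the space into one- and two-dimensional subspaces that are invariant under both projectors.
\begin{itemize}
\item In each two-dimensional block, the principal angle $\theta_j$ satisfies $\cos\theta_j=|\lambda_j|$, and $W$ acts as a rotation by $\pm2\theta_j$.
\item Eigenvalue $1$ on the busy subspace arises only from vectors lying in both images. These correspond to singular value $1$ of $D(P)$. By irreducibility and aperiodicity, $\lambda_0=1$ is simple and $|\lambda_j|<1$ for $j\ge1$. The shared vector is $\sum_x\sqrt{\pi(x)}\ket{\phi_x}$, and undoing the conjugation by $\mcl U$ gives $\ket\pi\ket0$.
\item Every other eigenphase is $\pm 2\theta_j$ with $\cos\theta_j\le 1-\gamma_{\rm abs}$. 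Since $\cos\theta\ge 1-\theta^2/2$, this gives $\theta_j\ge\sqrt{2\gamma_{\rm abs}}$, hence $\gamma_{\rm ph}=\Omega(\sqrt{\gamma_{\rm abs}})$.
\end{itemize}

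\textbf{Main obstacle.} The delicate step is Step 2: checking that the labelled formalism reproduces exactly $D(P)$. Two things must be verified. First, the involution must not create cross terms between distinct labels. Second, self-loop labels fixed by $g$ must contribute $P(x,x)$ with no square-root mismatch. We also need that, within the busy subspace, $-1$ eigenvalues (from $\lambda_j=-1$) are excluded by aperiodicity, so that they do not spoil uniqueness of the $1$-eigenvector.
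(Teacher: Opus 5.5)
Your proposal is correct and follows essentially the same route as the paper. Both arguments rest on the same labelled computation showing that the compressed overlap operator $\Pi_{\mcl A}\mcl U^\dagger\mcl S\mcl U\Pi_{\mcl A}$ (your $A^\dagger\mcl S A$) equals $D(P)$, using unique labels plus reversibility off the diagonal and fixed self-loops on it, followed by the standard two-reflection analysis sending $\lambda=\cos\theta$ to eigenphases $\pm 2\theta$, with aperiodicity excluding $\lambda=-1$. The only difference is presentational: you conjugate by $\mcl U$ and invoke Jordan's lemma, whereas the paper writes down the invariant planes $\Span\{v_\lambda,R_{\mcl B}v_\lambda\}$ and their $2\times 2$ matrix explicitly (your version has the small advantage of automatically covering the $\mcl B$-directions attached to $\lambda=0$, which carry eigenvalue $-1$).
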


The spectral mapping for coined and Szegedy walks is standard; see, e.g., \cite{Won17j,Won17a}. We restate and prove the result in our labelled-transition setting because our application represents neighbouring spanning trees by transition labels (pairs of edges), rather than by storing the entire neighbouring tree in the second register. This formulation avoids the costly coherent construction of a neighbouring spanning tree. The proof is deferred to \Cref{app:sec-coin-walk}.

Throughout the paper, whenever we refer to the quantum walk operator associated with a reversible Markov chain with transition matrix $P$, we mean the labelled Szegedy walk operator $W(P)$ constructed above, for the relevant labelled-transition realization of $P$. We abuse notation and write $\ket{\pi}$ for $\ket{\pi}\ket{0}$.

  \subsection{Up-Down walk over Spanning trees}
  We recall the up-down walk on spanning trees, which is reversible with respect
to the weighted spanning tree distribution and will underlie our sampling
procedure.

Given a current tree $T_i$, one step adds a uniformly random non-tree edge and
then removes an edge from the resulting cycle according to the target
distribution.

\begin{markovchain}[H]
\caption{Up-down walk from $T_0$}
\label{mc:ud}
\begin{algorithmic}[1]
\For{$i=0,1,2,\ldots$}
    \State Choose $f\notin T_i$ uniformly at random.
    \State Choose $e\in\Cycle(T_i+f)$ with probability $\Pr(e) \propto \pi(T_i+f-e)$.
    \State Set $T_{i+1}=T_i+f-e$.
\EndFor
\end{algorithmic}
\end{markovchain}

\begin{fact}[\sc Up-Down Walk mixes to $\pi$, \cite{ALGV19c}]
The up-down walk is reversible with respect to the weighted spanning tree distribution $\pi$, and hence $\pi$ is its stationary distribution. 
\end{fact}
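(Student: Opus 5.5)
Since $\pi$ is positive, a chain is reversible with respect to $\pi$ exactly when the detailed balance equations $\pi(T)P(T,T')=\pi(T')P(T',T)$ hold for all spanning trees $T\neq T'$. Reversibility then makes $\pi$ stationary, by summing detailed balance over $T$. So the plan is to compute $P(T,T')$ explicitly from \Cref{mc:ud} and compare the two directions.

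\textbf{Transition probabilities.} Let $T\neq T'$ with $P(T,T')>0$. Then $T'=T+f-e$ for a unique pair $f\in T'\setminus T$ and $e\in T\setminus T'$, since $|T\triangle T'|=2$. Set $U:=T+f=T'+e$. The key observation is that $U$ contains exactly one cycle, and $\Cycle(T+f)=\Cycle(T'+e)=:C$. Every $g\in C$ gives a spanning tree $U-g$, and these are precisely the spanning trees contained in $U$. Hence
\[
    P(T,T')=\frac{1}{m-n+1}\cdot\frac{\pi(U-e)}{\sum_{g\in C}\pi(U-g)}.
\]
Here $m-n+1$ is the number of non-tree edges, which is the same for every spanning tree. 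Symmetrically,
\[
    P(T',T)=\frac{1}{m-n+1}\cdot\frac{\pi(U-f)}{\sum_{g\in C}\pi(U-g)}.
\]

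\textbf{Detailed balance.} The two normalizing denominators are identical, because both are sums over the same cycle $C$ of the same unicyclic graph $U$. Therefore
\[
    \pi(T)P(T,T')=\frac{\pi(T)\,\pi(T')}{(m-n+1)\sum_{g\in C}\pi(U-g)},
\]
which is symmetric in $T$ and $T'$. Detailed balance follows immediately.

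\textbf{Main obstacle.} The only genuine step is confirming that the cycle, and hence the normalizer, is the same in both directions. This holds because $T+f$ and $T'+e$ are the same graph, and a unicyclic graph has a unique cycle. Everything else is bookkeeping. Self-loop transitions (choosing $e=f$) do not affect detailed balance. Irreducibility follows from the basis exchange property, and aperiodicity from these self-loops; neither is needed for stationarity, only for the mixing statement.
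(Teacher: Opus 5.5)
Your proof is correct. The paper does not prove this fact itself (it cites \cite{ALGV19c}), but your argument is the same detailed-balance computation the paper carries out for the marginal-aware walk in \Cref{prop:ma-detailed-balance}: the normalizer over $\Cycle(T+f)=\Cycle(T'+e)$ is shared because $T+f=T'+e$, and with the uniform up-step the $T$-independent factor $1/(m-n+1)$ makes the symmetry immediate.
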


We say that a spanning tree distribution is in isotropic position when all edge marginals are approximately equal. This discrete notion of isotropic position was introduced by Anari and Dereziński~\cite{AD20c}, in analogy with isotropic position for continuous log-concave distributions.

\begin{definition}[\sc Isotropic Distributions]
    \label{def:iso-dist}
    Suppose $G(V,E,w)$ is a connected, weighted graph with the spanning tree distribution $\pi$. We say that the distribution $\pi$ is isotropic if all the marginals of edges are equal to each other i.e.,
    \[
        \Pr_{T\sim \pi}[e \in T] = \frac{(n-1)}{m}\text{ for all } e \in E. 
    \]
    Moreover, we call $\pi$ to be nearly-isotropic distribution if for all $e \in E$, its marginal satisfies the following,
    \[
        \Pr_{T\sim \pi}[e \in T] \leq \frac{c(n-1)}{m} \text{ for some constant } c >1.
    \]
\end{definition}

The up-down walk has strong mixing guarantees: \cite{ALGV19c} proved a
general spectral-gap lower bound, and \cite{ALV22c} gave an improved
log-Sobolev bound when the spanning tree distribution is nearly isotropic. We use the standard comparison between the log-Sobolev constant $\rho$ and the spectral gap $\gamma$ for reversible Markov chains, namely $\rho\le 2\gamma$ \cite{CSV23p}, so log-Sobolev lower bounds imply spectral-gap lower bounds up to constants.

\begin{theorem}[\sc Mixing guarantees for the up-down walk, adapted from \cite{ALGV19c,ALV22c}]
\label{thm:up-down-gap}
Let $G=(V,E,w)$ be a connected weighted graph with $n$ vertices, $m$ edges,
and non-negative edge weights. Let $\pi$ be the spanning tree distribution of
$G$, and let $\gamma_{\mathrm{ud}}$ denote the spectral gap of the up-down walk on the set $\Delta$ of spanning trees of $G$. Then the up-down walk satisfies $\gamma_{\mathrm{ud}} \ge \Omega\!\left({1}/{m}\right)$.
Moreover, if $m\ge 100n$ and $\pi$ is nearly-isotropic, then the up-down walk satisfies the improved bound $\gamma_{\mathrm{ud}} \geq \Omega\!\left({1}/{n\log^2 n}\right)$.
\end{theorem}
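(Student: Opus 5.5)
The plan is to reduce both claims to known mixing results for down-up walks on bases of matroids by passing to the dual (cographic) matroid, and then to convert log-Sobolev bounds into spectral gaps using $\rho\le 2\gamma$ \cite{CSV23p}.

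\textbf{Step 1: identify the up-down walk with a down-up walk on complements.} Write $r^\ast:=m-n+1$ and map each spanning tree $T$ to its complement $\bar T:=E\setminus T$. The complements are exactly the bases of the dual matroid, all of size $r^\ast$. Under this bijection the weight of $\bar T$ is $\prod_{e\notin T}w(e)^{-1}$, up to the global constant $\prod_{e\in E}w(e)$. So the pushforward $\bar\pi$ of $\pi$ is the homogeneous generating polynomial of a weighted cobasis measure. I will invoke the standard facts that
\begin{itemize}
\item positive rescaling preserves strong Rayleighness, and
\item taking complements (the ``dual'' operation) preserves strong Rayleighness.
\end{itemize}
Hence $\bar\pi$ is strongly Rayleigh, and in particular log-concave on the cobasis complex, as \cite{BBL09j} shows.

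One step of \Cref{mc:ud} reads as follows on complements. It deletes a uniformly random element $f$ of $\bar T$ (the down step on $\bar T$). It then adds back an element $e$ of the fundamental cycle, with probability $\propto\pi(T+f-e)=\bar\pi(\bar T-f+e)$. The elements $e$ for which $\bar T-f+e$ is a cobasis are exactly the elements of $\Cycle(T+f)$. So this second move is the up step of the down-up walk for $\bar\pi$ restricted to $\bar T - f$. This identification must be checked carefully, including the self-loop $e=f$. I expect the check to be routine, and it shows that $\gamma_{\rm ud}$ equals the spectral gap of the down-up walk $P^{\vee}_{\bar\pi}$.

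\textbf{Step 2: the general bound.} By the local-to-global theorem of \cite{ALGV19c}, the down-up walk on a log-concave (completely log-concave) distribution over $r^\ast$-subsets has spectral gap at least $1/r^\ast$. Therefore $\gamma_{\rm ud}\ge 1/(m-n+1)\ge 1/m$.

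\textbf{Step 3: the nearly isotropic case.} Here the plan is to use \cite{ALV22c}. They show that for a strongly Rayleigh measure whose marginals are all at most $c\cdot k/|\text{ground set}|$, the walk that removes $\Theta(k)$-many or one element and re-adds has modified log-Sobolev constant $\Omega(1/(k\log^2 k))$, where $k$ is the smaller of the two complementary sizes. The relevant instantiation is:
\begin{itemize}
\item since $m\ge 100n$, the dual marginals are $1-\ell(e)\ge 1-c(n-1)/m\ge 1/2$;
\item equivalently, the trees side has all marginals $\le c(n-1)/m$, which is precisely their ``isotropic'' hypothesis for the small side of size $n-1$.
\end{itemize}
Their theorem gives a log-Sobolev constant $\Omega(1/(n\log^2 n))$ for exactly this up-down walk on trees. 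Combining this with $\rho\le 2\gamma$ then yields $\gamma_{\rm ud}=\Omega(1/(n\log^2n))$.

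\textbf{Main obstacle.} I expect the hardest step to be Step 3, specifically matching the precise walk and normalization used in \cite{ALV22c} (which walk, on which side, with which step size) to \Cref{mc:ud}. If their walk is not literally the same, I would fall back on a comparison argument. Their chain and ours have the same stationary measure, and their transition probabilities between neighbouring trees are within constant factors of ours when marginals are near-uniform. A Dirichlet-form comparison then transfers the bound up to constants.
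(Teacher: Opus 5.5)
Your proposal is correct and follows the same route as the paper, which gives no independent argument: it invokes \cite{ALGV19c} for the $\Omega(1/m)$ bound, and the log-Sobolev bound of \cite{ALV22c} together with $\rho\le 2\gamma$ \cite{CSV23p} for the nearly isotropic case, so your Step~3 is exactly the paper's citation. Your explicit identification of the up-down walk on trees with the down-up walk on bases of the cographic matroid (giving the slightly sharper $\gamma_{\mathrm{ud}}\ge 1/(m-n+1)$) is a detail the paper leaves implicit; note only that the inverted weights $w(e)^{-1}$ require $w>0$, so zero-weight edges should be discarded first, which only makes the walk lazier and still gives $\gamma_{\mathrm{ud}}\ge 1/(m-n+1)$.
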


Anari and Dereziński~\cite{AD20c} introduced a subdivision procedure that
uses marginal overestimates to transform a graph into a multigraph whose
spanning tree distribution is nearly isotropic. The procedure replaces each
edge by a number of parallel copies proportional to its marginal overestimate
and splits the original weight evenly among those copies.

\begin{definition}[\sc Isotropic multiplicity]
\label{def:isotropic-multiplicity}
    Let $G=(V,E,w)$ be a weighted graph with spanning-tree distribution $\pi_G$, and let $\wtd p_G$ be a $\lambda$-marginal overestimate for $\pi_G$. For each $e\in E$, define the isotropic multiplicity of $e$ induced by $\wtd p_G$ as
    \[
        t_G(e):=
        \left\lceil
        \frac{m}{\lambda(n-1)}\wtd p_G(e)
        \right\rceil.
    \]
\end{definition}

\begin{proposition}[\sc Isotropic Transformation;
{\cite{ALV22c}}]
\label{prop:iso-transform}
    Let $G=(V,E,w)$ be a weighted graph with $n$ vertices and $m$ edges, and let $\pi$ be its spanning tree distribution. Let $\wtd{p}_G:E\to\mathbb{R}_{\ge 0}$ be a $\lambda$-marginal overestimate for $\pi$ (\Cref{def:marg-res-ovest}). The isotropic transform of $G$ with respect to $\wtd{p}_G$ is the multigraph $G'=(V,E',w')$ obtained by replacing each edge $e\in E$ by $t_G(e)$ parallel copies $e^{(1)},\ldots,e^{(t_G(e))}$, each with weight $w(e)/t_G(e)$. Let $\pi'$ denote the
    spanning-tree distribution of $G'$.

    If $T'=\{e_1^{(j_1)},\ldots,e_{n-1}^{(j_{n-1})}\}$ is a spanning tree in $G'$ whose edges are copies of a spanning tree $T=\{e_1,\ldots,e_{n-1}\}$ in $G$, then $\pi'(T') = {\pi(T)}/{\prod_{e\in T}t_G(e)}$. Equivalently, sampling from $\pi'$ can be done by first sampling $T\sim\pi$ and then choosing, independently for each $e\in T$, one of its $t_G(e)$ copies uniformly at random.

    Moreover, the transformed distribution is nearly isotropic (\Cref{def:iso-dist}) and the number of edges in $G'$ satisfies $|E'|\le 2m$.
\end{proposition}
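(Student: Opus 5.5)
The proof is a direct computation on the multigraph $G'$. I would handle three things in order: the bijection between spanning trees of $G'$ and pairs (tree of $G$, choice of copies), the edge count $|E'|$, and the marginal bound for near-isotropy.

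\emph{Structure of spanning trees of $G'$ and the formula for $\pi'$.}
\begin{enumerate}
\item First observe that a spanning tree $T'$ of $G'$ cannot contain two parallel copies $e^{(j)},e^{(j')}$ of the same edge, since together they form a cycle of length two.
\item Hence the projection map $e^{(j)}\mapsto e$ sends $T'$ to a set $T$ of $n-1$ distinct edges of $G$. This set is acyclic and spanning, because it has the same connectivity as $T'$. So $T$ is a spanning tree of $G$.
\item Conversely, every spanning tree $T$ of $G$ together with a choice of indices $(j_e)_{e\in T}$ gives a spanning tree of $G'$. So the spanning trees of $G'$ are in bijection with pairs $(T,(j_e)_{e\in T})$, with exactly $\prod_{e\in T}t_G(e)$ lifts per tree $T$.
\item Each lift has weight $w'(T')=\prod_{e\in T}w(e)/t_G(e)=w(T)/\prod_{e\in T}t_G(e)$.
\item Summing over all lifts of $T$ gives total weight $w(T)$. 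Therefore the partition function of $G'$ equals that of $G$.
\item It follows that $\pi'(T')=\pi(T)/\prod_{e\in T}t_G(e)$.
\item The sampling equivalence is then immediate: the two-stage process (draw $T\sim\pi$, then pick each copy uniformly and independently) assigns exactly this probability to each $T'$.
\end{enumerate}

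\emph{Edge count.} I need $t_G(e)\ge 1$ for every edge, which holds because $\wtd p_G(e)\ge\Pr_\pi[e\in T]>0$. This marginal is positive since $G$ is connected and $w(e)>0$, so $w(e)\reff(e)>0$; zero-weight edges can be discarded beforehand. Using $\lceil x\rceil\le x+1$ and $\sum_e\wtd p_G(e)\le\lambda(n-1)$,
\[
    |E'|=\sum_{e\in E}t_G(e)\le\sum_{e\in E}\Bigl(\frac{m\,\wtd p_G(e)}{\lambda(n-1)}+1\Bigr)\le m+m=2m .
\]
The lower bound $t_G(e)\ge1$ also gives $|E'|\ge m$.

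\emph{Near-isotropy.} By the sampling description, the marginal of a copy $e^{(j)}$ under $\pi'$ is $\Pr_{\pi}[e\in T]/t_G(e)$. Since $\Pr_{\pi}[e\in T]\le\wtd p_G(e)$ and $t_G(e)\ge m\,\wtd p_G(e)/(\lambda(n-1))$, this is at most $\lambda(n-1)/m$. Because $|E'|\le 2m$, this is in turn at most $2\lambda(n-1)/|E'|$. So $\pi'$ is nearly isotropic with constant $c=2\lambda$. (Equivalently, with $c=\lambda$ relative to $m$, using $|E'|\ge m$.)

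The only delicate points are the no-parallel-copies observation in the bijection and the positivity of the marginals that guarantees $t_G(e)\ge1$. I expect no real obstacle beyond these.
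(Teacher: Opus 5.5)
Your proof is correct. The paper gives no proof of this proposition; it imports it from \cite{ALV22c}, building on the subdivision procedure of \cite{AD20c}. Your argument is the standard direct verification, and its two key computations are the ones the paper relies on elsewhere: the fiber sum $\sum_{T'\in[T]}\pi'(T')=\prod_{e\in T}t_G(e)\,\nu(T)=\pi(T)$ in the proof of \Cref{lem:margaw-walk-gap}, and the bound $m\le\sum_e t_G(e)\le 2m$ via $\lceil x\rceil\le x+1$ in \Cref{lem:marginal-state-preparation}.

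Two small remarks, neither a gap:
\begin{enumerate}
\item You do not need $t_G(e)\ge 1$ for $|E'|\le 2m$, since $\lceil x\rceil\le x+1$ also holds at $x=0$. An edge with $\wtd p_G(e)=0$ simply receives no copies. Its marginal is then $0$, so every tree containing it has weight zero, and the partition-function identity survives. There is therefore no need to discard zero-weight edges; doing so would in any case alter the $m$ appearing in the definition of $t_G$.
\item The parenthetical appeal to $|E'|\ge m$ is superfluous. Your direct bound is already $\lambda(n-1)/m$. The conversion to the $|E'|$-normalization required by \Cref{def:iso-dist} is the one that uses $|E'|\le 2m$, exactly as in your main sentence.
\end{enumerate}
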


The transformed multigraph viewpoint is useful for proving near-isotropy, but
one need not explicitly construct $G'$ in order to implement the corresponding
dynamics. Instead, we will present a way in \Cref{Sec:coherentIso} to work directly on the original graph and bias the choice of the added edge according to the number of copies it has in $G'$. 

  \subsection{Simulated Annealing}
  \label{sec:sim-anl}
  We recall the standard simulated-annealing framework involving classical Hamiltonians. Let $\Delta$ be a finite state space, let $\mu:\Delta\to\mathbb{R}_{\geq 0}$ be a base weight function, and let $H:\Delta\to \{0,1,\dots,n\}$ be a classical Hamiltonian of degree $n$. We define the partition function at inverse temperature $\beta \geq 0$ as
\[
    Z(\beta)=\sum_{x\in\Delta}e^{-\beta H(x)}\cdot \mu(x).
\] 
Such a partition function corresponds to the normalization factor of the Gibbs distribution at inverse temperature $\beta$ given by $\pi_\beta(x)=\frac{1}{Z(\beta)}e^{-\beta H(x)}\cdot \mu(x)$. For each inverse temperature $\beta$, we write $\ket{\pi_{\beta}}:=\sum_{x\in\Delta}\sqrt{\pi_{\beta}(x)}\ket{x}$ for the q-sample associated with the Gibbs distribution $\pi_{\beta}$.

A cooling schedule is a sequence of inverse temperatures $\beta_0< \cdots<\beta_l$. We study mainly the following type of cooling schedules.

\begin{definition}[\sc $B$-slowly varying schedule]
    For any partition function $Z(\beta)$ which is log-convex and $B > 1$, a cooling schedule $\beta_0< \dots < \beta_l$ is called $B$-slowly varying if 
    \[
        \abs{\braket{\pi_{\beta_i}}{\pi_{\beta_{i+1}}}}^2 \geq 1/B.
    \]
    This is a \emph{warm-start} property between $\pi_{\beta_i}$ and $\pi_{\beta_{i+1}}$ and ensures, we can \emph{anneal} i.e., map, $\ket{\pi_{\beta_i}}$ to $\ket{\pi_{\beta_{i+1}}}$ with less effort.
\end{definition}

When the partition function is log-convex, the existence of short slowly varying schedules was first shown by Štefankovič, Vempala, and Vigoda~\cite{SVV09j}. We use the following form stated by Harrow and Wei~\cite{HW20c}. However, it should be noted that the partition functions considered in the above works do not include the additional weight pre-factor $\mu$. Nevertheless, the same results extend to this setting with only minor modifications, since $\mu(x)$ is independent of $\beta$ and depends only on the state $x$. 

\begin{theorem}[\sc Quantum algorithm for computing a slowly-varying schedule, adapted from~\cite{HW20c}]
\label{thm:q-adapt-anneal}
Let $\{\pi_\beta\}_{\beta\in[0,\beta^{\star}]}$ be a family of Gibbs distributions with log-convex partition function $Z(\beta)$. There exists an $e^2$-slowly varying schedule $0=\beta_0<\cdots<\beta_l=\beta^{\star}$ of length $l \leq \toh\left(\sqrt{\ln{\frac {Z(0)}{Z(\beta^{\star})}}}\right)$.

Moreover, if a state $\ket{\wtd{\pi}_0}$ satisfying $\norm{\ket{\wtd{\pi}_0}-\ket{\pi}}_2\leq\epsilon$ can be prepared and, for every $\beta\in[0,\beta^\star]$, there is a reversible Markov chain $\mcl{M}_\beta$ with stationary distribution $\pi_\beta$ and absolute spectral gap at least $\delta$, together with an implementable quantum walk operator $W_\beta$, then for any $\eta>0$, a quantum algorithm constructs such a schedule with probability at least $1-\eta$ using
\[
\toh\left(\frac{l}{\sqrt{\delta}}\ln\left(\frac{1}{\eta}\right)\ln\left(\frac{1}{\epsilon}\right)\right)
\]
applications of $W_\beta$ and $W_\beta^{-1}$ at adaptively chosen values of $\beta$.
\end{theorem}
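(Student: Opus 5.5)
The plan is to follow the argument of Harrow and Wei~\cite{HW20c}, which builds on Štefankovič, Vempala and Vigoda~\cite{SVV09j}, and to check that the base weight $\mu$ changes nothing. Write $f(\beta):=\ln Z(\beta)$. A direct computation gives the overlap of two q-samples:
\[
\braket{\pi_\beta}{\pi_{\beta'}}=\frac{Z\bigl(\tfrac{\beta+\beta'}{2}\bigr)}{\sqrt{Z(\beta)Z(\beta')}} .
\]
Hence the $e^2$-slowly-varying condition is equivalent to
\[
f(\beta)+f(\beta')-2f\bigl(\tfrac{\beta+\beta'}{2}\bigr)\le 1 .
\]
The factor $\mu(x)$ only rescales each term of $Z$ and does not depend on $\beta$. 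So $f$ is still convex, since $f''(\beta)=\mathrm{Var}_{\pi_\beta}(H)\ge 0$. It is also nonincreasing, with $-f'(\beta)=\E_{\pi_\beta}[H]\in[0,n]$. These are the only properties of $Z$ used in~\cite{SVV09j}, and this is the "minor modification" the paper alludes to.

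\textbf{Existence.} I would build the schedule greedily. From $\beta_i$, let $\beta_{i+1}$ be the largest $\beta'\le\beta^\star$ that satisfies the midpoint condition; it exists because the left-hand side is continuous and equals $0$ at $\beta'=\beta_i$. To bound the number of steps, set $Q:=f(0)-f(\beta^\star)=\ln\frac{Z(0)}{Z(\beta^\star)}$ and consider the slopes $s_i:=-f'(\beta_i)\in[0,n]$. At each non-final step the condition is tight, which by convexity forces a trade-off. Either the decrease $f(\beta_i)-f(\beta_{i+1})$ is at least some $\Omega(\sqrt{\cdot})$ quantity, or the slope $s$ drops by a constant factor across the interval. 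I would split the steps into dyadic classes according to $s_i$. There are $O(\log n)$ classes, plus a final class with $s_i$ tiny, where the remaining decrease is small and one step suffices. Within one class, I would combine a Cauchy--Schwarz argument with the budget $\sum_i \bigl(f(\beta_i)-f(\beta_{i+1})\bigr)\le Q$, which should bound the number of steps by $O(\sqrt{Q})$ up to logarithmic factors. The total is then $l=\toh(\sqrt{Q})$.

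\textbf{Algorithm.} I would run the same greedy construction with estimates in place of exact values, maintaining an approximate copy of $\ket{\pi_{\beta_i}}$ at each step:
\begin{enumerate}
\item Start from the given state $\ket{\wtd\pi_0}$ at $\beta_0=0$.
\item For a candidate $\beta'$, estimate $\abs{\braket{\pi_{\beta_i}}{\pi_{\beta'}}}^2$. Do this by phase estimation on $W_{\beta'}$ to precision $\Theta(\sqrt\delta)$, which by \Cref{thm:labelled-szegedy} separates $\ket{\pi_{\beta'}}$ from the rest of the busy subspace. Then use amplitude estimation to constant accuracy. Each estimate costs $\toh(1/\sqrt\delta)$ walk steps, with a $\ln(1/\eta)$ factor from median amplification.
\item Binary search over $\beta'$ for the largest candidate whose estimated overlap clears a threshold slightly above $e^{-2}$. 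This takes polylogarithmically many iterations.
\item Map $\ket{\pi_{\beta_i}}$ to $\ket{\pi_{\beta_{i+1}}}$ with fixed-point amplitude amplification~\cite{Gro05j}, using the approximate reflection about $\ket{\pi_{\beta_{i+1}}}$ built from phase estimation on $W_{\beta_{i+1}}$. Since the overlap is constant, this uses $O(1)$ reflections.
\item Each reflection is accurate to $\epsilon$ at a cost of $\toh(\ln(1/\epsilon)/\sqrt\delta)$.
\end{enumerate}
A union bound over the $l$ steps and the search iterations, with per-step failure probability $\eta/\poly(l)$, gives the stated total of $\toh\bigl(\tfrac{l}{\sqrt\delta}\ln\tfrac1\eta\ln\tfrac1\epsilon\bigr)$. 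Because the threshold is slightly above $e^{-2}$, the schedule actually found still satisfies the true $e^2$ condition, and its length is within a constant factor of the greedy bound.

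\textbf{Main obstacle.} I expect the hardest part to be the length bound in the existence argument: turning convexity plus tightness of the midpoint condition into the $\toh(\sqrt Q)$ count, with the dyadic slope classes handled carefully. A secondary issue is error accumulation, namely making sure that the $\epsilon$-errors in the maintained states do not bias the overlap estimates enough to break the threshold test. I would handle this by running fixed-point amplification at slightly stronger constant precision than the threshold margin requires.
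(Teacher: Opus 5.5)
Your high-level route is the paper's own: the paper gives no proof, deferring to Harrow--Wei~\cite{HW20c} and \v{S}tefankovi\v{c}--Vempala--Vigoda~\cite{SVV09j}. It only remarks that the $\beta$-independent factor $\mu$ is harmless, and your observation that $f=\ln Z$ stays convex and nonincreasing with $-f'(\beta)=\E_{\pi_\beta}[H]\in[0,n]$ is exactly that check.

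\textbf{The gap is in the cost of your schedule-construction algorithm.} You keep a single copy of $\ket{\pi_{\beta_i}}$. At each of the $l$ steps you run polylogarithmically many overlap estimates against it (binary search over $\beta'$, times median boosting). Each estimate ends with a measurement, either of the phase register of $W_{\beta'}$ or of the amplitude-estimation register. That measurement collapses the state onto $\ket{\pi_{\beta'}}$, onto its orthogonal complement, or onto an eigenvector of the rotation acting on $\Span\{\ket{\pi_{\beta_i}},\ket{\pi_{\beta'}}\}$. So in general you no longer hold $\ket{\pi_{\beta_i}}$ after the first estimate.

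Your accounting charges $\toh(1/\sqrt\delta)$ per estimate and nothing for recovering the state. Re-preparing $\ket{\pi_{\beta_i}}$ from $\ket{\wtd\pi_0}$ costs $\toh(i/\sqrt\delta)$ walk steps, so the procedure as written costs $\toh(l^2/\sqrt\delta)$. Carrying several copies in parallel does not help, since every consumed copy has already paid for all annealing steps up to the point where it is used.

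Reaching the stated bound, linear in $l$, requires restoring $\ket{\pi_{\beta_i}}$ (up to small error) after each estimate, at cost comparable to the estimate itself, using only the approximate reflections about $\ket{\pi_{\beta_i}}$ and $\ket{\pi_{\beta'}}$. This is the nondestructive amplitude estimation that Harrow and Wei introduce for exactly this purpose, and it is the idea missing from your plan. The restoration error must then also enter your error budget, alongside the fixed-point steps.

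\textbf{The existence part works as planned.} Write $g(\beta'):=f(\beta_i)+f(\beta')-2f(\frac{\beta_i+\beta'}{2})$. Its derivative is $f'(\beta')-f'(\frac{\beta_i+\beta'}{2})\ge0$, so $g$ is monotone, the greedy step is well defined, and $g(\beta_{i+1})=c$ at every non-final step. At a tight step from $a$ to $b$ with half-width $h$, convexity gives $c\le h(s_a-s_b)$ and $f(a)-f(b)\ge 2hs_b$. Set aside the $O(\log(n\beta^\star))$ steps on which the slope halves. Within a class $s_a\in[\sigma,2\sigma)$ you then get $\sum_i h_i=O(Q/\sigma)$ and $\sum_i(s_{a_i}-s_{b_i})=O(\sigma)$, so Cauchy--Schwarz bounds each class by $O(\sqrt{Q/c})$ steps.

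Two slips to fix:
\begin{itemize}
\item $|\braket{\pi_\beta}{\pi_{\beta'}}|^2\ge e^{-2}$ is equivalent to $g\le 2$, not $g\le 1$. Your condition is the stronger $e$-slowly-varying one, which is harmless but not equivalent.
\item The ``tiny slope'' cutoff must be $\Theta(1/\beta^\star)$ for a single final step to reach $\beta^\star$. This gives $O(\log(n\beta^\star))$ classes rather than $O(\log n)$, which is still polylogarithmic here since $\beta^\star=O(n\log n+\log(1/\epsilon))$.
\end{itemize}
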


\section{Main Result}
In this section, we prove our main result, \Cref{thm:w-optimal-query}. For any $\rho\in[n/m,1]$, after a preprocessing step using $\toh(\sqrt{mn/\rho})$ queries to the adjacency-list oracle, our algorithm prepares the q-sample $\ket{\pi}$ of the spanning tree distribution $\pi$ using $\toh(\sqrt{mn\rho})$ queries.

\begin{theorem}[\sc Quantum Algorithm for Q-Sample of spanning tree distribution]
\label{thm:w-optimal-query}
Let $G=(V,E,w)$ be a connected weighted graph with $n$ vertices and $m\ge 100n$ edges, and let $\pi$ be its spanning tree distribution. Let $\rho\in[n/m,1]$. For any
$\epsilon,\eta>0$, there is a quantum algorithm which, with probability at
least $1-\eta$, outputs a state $\ket{\widetilde\pi}$ satisfying $\norm{\ket{\widetilde\pi}-\ket{\pi}}_2\le \epsilon$. The algorithm has the
following query costs with respect to the adjacency-list oracle $\mcl O_G$:
\begin{enumerate}
    \item preprocessing cost:
    $\toh\left(\sqrt{{mn}/{\rho}}\cdot \ln(1/\eta)\right)$ queries;
    \item reflection cost around $\ket{\pi}$:
    $\toh\left(\sqrt{m\rho}\right)$ queries;
    \item q-sampling cost:
    $\toh\left(\sqrt{mn\rho}\cdot \ln(1/\epsilon)\right)$ queries.
\end{enumerate}

In particular, preparing one q-sample uses
$\toh\left(\sqrt{mn\rho}\cdot \ln(1/\epsilon)\right)$ queries to $\mcl O_G$
after the successful completion of the preprocessing.
\end{theorem}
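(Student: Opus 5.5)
We assemble the algorithm from the ingredients described in the proof overview, in four stages, and then add up the query costs.

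\textbf{Stage 1: the admissible tree and the cooling family.} Using the maximum weight-product spanning tree algorithm of~\cite{DHHM06j}, we find an admissible tree $S$ with $\toh(\sqrt{mn})$ queries. We define $G_\beta := G_{E\setminus S,e^{-\beta}}$, whose spanning-tree distribution is $\pi_\beta(T)\propto e^{-\beta|T\setminus S|}w(T)$. This is a Gibbs family with Hamiltonian $H(T)=|T\setminus S|\in\{0,\dots,n-1\}$ and base weight $\mu=w$, so $Z(\beta)$ is log-convex.

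We take $\beta^\star=\Theta(\log(n/\epsilon))$ plus the log of the admissibility ratio, so that $\pi_{\beta^\star}$ puts mass $1-\epsilon^2$ on $S$. Then $\ket{S}$ is an $\epsilon$-approximation of $\ket{\pi_{\beta^\star}}$. Since $\ln(Z(0)/Z(\beta^\star))\le \ln(1/\pi(S))$ and this is $\toh(n)$, \Cref{thm:q-adapt-anneal} gives a schedule of length $l=\toh(\sqrt n)$. The theorem is applied with the roles of the endpoints reversed, starting from $\ket{S}$ at $\beta^\star$.

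\textbf{Stage 2: preprocessing.} We apply \Cref{thm:spars-apersdw} to $G\setminus S$ with constant $\epsilon$, using $\toh(\sqrt{mn}\ln(1/\eta))$ queries. For each $\beta$, the graph $e^{-\beta}\widetilde H+S$ is then a sparsifier of $G_\beta$: the tree $S$ is kept exactly, and spectral approximation is preserved under scaling and under adding a common PSD term. \Cref{thm:eff-res-apersdw} then yields an oracle $O_{\mcl R}^{G_\beta}$ for every $\beta$ with no further queries.

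Next we build the envelope $\mcl A_\rho=S\cup\{e:\widetilde p_0(e)\ge\rho/c\}$. Its size is $O(n/\rho)$ by Foster. It is found by Grover-type search over the $2m$ adjacency positions, marking edges with large overestimate; each marked item costs $O(1)$ queries. The total is $\toh(\sqrt{m\cdot n/\rho})$ queries, which dominates $\sqrt{mn}$ since $\rho\le1$.

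The key justification is monotonicity. By Rayleigh monotonicity, scaling the weights outside $S$ down by $e^{-\beta}$ decreases $\ell_\beta(e)=w_\beta(e)\reff^{G_\beta}(e)$ for $e\notin S$. This holds because the leverage score $w\reff$ of an edge is increasing in its own weight and decreasing in the others, and here the scaling factor is common to all non-$S$ edges. So every edge outside $\mcl A_\rho$ has $\ell_\beta<\rho$ for all $\beta$. We also store the degree prefix sums and $\sum_{e\in\mcl A_\rho}\widetilde p_\beta(e)$. The latter depends on $\beta$ only through stored weights, so it is recomputable without queries.

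\textbf{Stage 3: the walk operator.} We quantize the marginal-aware walk on $G_\beta$. Its Up step q-samples $f$ with amplitude $\propto\sqrt{\widetilde p_\beta(f)}$, equivalently with multiplicity $t_{G_\beta}(f)$. Its Down step removes $e\in\Cycle(T+f)$ with probability proportional to $\pi'$ of the resulting tree, and this step costs no queries.

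By \Cref{prop:iso-transform} the walk is the projection of the up-down walk on the nearly isotropic multigraph $G'_\beta$. So \Cref{thm:up-down-gap} with $m\ge100n$ gives gap $\Omega(1/(n\log^2n))$ (\Cref{lem:margaw-walk-gap}). Laziness makes the absolute gap comparable. \Cref{thm:labelled-szegedy} then gives phase gap $\wtd\Omega(1/\sqrt n)$.

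Each application of $\mcl U$ prepares the $\rho$-heavy proposal with $O(1)$ queries: it chooses between the explicit $\mcl A_\rho$ branch and the uniform branch with weight $\sqrt\rho$, computes $\widetilde p_\beta$ with one query, and runs amplitude amplification. The acceptance amplitude squared is $\Omega(n/(n/\rho\cdot\rho+m\rho))=\Omega(n/(m\rho))$, which gives $\toh(\sqrt{m\rho/n})$ queries per $W_\beta$ (\Cref{prop:qw-operator}). A reflection around $\ket{\pi}$ via phase estimation uses $\toh(\sqrt n)$ walk steps, hence $\toh(\sqrt{m\rho})$ queries, which is item 2.

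\textbf{Stage 4: annealing and accounting.} We run \Cref{thm:q-adapt-anneal} from $\ket{S}$ down the schedule. Each step converts $\ket{\pi_{\beta_{i+1}}}$ into $\ket{\pi_{\beta_i}}$ by fixed-point amplification with $O(1)$ reflections at accuracy $\epsilon/l$ (\Cref{lem:q-sample-transform-walk}). This costs $\toh(l\sqrt n\ln(1/\epsilon))$ walk steps, that is, $\toh(\sqrt{mn\rho}\ln(1/\epsilon))$ queries, which is item 3. Computing the schedule costs the same and can be counted in either phase. The failure probabilities of the sparsifier and the search are union-bounded into $\eta$.

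\textbf{Main obstacle.} The hardest step is showing that the single preprocessing stays valid uniformly in $\beta$. This has two parts: the sparsifier-reuse claim, and the Rayleigh-monotonicity envelope for leverage scores under common scaling. The second part needs care, because the leverage scores of $S$-edges increase with $\beta$, and those edges must be kept explicitly in $\mcl A_\rho$. A further point to check is that the overestimates $\widetilde p_\beta$ used in the Up step remain a constant-factor marginal overestimate for every $\beta$, so that isotropy and the gap bound hold along the entire schedule.
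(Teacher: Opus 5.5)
Your proposal is correct and follows the paper's proof essentially step for step: an admissible tree from the maximum weight-product spanning tree, one sparsifier of $G\setminus S$ reused for every $G_\beta$, and a static envelope consisting of $S$ plus the edges with large estimated leverage score in $G$, justified by monotonicity of leverage scores under rescaling. The remaining ingredients also coincide (the marginal-aware walk with a $\rho$-heavy proposal and rejection sampling, and the reversed Harrow--Wei schedule traversed by fixed-point amplification), and the only differences are cosmetic: the paper proves monotonicity via the Loewner bound $L_{c_1}\succeq (c_1/c_2)L_{c_2}$ (\Cref{lem:schedule-wide-heavy}) rather than Rayleigh monotonicity plus scale invariance, and it gets the absolute gap from the nonnegative spectrum inherited from the up-down walk (\Cref{lem:margaw-psd}) rather than from laziness, so the chain needs no modification.
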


\Cref{thm:w-optimal-query} exposes a tradeoff, controlled by $\rho$, between the preprocessing cost and the cost of each subsequent q-sample. We instantiate it in two ways, which give the two statements in the informal version of the theorem in the introduction. Throughout, we suppress the dependence on $\epsilon$ and $\eta$ inside $\toh(\cdot)$.

\paragraph{Preparing $k$ q-samples.}
Choosing $\rho = 1/k$ for $k\in[1,m/n]$, we obtain that the preprocessing cost is $\toh(\sqrt{kmn})$ and the cost of q-sampling is $\toh(\sqrt{mn/k})$. Thus for $k$ independent q-samples, we obtain a total query cost of $\toh(\sqrt{kmn}) + k\cdot\toh(\sqrt{mn/k}) = \toh(\sqrt{kmn})$.

\paragraph{Cheaper q-samples after preprocessing.}
Alternatively, for $\delta\in[0,1]$, we choose the parameter $\rho=\min\{1,\,n\,m^{2\delta-1}\}$, which lies in $[n/m,1]$ since $m\ge 100n$ to obtain the following. If $n\,m^{2\delta-1}\le 1$, the preprocessing cost is $\toh(\sqrt{mn/\rho})=\toh(m^{1-\delta})$ and each q-sample costs $\toh(\sqrt{mn\rho})=\toh(n\,m^{\delta})$. Otherwise, $\rho=1$ and both costs are $\toh(\sqrt{mn})$, which is at most $\toh(n\,m^{\delta})$. In either case, the preprocessing cost is $\toh(\sqrt{mn}+m^{1-\delta})$, and since $m\le n^2$, each q-sample costs $\toh(n\,m^{\delta})\le\toh(n^{1+2\delta})$.

Before proving the theorem, we formalize the cooling schedule described in the introduction.

\begin{definition}[\sc Admissible Spanning Tree, \cite{AD20c}]
    A spanning tree $S \in \Delta$ is \emph{admissible} if $\pi(S) \geq {1}/{(2n^{n-2})}$.
\end{definition}

\begin{definition}[\sc $\beta$-scaled weight function and spanning tree distribution]\label{def:beta-scale}
    For every $\beta\in\mathbb{R}$, define the $\beta$-scaled weight function
    $w^S_\beta:E\rightarrow \mathbb{R}_{\geq 0}$ by
    \[
    w^S_\beta(e)=
    \begin{cases}
    e^{-\beta}\cdot w(e), & \text{if } e\notin S,\\
    w(e), & \text{if } e\in S.
    \end{cases}
    \]
    Let $G^S_\beta=(V,E,w^S_\beta)$ be the weighted graph induced by $w^S_\beta$, and let
    $\pi_\beta$ denote the spanning tree distribution of $G^S_\beta$. Note that $G^S_{\beta}$ is same as the rescaled graph $G_{E\setminus S,e^{-\beta}}$. In particular,
    for any spanning tree $T$,
    \[
    \pi_\beta(T)\propto \prod_{e\in T}w^S_\beta(e)
    = e^{-\beta |T\setminus S|}\cdot
    \prod_{e\in T}w(e) = e^{-\beta |T\setminus S|}\cdot
   w(T).
    \]
\end{definition}

At $\beta=0$, $\pi^S_0$ is the original spanning tree distribution, i.e., $\pi_0^S=\pi$. As $\beta\to\infty$, only the tree $S$ survives, so $\pi_\infty^S=\mathds{1}_S$. When $S$ is fixed by context, we omit the superscript $S$ and write $w_\beta$, $G_\beta$, and $\pi_\beta$ for $w_\beta^S$, $G_\beta^S$, and $\pi_\beta^S$, respectively. 

We now describe the four intermediate results used in the proof of \Cref{thm:w-optimal-query}. The first one gives reusable effective-resistance overestimates for any sequence of graphs obtained by rescaling the same set of edges.

\begin{rproposition}[\Cref{prop:er-ovest} ({\sc Reusable Effective-resistance data structure})]
    Let $G=(V,E,w)$ be a weighted graph with $n$ vertices and $m$ edges, and let $F\subseteq E$ be stored explicitly together with its edge
    weights. For any $\eta > 0$, there is a quantum algorithm that, with probability at least $1-\eta$, uses $\toh(\sqrt{mn}\cdot\ln(1/\eta))$ queries to the adjacency-list oracle $\orcl$ to construct a data structure $\mathcal H$ for the family of rescaled graphs $\{G_{E\setminus F,c}: c\geq0\}$.

    For any $c\geq0$, $\mathcal H$ can be used to construct an oracle $O_{\mathcal R}^{G_{E\setminus F,c}}$ giving access to $2$-effective-resistance overestimates in $G_{E\setminus F,c}$. Constructing this oracle, as well as querying it, requires no further queries to $\orcl$.
\end{rproposition}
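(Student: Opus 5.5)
The plan is to sparsify only the part of $G$ that gets rescaled, and to keep the edges of $F$ exactly, since they are already stored. Let $K:=G\setminus F=(V,E\setminus F,w|_{E\setminus F})$. For every $c\ge 0$ we then have the decomposition
\[
L_{G_{E\setminus F,c}}=L_F+c\,L_K ,
\]
where $L_F$ is the Laplacian of the stored graph $(V,F,w|_F)$.

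\textbf{Preprocessing.} I would run the algorithm of \Cref{thm:spars-apersdw} on $K$ with constant $\epsilon=1/9$ and failure probability $\eta$. This outputs an explicit $\epsilon$-spectral sparsifier $\wtd K$ of $K$ with $\toh(n)$ edges. Oracle access to $K$ is simulated from $\orcl$ as follows:
\begin{itemize}
\item a neighbor query to $K$ is answered by a query to $\orcl$, followed by a lookup in the stored set $F$;
\item a degree query and the $i$-th non-$F$ neighbor are obtained from the $\orcl$ adjacency list, skipping $F$-edges. Since $F$ is stored explicitly, we precompute, for each vertex, the sorted list of its $F$-neighbors. Binary search in $\orcl$'s sorted list, combined with this precomputed list, then yields index translation with $O(\log n)$ overhead.
\end{itemize}
Hence the query cost is $\toh(\sqrt{mn}\ln(1/\eta))$. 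One subtlety is that $K$ may be disconnected; the sparsifier guarantee of \cite{AdW20c} holds for general graphs, and I would check that the adaptation does not need connectivity. The data structure is $\mathcal H:=(F,w|_F,\wtd K)$.

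\textbf{Sparsifier for every $c$.} For a given $c$, define $H_c:=(V,F,w|_F)+c\wtd K$. Its Laplacian is $L_F+cL_{\wtd K}$, and the sparsifier guarantee gives, for every $x$,
\[
\bigl|x^\top(L_F+cL_{\wtd K})x-x^\top(L_F+cL_K)x\bigr|
= c\,\bigl|x^\top L_{\wtd K}x-x^\top L_Kx\bigr|
\le c\,\epsilon\,x^\top L_Kx
\le \epsilon\,x^\top L_{G_{E\setminus F,c}}x .
\]
The last step uses $x^\top L_F x\ge 0$ (\Cref{fact:psd}). Thus $H_c$ is an $\epsilon$-spectral approximation of $G_{E\setminus F,c}$, and it is a reweighted subgraph because $\wtd E\subseteq E\setminus F$. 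Building $H_c$ uses only stored data, so it costs no queries.

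The main obstacle is that $H_c$ has $|F|+\toh(n)$ edges rather than $\toh(n)$. \Cref{thm:eff-res-apersdw} only assumes some $\epsilon$-sparsifier $H$ is given. I expect its proof (solving Laplacian systems on $H$ with JL projections) to cost time proportional to $|E(H)|$, which affects only time, not queries. If a genuinely sparse input is needed, I would first classically resparsify $H_c$ to constant accuracy, which again needs no $\orcl$ queries. The accuracies then compose: two sparsification stages at $\epsilon=1/9$ give overall error $\le 1/3$, so \Cref{thm:eff-res-apersdw} yields a $(1+3\epsilon')$-overestimate with $\epsilon'\le 1/3$, i.e. a factor $\le 2$. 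Querying that oracle uses only the stored sparsifier, so it also makes no calls to $\orcl$, as claimed.
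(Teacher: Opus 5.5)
Your proposal is correct and follows essentially the same route as the paper. You sparsify the rescaled part $G\setminus F$ once via \Cref{thm:spars-apersdw}, observe (as in \Cref{lem:spars-reuse}) that $F+c\wtd K$ is an $\epsilon$-spectral sparsifier of $G_{E\setminus F,c}$ because $L_F\succeq 0$, and feed it to \Cref{thm:eff-res-apersdw} with constant $\epsilon$ to get a factor-$2$ overestimate with no further queries.

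The differences are minor. The paper simulates access to $G\setminus F$ more simply by zeroing the weights of $F$-edges, which costs one query and needs no index translation. It takes $\epsilon=1/10$ so that $1+3\epsilon\le 2$ directly. It also applies \Cref{thm:eff-res-apersdw} to $F+c\wtd H$ without your optional resparsification step, which in any case affects only running time, not the query count.
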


The data structure $\mcl H$ is the only place where the preprocessing looks at the whole graph, and it is built once for the entire family $\{G_{E\setminus F,c}\}_{c\ge 0}$. This is what makes a long cooling schedule affordable. All graphs on the schedule differ from $G$ only by rescaling the edges outside a fixed spanning tree, so their effective resistances can all be overestimated from the same sparsification data. The length of the schedule therefore affects only the time complexity, not the query complexity.

To implement the quantum walk operator $W_\beta$ corresponding to the marginal-aware walk, we need to implement its Up step. This step samples an edge $e$ with probability proportional to the isotropic multiplicity $t_G(e)$, and hence requires preparation of the corresponding quantum state:

\begin{definition}[\sc Marginal state]
\label{def:marginal-state}
    Let $G=(V,E,w)$ and fix a $\lambda$-marginal overestimate $\wtd p_G$, for some constant $\lambda\geq1$, inducing isotropic multiplicities $t_G(e)$ as in \Cref{def:isotropic-multiplicity}. The corresponding marginal state is
    \[
        \ket{\mu_G}
        :=
        \frac{1}{\sqrt{\sum_{e\in E}t_G(e)}}
        \sum_{e\in E}\sqrt{t_G(e)}\,\ket e.
    \]
\end{definition}

\begin{rproposition}[\Cref{prop:computable-heavy-envelope} ({\sc Reusable marginal data structure})]
    Consider any weighted graph $G=(V,E,w)$ with $n$ vertices and $m$ edges, and let $F\subseteq E$ be stored explicitly together with its edge weights. For any $\rho\in[n/m,1]$ and $\eta>0$, there is a quantum algorithm that constructs a data structure $\margds$ providing coherent query access to a set of edges, together with their weights, such that every edge $e\in E$ outside this set has marginal less than $\rho$ in every graph in the family $\{G_{E\setminus F,c}:0<c\leq1\}$. The algorithm succeeds with probability at least $1-\eta$ and uses $\toh(\sqrt{mn/\rho}\cdot\ln(1/\eta))$ queries to the adjacency-list oracle $\orcl$.
    
    Moreover, after constructing $\margds$, for any $0<c\leq1$ it can be used to prepare the marginal state $\ket{\mu_{G_{E\setminus F,c}}}$ corresponding to the $2$-marginal overestimate
    $\wtd p_{G_{E\setminus F,c}}(e):=w_{G_{E\setminus F,c}}(e)\tdreff^{G_{E\setminus F,c}}(e)$
    (\Cref{def:marginal-state}), using $\toh(\sqrt{m\rho/n})$ queries to $\orcl$, given coherent query access to $2$-effective-resistance overestimates $\tdreff^{G_{E\setminus F,c}}$.
\end{rproposition}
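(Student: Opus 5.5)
The plan is to exploit monotonicity of leverage scores under rescaling, then use the $\rho$-heavy proposal with amplitude amplification.

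\textbf{Monotonicity.} For $0<c\le 1$, the graph $G_{E\setminus F,c}$ is obtained from $G$ by decreasing the weights of the edges outside $F$. Rayleigh monotonicity says that decreasing a weight can only increase effective resistances. The leverage score of the scaled edge itself, $w\,\reff=1-$ (its contribution via all other paths), decreases as its own weight decreases. So I will show that for $e\notin F$ the map $c\mapsto \ell_{G_{E\setminus F,c}}(e)$ is nondecreasing in $c$.

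To prove this I will write
\[
\ell(e)=\frac{\partial}{\partial \log w(e)}\log Z,
\]
and note that scaling all other edges in $E\setminus F$ and the edge $e$ itself both push $\ell(e)$ in the right direction. Equivalently, I will use negative correlation of spanning-tree marginals (strong Rayleigh): conditioning on or upweighting other edges lowers $\Pr[e\in T]$. Combining the two effects, lowering the weights of the other edges of $E\setminus F$ raises $\ell(e)$, while lowering $e$'s own weight lowers it. I therefore need care. The clean route is $\ell_c(e)=c\,w(e)\reff^{(c)}(e)$ with $c\,L_{G\setminus F}\preceq L_{G_c}\preceq L_G$ (I am writing $G_c$ for $G_{E\setminus F,c}$). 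This gives
\[
\ell_c(e)\le c\,w(e)\,\min\bigl(\reff^{G}(e)/c,\ \dots\bigr).
\]
Indeed $L_{G_c}\succeq c L_G$ implies $\reff^{G_c}\le \reff^G/c$, hence $\ell_c(e)\le w(e)\reff^G(e)=\ell_G(e)$. So edges outside $F$ that are light in $G$ stay light for all $c\le1$. This is the main obstacle, and the Loewner bound settles it without needing exact monotonicity.

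\textbf{Envelope construction.} I will define the stored set as $\mcl A_\rho:=F\cup\{e: \wtd p_G(e)\ge\rho\}$, where $\wtd p_G$ comes from the 2-overestimate of \Cref{prop:er-ovest} at $c=1$. Every edge outside $\mcl A_\rho$ then satisfies $\ell_{G_c}(e)\le\ell_G(e)\le\wtd p_G(e)<\rho$. Since $\sum\wtd p_G\le 2(n-1)$, there are $O(n/\rho)$ such edges.

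To find them, I will run quantum search over the $2m$ adjacency-list positions with the marked predicate $w(e)\tdreff^G(e)\ge\rho$. Each evaluation costs $O(1)$ queries plus a query to the effective-resistance oracle, which is free. Finding all $K=O(n/\rho)$ marked items costs $\toh(\sqrt{mK})=\toh(\sqrt{mn/\rho})$, with $\ln(1/\eta)$ repetitions for success. Adding the $\toh(\sqrt{mn})$ for $\mcl H$, which is dominated since $\rho\le1$, I will store $\mcl A_\rho$ and its weights in QRAM.

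\textbf{Sampling the marginal state.} Given $c$, I will prepare the proposal state
\[
\sum_{e\in\mcl A_\rho}\sqrt{t(e)}\ket e+\sum_{e\notin\mcl A_\rho}\sqrt{\tau}\ket e,
\]
where $\tau:=\lceil m\rho/(2(n-1))\rceil$ upper-bounds $t(e)$ off the envelope. Two details are needed here.

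First, the heavy part is read from QRAM using \Cref{prop:er-ovest} at this $c$, with no queries. The light part is a uniform superposition over adjacency positions. I will use stored degree prefix sums, and obtain the weight via one query. Positions belonging to $\mcl A_\rho$ are flagged and cancelled by a QRAM membership check. Double-counting of the two orientations is handled by keeping only $u<v$.

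Second, I will rotate an ancilla by $\sqrt{t(e)/\tau}$ on the light part, which costs $O(1)$ queries. The good amplitude squared is then at least
\[
\frac{\sum t}{|\mcl A_\rho|\max t+m\tau}.
\]
I need $\sum t\ge m$, because each $t(e)\ge 1$ for positive weights. Since $\tau\lesssim m\rho/n+1$ and the heavy mass is $O(m)$, the success probability is $\Omega(1/(1+m\rho/n))=\Omega(n/(m\rho))$ for $\rho\ge n/m$.

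Amplitude amplification then yields $\ket{\mu_{G_c}}$ using $\toh(\sqrt{m\rho/n})$ queries, with exact amplitudes achievable via the known normalisation.
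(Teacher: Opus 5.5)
Your plan matches the paper's proof. The Loewner comparison $L_{G_{E\setminus F,c}}\succeq c\,L_G$, giving $\ell_{G_{E\setminus F,c}}(e)\le\ell_G(e)$ for $e\notin F$, is exactly the $c_2=1$ case of the paper's marginal-monotonicity lemma. The envelope $F\cup\{e:w(e)\tdreff^G(e)\ge\rho\}$, found by repeated Grover search over the $2m$ adjacency positions, is the paper's $\mcl E_\rho$. The $\rho$-heavy proposal followed by rejection sampling and amplitude amplification, with success probability $\Omega(n/(m\rho))$, is the paper's marginal-state preparation. You assemble the proposal from a heavy branch plus a filtered uniform branch instead of a single Grover--Rudolph tree over all $2m$ positions, but that difference is immaterial.

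One slip: off the envelope you only know $\ell_{G_{E\setminus F,c}}(e)<\rho$, so the $2$-overestimate gives only $\wtd p_{G_{E\setminus F,c}}(e)<2\rho$. Hence the valid cap is $t(e)\le\lceil m\rho/(n-1)\rceil$, not $\lceil m\rho/(2(n-1))\rceil$. With your $\tau$, the rotation by $\sqrt{t(e)/\tau}$ can be ill-defined; doubling $\tau$ fixes this without changing the asymptotics.
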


We now have, for every temperature on the schedule, coherent access to effective-resistance overestimates and a way to prepare the marginal state $\ket{\mu_G}$ at cost $\toh(\sqrt{m\rho/n})$. Crucially, the stored set of heavy edges is \emph{static}. It is computed once and is valid for every graph in the family $\{G_{E\setminus F,c}:0<c\le 1\}$, so it does not need to be recomputed as $\beta$ changes. These two ingredients are exactly what is needed to implement a marginal-aware up-down walk on spanning trees directly on the original graph, without ever explicitly rescaling or isotropizing it. The next proposition records the resulting quantum walk operator.

\begin{rproposition}[\Cref{prop:qw-operator} ({\sc Quantum walk operator with large gap})]
    Let $G=(V,E,w)$ be any weighted graph with $n$ vertices and $m\ge 100n$ edges, and let $\Delta$ denote the set of spanning trees of $G$. There exists a quantum walk operator $W$, associated with a reversible Markov chain on $\Delta$, whose unique $1$-eigenvector is $\ket{\pi}$, where $\pi$ is the spanning-tree distribution of $G$, and whose phase gap is at least $\wtd{\Omega}(1/\sqrt n)$.

    Moreover, for any $\rho\in[n/m,1]$, one application of $W$ and its inverse can be implemented using $\toh(\sqrt{m\rho/n})$ queries to $\mcl O_G$, given:
    \begin{itemize}
        \item coherent query access to $2$-effective-resistance overestimates for all edges of $G$;
        \item the degree of every vertex of $G$; and
        \item a stored set of edges such that every edge outside this set has marginal under $\pi$ less than $\rho$, with each stored edge given together with its weight and its two adjacency-list positions.
    \end{itemize}
\end{rproposition}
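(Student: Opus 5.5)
We prove the final statement, \Cref{prop:qw-operator}, by defining the marginal-aware up-down walk on $\Delta$ and showing two things. First, it is exactly the lift of the ordinary up-down walk on the isotropic transform $G'$ of \Cref{prop:iso-transform}, so its gap follows from \Cref{thm:up-down-gap}. Second, its transition-preparation unitary $\mcl U$ spends queries only on preparing the marginal state.

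\textbf{The chain.} Fix the $2$-marginal overestimate $\wtd p(e)=w(e)\tdreff(e)$ and the multiplicities $t(e)=t_G(e)$. From a tree $T$, one step does the following:
\begin{enumerate}
\item Draw $f\in E$ with probability $t(f)/\sum_g t(g)$; this is the Up step in $G'$, with the copy index forgotten.
\item If $f\in T$, self-loop.
\item Otherwise choose $e\in\Cycle(T+f)$ with probability proportional to $\pi(T+f-e)\,t(e)$ and move to $T+f-e$.
\end{enumerate}

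In step 3 the factor $\pi(T+f-e)t(e)$ equals $\pi(T)\,w(f)t(e)/w(e)$ up to a common factor, so it needs only weights stored in the register. The $t(e)$ factors come from coarse-graining $G'$. By \Cref{prop:iso-transform}, $\pi'$ is $\pi$ with an independent uniform copy choice per tree edge.

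The up-down walk on $G'$ commutes with forgetting copies. Concretely, the projection of the $G'$-chain is itself Markov: the added edge copy is uniform over $|E'|$ copies, hence $f$ is chosen with probability $\propto t(f)$. The removal probability on $G'$ is proportional to $\pi'(T'+f'-e')$, which after summing over the copy choices is $\propto \pi(T+f-e)\,t(e)/t(f)\cdot(\ldots)$; I will verify that this marginalises to the stated rule.

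\textbf{Gap.} The lumped chain has stationary distribution $\pi$. Its spectrum is contained in that of the $G'$-chain, because the lifted functions form an invariant subspace. So $\gamma \ge \Omega(1/(n\log^2 n))$ by \Cref{thm:up-down-gap}, using that $\pi'$ is nearly isotropic and $|E'|\ge m\ge 100n$.

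Up-down walks are positive semidefinite (they have the form $D^*D$), so $\gamma_{\rm abs}=\gamma$. Then \Cref{thm:labelled-szegedy}, with labels $(f,e)$ and the involution $(T,(f,e))\mapsto (T+f-e,(e,f))$, gives phase gap $\wtd\Omega(1/\sqrt n)$ and unique $1$-eigenvector $\ket\pi$.

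\textbf{Implementation.} Write $\mcl U=\mcl D\cdot(I\otimes \mathcal V)$, where $\mathcal V$ prepares $\ket{\mu_G}$ and $\mcl D$ performs the Down step.
\begin{itemize}
\item $\mcl D$ acts on $T$ and $(f,w(f))$: it computes $\Cycle(T+f)$ by a coherent search inside the register, evaluates $t(e)$ for $e\in T$ using the stored weights and $O_{\mcl R}$, and prepares the normalised superposition over $e$. This costs no $\orcl$ queries.
\item $\mcl S$ is free. It rewrites $T\mapsto T+f-e$ with weights carried in the labels, and re-sorts.
\end{itemize}

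For $\mathcal V$ I use the $\rho$-heavy proposal $\propto\sum_{e\in\mcl A}\sqrt{t(e)}\ket e+\sum_{e\notin\mcl A}\sqrt{\tau}\ket e$, where $\tau=\lceil m\rho/(2(n-1))\rceil$ bounds $t(e)$ for light edges. This bound requires $\wtd p(e)<2\rho$ off $\mcl A$. That follows because $\tdreff\le 2\reff$, so $\wtd p\le 2\ell<2\rho$.

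The proposal is prepared with $\toh(1)$ queries. The light part is a uniform state over adjacency positions, built from prefix sums of degrees and excluding stored positions via the stored adjacency positions. A single query then computes $w(e)$ and hence $t(e)$, and a rotation accepts with amplitude $\sqrt{t(e)/\tau}$.

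The success amplitude squared is $\sum t/(\sum_{\mcl A} t+(m-|\mcl A|)\tau)$. Here $\sum t\ge m/2$ roughly, while the denominator is $O(m+m\rho\cdot m/n)$. Exact amplitude amplification therefore takes $O(\sqrt{m\rho/n})$ rounds. The success probability is known only approximately, so I will use fixed-point or exact amplification with a precomputed normalisation; the sum $\sum_{\mcl A}t$ is computable from stored data.

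\textbf{Main obstacle.} The hardest step is the lumping argument: verifying that the $G'$ up-down walk projects exactly to the claimed Down rule, so that reversibility and the gap transfer. I will first try to prove the explicit identity $P(T,T+f-e)=\sum_{\text{copies}}P'(\cdot)$ directly.
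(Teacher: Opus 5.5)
Your proposal is correct and follows essentially the paper's proof: your down rule $\propto\pi(T+f-e)\,t(e)$ is exactly the paper's $\nu(T+f-e)$ rule, the gap comes from lumping the up-down walk on the isotropic multigraph $G'$ via \Cref{lem:lumped-spectra} together with nonnegativity of its spectrum and \Cref{thm:labelled-szegedy}, and the query cost is that of preparing $\ket{\mu_G}$ from a $\rho$-heavy proposal by rejection sampling, with the down step and shift query-free. Three small corrections are needed. First, the light cap must be $\tau=\lceil m\rho/(n-1)\rceil$ as in the paper: with $\lambda=2$ and $\wtd p(e)<2\rho$ you only get $t(e)\le\lceil m\rho/(n-1)\rceil$, so with your cap the rotation amplitude $\sqrt{t(e)/\tau}$ can exceed $1$. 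Second, your uniform light part lists each edge at both of its adjacency positions, so you must also reject the position with $u>v$ (a constant-factor loss) to keep the labels unique. Third, an up step uniform over all $|E'|$ copies is the lazy chain $\zeta Q+(1-\zeta)I$ of the walk $Q$ in \Cref{thm:up-down-gap}, so the gap picks up the paper's harmless factor $\zeta=(M-(n-1))/M\ge 99/100$.
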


The quantum walk operators serve both to construct the cooling schedule and to move a q-sample between consecutive temperatures. Their $\widetilde{\Omega}(1/\sqrt n)$ phase gap, quadratically larger than the $\widetilde{\Omega}(1/n)$ spectral gap of the underlying chain, reduces each transition to $\widetilde{O}(\sqrt n)$ walk steps. It remains to bound the number of transitions: following Harrow and Wei~\cite{HW20c}, the schedule can be constructed from the operators $W_\beta$ themselves and has length $\widetilde{O}(\sqrt n)$. Thus its cost is measured in walk-operator applications, each requiring $\widetilde{O}(\sqrt{m\rho/n})$ queries.

\begin{rproposition}[\Cref{prop:compute-adapt} ({\sc Computing a slowly-varying schedule})]
    Let $G=(V,E,w)$ be a weighted graph with $n$ vertices and $m$ edges. Fix an admissible spanning tree $S$, let $\beta^{\star}:=\ln(16n^{n-2}/\epsilon^2)$ for $\epsilon>0$, and let $\{G_\beta\}_{\beta\in[0,\beta^{\star}]}$ be the $\beta$-scaled graphs from \Cref{def:beta-scale}, with spanning tree distributions $\{\pi_\beta\}$. 

    Suppose the admissible spanning trees $S$ is stored explicitly together with its edge weights, then there is a quantum algorithm that, for any $\eta>0$, computes an $e^2$-slowly varying schedule $0=\beta_0<\cdots<\beta_l=\beta^{\star}$ of length $l \leq \toh(\sqrt n)$ with $\beta_0=0$, $\beta_l=\beta^{\star}$ with probability at least $1-\eta$. The algorithm uses $\toh(n\cdot \ln(1/\eta)\ln(1/\epsilon))$ applications of the operators $W_\beta$ obtained by applying \Cref{prop:qw-operator} to $G_{\beta}$.
\end{rproposition}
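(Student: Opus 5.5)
The plan is to reduce the statement to \Cref{thm:q-adapt-anneal}, applied to the family $\{\pi_\beta\}_{\beta\in[0,\beta^\star]}$ with Hamiltonian $H(T)=|T\setminus S|\in\{0,\dots,n-1\}$ and base weight $\mu(T)=w(T)$. Three things must be verified. First, the partition function must be log-convex. Second, $\ln(Z(0)/Z(\beta^\star))$ must be $\toh(n)$, so that the schedule has length $\toh(\sqrt n)$. Third, the walk operators and the initial state must meet the hypotheses of the theorem.

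\emph{Log-convexity.} We have $Z(\beta)=\sum_{T\in\Delta}e^{-\beta H(T)}w(T)$, a nonnegative combination of exponentials in $\beta$. Its second derivative of $\ln Z$ equals the variance of $H$ under $\pi_\beta$, which is nonnegative. So $Z$ is log-convex, and the existence part of \Cref{thm:q-adapt-anneal} gives an $e^2$-slowly varying schedule of length $l\le \toh(\sqrt{\ln(Z(0)/Z(\beta^\star))})$.

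\emph{Bounding the ratio.} Since $H(S)=0$, we have $Z(\beta^\star)\ge w(S)$. Hence
\[
\frac{Z(0)}{Z(\beta^\star)}\le \frac{Z(0)}{w(S)}=\frac{1}{\pi(S)}\le 2n^{n-2},
\]
where the last inequality is admissibility of $S$. The logarithm is therefore $O(n\log n)$, and $l=\toh(\sqrt n)$.

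\emph{Annealing direction and initial state.} The theorem's warm start must be a state close to the q-sample at one endpoint. I would run the adaptive procedure from the end $\beta^\star$, where the state is easy, and then reindex the schedule; the schedule property is symmetric. I need $\|\ket{S}-\ket{\pi_{\beta^\star}}\|_2$ to be small. Write $Z(\beta^\star)=w(S)+R$ with
\[
R\le e^{-\beta^\star}(Z(0)-w(S))\le e^{-\beta^\star}Z(0).
\]
This gives
\[
1-\pi_{\beta^\star}(S)\le \frac{R}{w(S)}\le 2n^{n-2}e^{-\beta^\star}=\frac{\epsilon^2}{8}.
\]
Then $\|\ket S-\ket{\pi_{\beta^\star}}\|_2^2=2(1-\sqrt{\pi_{\beta^\star}(S)})\le \epsilon^2/4$. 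So $\ket{S}$, which is prepared with no queries because $S$ is stored, is an $\epsilon/2$-accurate warm start. This is where the choice $\beta^\star=\ln(16n^{n-2}/\epsilon^2)$ is used.

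\emph{Walk operators.} For each $\beta$, \Cref{prop:qw-operator} applied to $G_\beta$ gives $W_\beta$ with $\ket{\pi_\beta}$ as its unique $1$-eigenvector and phase gap $\wtd\Omega(1/\sqrt n)$. This corresponds to a chain gap $\delta=\wtd\Omega(1/n)$. Note that $m\ge100n$ is a hypothesis needed by that proposition. Plugging $l=\toh(\sqrt n)$ and $1/\sqrt\delta=\toh(\sqrt n)$ into the bound of \Cref{thm:q-adapt-anneal} gives $\toh(n\ln(1/\eta)\ln(1/\epsilon))$ applications of $W_\beta^{\pm1}$, as claimed.

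The main obstacle is matching the gap hypothesis. \Cref{thm:q-adapt-anneal} is phrased in terms of the absolute spectral gap of a reversible chain, while \Cref{prop:qw-operator} states a phase gap. I would first go back to the chain underlying \Cref{prop:qw-operator}, the marginal-aware walk. Its gap is $\wtd\Omega(1/n)$ by \Cref{lem:margaw-walk-gap}, and I would argue its spectrum is nonnegative, or else add laziness. Then $\gamma_{\rm abs}=\gamma=\wtd\Omega(1/n)$, and \Cref{thm:labelled-szegedy} confirms the phase gap is consistent. A second, minor point is the adaptation from \cite{HW20c} to include the weight $\mu$; the paper already notes that this goes through because $\mu$ does not depend on $\beta$.
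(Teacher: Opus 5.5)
Your proposal is correct and follows essentially the same route as the paper. It uses log-convexity of $Z_S$, the bound $\ln(Z(0)/Z(\beta^\star))\le\ln(1/\pi(S))\le\ln(2n^{n-2})$ giving $l=\toh(\sqrt n)$, the warm start $\ket{S}$ at $\beta^\star$ via the reversed Harrow--Wei procedure (\Cref{prop:q-adapt-anneal-reverse} together with \Cref{lem:pointmass}), and the $\wtd\Omega(1/n)$ gap of the marginal-aware walk. The differences are minor: you prove log-convexity via $(\ln Z)''=\mathrm{Var}_{\pi_\beta}(H)\ge 0$ rather than H\"older, and you explicitly invoke nonnegativity of the spectrum (\Cref{lem:margaw-psd}) and the hypothesis $m\ge 100n$, which the paper uses only implicitly through \Cref{prop:qw-operator}.
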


Combining the four ingredients gives \Cref{alg:q-sample}. The preprocessing phase computes an admissible tree and all reusable data, and then finds the schedule. The q-sampling phase starts from the point mass on the tree and anneals down the schedule to $\beta_0=0$.

\begin{algorithm}[ht!]
\caption{\sc Quantum Sampling of Random Spanning Trees}
\label{alg:q-sample}
    \begin{algorithmic}[1]
    \Require{$G=(V,E,w)$ with $n=|V|$ and $m=|E|$, parameter
    $\rho\in[n/m,1]$, accuracy $\epsilon$, and failure probability $\eta$.}
    \Ensure{A state $\ket{\widetilde\pi}$ satisfying
    $\norm{\ket{\widetilde\pi}-\ket{\pi}}_2\le\epsilon$.}

    \Statex
    \Statex \textsc{Preprocessing}
    \State Compute $S$, a maximum weight-product spanning tree of $G$
    (\Cref{prop:sample-admissible}, \cite{AGJL25c}).

    \State Compute $\mcl H$, a reusable effective-resistance data structure for
    \Statex \hspace{2pt}
    the family $\{G_{E\setminus S,e^{-\beta}}:\beta\geq 0\}$
    (\Cref{prop:er-ovest}, \cite{AdW20c}).

    \State Compute $\mcl L_{\rho}$, the reusable marginal data structure for the family $\{G_{E\setminus S,e^{-\beta}}:\beta \geq0\}$
    (\Cref{prop:computable-heavy-envelope}).

    \State Set $\beta^{\star}\leftarrow \ln(16n^{n-2}/\epsilon^2)$.

    \State Compute an $e^2$-slowly-varying schedule
    $0=\beta_0<\cdots<\beta_l=\beta^{\star}$ of length $l=\toh(\sqrt n)$
    \Statex \hspace{2pt}
    (\Cref{prop:compute-adapt}, \cite{HW20c}).

    \For{$j=l,l-1,\ldots,0$}
        \State Construct from $\mcl H$ an oracle $O_{\mcl R}^{G_{\beta_j}}$
        giving query access to $2$-effective-resistance
        \Statex \hspace{2pt+\algorithmicindent}
        overestimates in
        $G_{\beta_j}=G_{E\setminus S,e^{-\beta_j}}$. (\Cref{prop:er-ovest,thm:eff-res-apersdw})
    \EndFor

    \Statex
    \Statex {\sc Quantum sampling}

    \State Prepare $\ket{S}$; by \Cref{lem:pointmass},
    $\norm{\ket{S}-\ket{\pi_{\beta_l}}}_2\le\epsilon/2$.

    \For{$j=l,l-1,\ldots,1$}
        \State Apply $U_j$, which transforms
        $\ket{\pi_{\beta_j}}$ to $\ket{\pi_{\beta_{j-1}}}$
        using \Cref{lem:q-sample-transform-walk} and \Cref{prop:qw-operator}.
    \EndFor

    \State Output the resulting state $\ket{\widetilde\pi}$.

    \end{algorithmic}
\end{algorithm}

We are now ready to prove \Cref{thm:w-optimal-query} by combining the preceding propositions.

\begin{proof}[Proof of \Cref{thm:w-optimal-query}]
We first analyze the preprocessing phase and then the q-sampling phase.

\paragraph{Preprocessing: correctness and failure probability.}
There are four randomized steps: computing the admissible tree, constructing $\mcl H$, constructing $\margds$, and computing the slowly-varying schedule. We allocate failure probability $\eta/4$ to each. By a union bound, all four succeed with probability at least $1-\eta$. Below, we condition on the success of the earlier steps when analyzing a later one.

\paragraph{Preprocessing: cost.}
\emph{Admissible tree.} By \Cref{prop:sample-admissible}, with probability at least $1-\eta/4$ we obtain an admissible spanning tree $S$, given as a list of edges and their weights, using $\toh(\sqrt{mn}\ln(1/\eta))$ queries to $\mcl O_G$. We convert $S$ to the canonical weighted edge-list encoding by replacing each returned pair $(u,i)$ with $(\min\{u,v\},\max\{u,v\},w(u,v))$, where $v$ is the $i$th neighbor of $u$, and sorting the $n-1$ edges lexicographically. This uses $O(n)$ queries. We also read the degree of every vertex with $n$ queries. Since rescaling weights does not change the adjacency structure, this degree information is valid for every $G_\beta$. Both costs are $O(n)\le O(\sqrt{mn})$ because $m\ge n$.

\emph{Effective resistances.} Apply \Cref{prop:er-ovest} with $F=S$ and failure probability $\eta/4$. This yields $\mcl H$ for the family $\{G_{E\setminus S,c}:c\ge 0\}$ using $\toh(\sqrt{mn}\ln(1/\eta))$ queries. For every $\beta\ge0$, the oracle $O_{\mcl R}^{G_\beta}$ for $G_\beta=G_{E\setminus S,e^{-\beta}}$ can then be built and evaluated with no further queries to $\mcl O_G$.

\emph{Heavy edges.} Apply \Cref{prop:computable-heavy-envelope} with $F=S$, threshold $\rho$, and failure probability $\eta/4$. This yields $\margds$ using $\toh(\sqrt{mn/\rho}\ln(1/\eta))$ queries. It provides a stored set $\mcl{E}_{\rho}$ of edges with their weights such that every edge outside $\mcl{E}_{\rho}$ has marginal less than $\rho$ in every $G_{E\setminus S,c}$ with $0<c\le1$. In particular, this holds for every $G_\beta$ with $\beta\ge0$, since $c=e^{-\beta}\in(0,1]$. 

\emph{Slowly-varying schedule.} Set $\beta^\star=\ln(32n^{n-2}/\epsilon^2)$. Apply \Cref{prop:compute-adapt} with failure probability $\eta/4$; $S$ is stored explicitly with its weights, as required. It returns an $e^2$-slowly-varying schedule $0=\beta_0<\cdots<\beta_l=\beta^\star$ of length $l\le\toh(\sqrt n)$, using $\toh(n\ln(1/\eta)\ln(1/\epsilon))$ applications of the operators $W_\beta$ from \Cref{prop:qw-operator}, at points $\beta$ chosen adaptively. We check that each application is available at every such $\beta$. The three hypotheses of \Cref{prop:qw-operator} for $G_\beta$ are satisfied: (i) $\mcl H$ yields coherent access to $2$-effective-resistance overestimates for all edges of $G_\beta$; (ii) the degree table is stored; (iii) $\mcl{E}_{\rho}$ is stored with their weights such that every edge outside this set has marginal less than $\rho$. Hence $W_\beta$ and $W_\beta^{-1}$ each cost $\toh(\sqrt{m\rho/n})$ queries, and the schedule computation costs
\[
    \toh\big(n\ln(1/\eta)\ln(1/\epsilon)\big)\cdot\toh\big(\sqrt{m\rho/n}\big)
    =\toh\big(\sqrt{mn\rho}\cdot\ln(1/\eta)\ln(1/\epsilon)\big).
\]
Finally, constructing and storing $O_{\mcl R}^{G_{\beta_j}}$ for each $j$ requires no further queries.

\emph{Total.} Summing the four contributions, the preprocessing succeeds with probability at least $1-\eta$ and uses
\[
    \toh\Big(\big(\sqrt{mn}+\sqrt{mn/\rho}+\sqrt{mn\rho}\big)\ln(1/\eta)\ln(1/\epsilon)\Big)
    =\toh\Big(\sqrt{mn/\rho}\cdot\ln(1/\eta)\ln(1/\epsilon)\Big)
\]
queries, since $\rho\le1$. The dependence on $\epsilon$ enters only through $\beta^\star$ and is suppressed in $\toh(\cdot)$, as in the statement.

\paragraph{Q-sampling: correctness.}
Condition on successful preprocessing. Since $S$ is already stored in canonical form, the basis state $\ket{S}$ is prepared without further queries. By \Cref{lem:pointmass} and the choice of $\beta^\star$, we have $\norm{\ket{S}-\ket{\pi_{\beta_l}}}_2\le\epsilon/2$.

For $j=l,\ldots,1$, let $U_j$ be the unitary from \Cref{lem:q-sample-transform-walk}, implemented with the walk operators $W_{\beta_j}$ and $W_{\beta_{j-1}}$ of \Cref{prop:qw-operator}. It satisfies $\norm{U_j\ket{\pi_{\beta_j}}-\ket{\pi_{\beta_{j-1}}}}_2\le \epsilon/(2l)$.
Let $U:=U_1U_2\cdots U_l$, so that $U_l$ is applied first. Telescoping gives
\[
    U\ket{\pi_{\beta_l}}-\ket{\pi_{\beta_0}}
    =\sum_{j=1}^{l}U_1\cdots U_{j-1}\Big(U_j\ket{\pi_{\beta_j}}-\ket{\pi_{\beta_{j-1}}}\Big).
\]
Since each $U_i$ is unitary, this yields
\[
    \norm{U\ket{\pi_{\beta_l}}-\ket{\pi_{\beta_0}}}_2
    \le\sum_{j=1}^{l}\norm{U_j\ket{\pi_{\beta_j}}-\ket{\pi_{\beta_{j-1}}}}_2
    \le l\cdot\frac{\epsilon}{2l}=\frac{\epsilon}{2}.
\]
The algorithm applies $U$ to $\ket{S}$ instead of $\ket{\pi_{\beta_l}}$. Since $\beta_0=0$ we have $\pi_{\beta_0}=\pi$, and by unitarity of $U$ and the triangle inequality,
\[
    \norm{U\ket{S}-\ket{\pi}}_2
    \le\norm{U\ket{S}-U\ket{\pi_{\beta_l}}}_2+\norm{U\ket{\pi_{\beta_l}}-\ket{\pi_{\beta_0}}}_2
    \le\frac{\epsilon}{2}+\frac{\epsilon}{2}=\epsilon.
\]
Hence the output $\ket{\widetilde\pi}=U\ket{S}$ satisfies $\norm{\ket{\widetilde\pi}-\ket{\pi}}_2\le\epsilon$.

\paragraph{Q-sampling: cost.}
Because the schedule is $e^2$-slowly varying, consecutive states have constant overlap, $\braket{\pi_{\beta_j}}{\pi_{\beta_{j-1}}}\ge e^{-1}$. By \Cref{prop:qw-operator}, each walk operator has phase gap $\wtd\Omega(1/\sqrt n)$. Therefore, each $U_j$ (fixed-point search to accuracy $\epsilon/(2l)$) uses $\toh(\sqrt n\ln(l/\epsilon))$ applications of $W_{\beta_j}^{\pm1}$ and $W_{\beta_{j-1}}^{\pm1}$. Each application costs $\toh(\sqrt{m\rho/n})$ queries. Since $l\le\toh(\sqrt n)$, the total is
\[
    l\cdot\toh(\sqrt n\ln(l/\epsilon))\cdot\toh(\sqrt{m\rho/n})
    =\toh\big(\sqrt n\cdot\sqrt n\cdot\sqrt{m\rho/n}\,\ln(1/\epsilon)\big)
    =\toh\big(\sqrt{mn\rho}\cdot\ln(1/\epsilon)\big).
\]

\paragraph{Reflection.}
A reflection about $\ket{\pi}=\ket{\pi_{\beta_0}}$ is implemented by phase estimation of $W_{\beta_0}$ with precision equal to its phase gap $\wtd\Omega(1/\sqrt n)$. This takes $\toh(\sqrt n)$ applications of $W_{\beta_0}^{\pm1}$, each costing $\toh(\sqrt{m\rho/n})$ queries, for a total of $\toh(\sqrt{m\rho})$ queries.

Combining the three parts proves the theorem.
\end{proof}

  \subsection{Reusable Effective Resistance Data Structures from Sparsifiers}
  In this section we prove \Cref{prop:er-ovest}. Fix a weighted graph $G=(V,E,w)$ and a set $F\subseteq E$, and let $H$ denote the subgraph of $G$ with edge set $E\setminus F$. By a slight abuse of notation, we also write $F$ for the weighted subgraph of $G$ with edge set $F$. Since $F$ is known, oracle access to $H$ can be simulated using a single query to $\orcl$, by setting the weight of any queried edge in $F$ to zero.

The key observation is that, as $c\geq0$ varies, only the edges of $H$ are rescaled. Hence, it suffices to compute a spectral sparsifier $\wtd{H}$ of $H$ once. For every $c>0$, the graph $F+c\cdot\wtd{H}$ is then a spectral sparsifier of $G_{E\setminus F,c}$. Since both $F$ and $\wtd{H}$ are stored explicitly, this graph can be constructed for any $c\geq0$ without further queries to $\orcl$. Moreover, oracle access to $F+c\cdot\wtd{H}$ can be simulated directly from the stored representations of $F$ and $\wtd{H}$, and therefore also requires no further queries to $\orcl$.

For any $c\geq0$, we use the stored data to construct an oracle $O_{\mcl R}^{G_{E\setminus F,c}}$ acting as
\[
    O_{\mcl R}^{G_{E\setminus F,c}}
    \ket{u}\ket{v}\ket{0}
    =
    \ket{u}\ket{v}
    \ket{\tdreff^{G_{E\setminus F,c}}(u,v)},
\]
where $\tdreff^{G_{E\setminus F,c}}(u,v)$ is a $2$-effective-resistance overestimate in \mbox{$G_{E\setminus F,c}$}. Thus, after the one-time preprocessing step, such an oracle can be constructed for every graph in the family \mbox{$\{G_{E\setminus F,c}:c\geq0\}$}, and subsequently queried, without any further queries to $\orcl$.

The reusable preprocessing procedure is given in
\Cref{alg:res-eff-ds}.

\begin{algorithm}[H]
\caption{\sc Reusable effective-resistance preprocessing}
\label{alg:res-eff-ds}
\begin{algorithmic}[1]
\Require{$G=(V,E,w)$ with $n$ vertices and $m$ edges, and $F\subseteq E$.}
\Ensure{A reusable data structure for the family
$\{G_{E\setminus F,c}:c\geq0\}$.}
\State Let $H$ be the subgraph of $G$ with edge set $E\setminus F$.
\State Compute a $1/10$-spectral sparsifier $\wtd{H}$ of $H$
using \Cref{thm:spars-apersdw}.
\State Store $\wtd{H}$ and $F$.
\State \Return $\mathcal H:=(\wtd{H},F)$.
\end{algorithmic}
\end{algorithm}

We first record a property used to analyze \Cref{alg:res-eff-ds}.

\begin{lemma}[\sc Sparsifier reusability]
\label{lem:spars-reuse}
Let $G=(V,E,w)$ be a weighted graph and fix $F\subseteq E$. Let $H$ be
the subgraph of $G$ with edge set $E\setminus F$, and suppose that
$\wtd{H}$ is an $\epsilon$-spectral sparsifier of $H$. Then, for
every $c\geq0$,
\[
F+c\cdot\wtd{H}
\]
is an $\epsilon$-spectral sparsifier of $G_{E\setminus F,c}$.
\end{lemma}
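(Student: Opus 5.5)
The argument rests on linearity of the Laplacian in the edge weights. First decompose the Laplacian of the rescaled graph. By \Cref{Def:del-scale}, $G_{E\setminus F,c}$ has edge set $E$, with weight $w(e)$ on edges of $F$ and weight $c\,w(e)$ on edges of $E\setminus F$. Since $L$ is a sum of rank-one terms weighted by edge weights, this gives
\[
L_{G_{E\setminus F,c}} = L_F + c\,L_H .
\]
Similarly, $F+c\wtd H$ is defined via the scaled addition of \Cref{Def:del-scale}. Since $\wtd H$ is a reweighted subgraph of $H$, its edge set lies in $E\setminus F$, which is disjoint from $F$. So no weights are summed, and again
\[
L_{F+c\wtd H}=L_F+c\,L_{\wtd H}.
\]
In particular, $F+c\wtd H$ is a reweighted subgraph of $G_{E\setminus F,c}$: its edges are contained in $E$ and its weights are nonnegative. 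This is what \Cref{def:spec-sparsf} requires.

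Next, fix $x\in\mathbb{R}^V$ and compute
\[
\bigl|x^\top L_{F+c\wtd H}x - x^\top L_{G_{E\setminus F,c}}x\bigr| = c\,\bigl|x^\top L_{\wtd H}x - x^\top L_H x\bigr| \le c\,\epsilon\, x^\top L_H x ,
\]
using that $\wtd H$ is an $\epsilon$-sparsifier of $H$ and that $c\ge 0$. Finally, \Cref{fact:psd} gives $x^\top L_F x\ge 0$, so
\[
c\,x^\top L_H x \le x^\top L_F x + c\,x^\top L_H x = x^\top L_{G_{E\setminus F,c}}x .
\]
Combining the two inequalities yields the $\epsilon$-sparsifier bound for every $c\ge0$.

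There is no real obstacle. The only care needed is in the bookkeeping: checking that the scaled-addition definition introduces no overlapping edges, and handling the degenerate case $c=0$. When $c=0$, both graphs have Laplacian $L_F$, so the bound holds trivially.
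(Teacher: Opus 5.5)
Your proof is correct and follows essentially the same route as the paper: linearity of the Laplacian in the edge weights gives $L_{G_{E\setminus F,c}}=L_F+c\,L_H$ and $L_{F+c\wtd{H}}=L_F+c\,L_{\wtd{H}}$, and $L_F\succeq 0$ absorbs the error term. The paper packages this as a scaling step ($c\wtd{H}$ sparsifies $cH$) followed by a separate appendix lemma on adding a disjoint graph $F$ to a sparsifier, whereas you carry out the same inequality inline.
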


\begin{proof}
Deferred to \Cref{app:sec-spars}.
\end{proof}

\begin{proposition}[\sc Reusable effective-resistance data structure]
\label{prop:er-ovest}
    Let $G=(V,E,w)$ be a weighted graph with $n$ vertices and $m$ edges,
    and let $F\subseteq E$ be stored explicitly together with its edge
    weights. For any $\eta>0$, there is a quantum algorithm that, with probability at least $1-\eta$, uses
    $\toh(\sqrt{mn}\cdot\ln(1/\eta))$ queries to the adjacency-list
    oracle $\orcl$ to construct a data structure $\mathcal H$ for the family
    of rescaled graphs $\{G_{E\setminus F,c}:c\geq0\}$.

    For any $c\geq0$, $\mathcal H$ can be used to construct an oracle
    $O_{\mathcal R}^{G_{E\setminus F,c}}$ giving access to
    $2$-effective-resistance overestimates in $G_{E\setminus F,c}$.
    Constructing this oracle, as well as querying it, requires no further
    queries to $\orcl$.
\end{proposition}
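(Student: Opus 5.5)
The plan is to analyze \Cref{alg:res-eff-ds} directly, in three steps: query cost, correctness of the sparsifier for every $c$, and conversion of the sparsifier into the oracle via \Cref{thm:eff-res-apersdw}.

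\textbf{Query cost.} Since $F$ is stored explicitly with its weights, one query to $H=G\setminus F$ is simulated by one query to $\orcl$. We answer $\ket{u}\ket{i}$ by querying $\orcl$ and then checking membership of the returned edge in the stored $F$. If the edge lies in $F$, we report weight $0$; otherwise we report its weight in $G$. Zero-weight edges contribute nothing to $L_H$, so this simulates the Laplacian of $H$ exactly. The degree query is passed through unchanged, since padding with zero-weight entries is harmless for the Apers--de Wolf sampling procedure. If one insists on true deletion, it costs at most an extra $O(\log n)$ factor via binary search over the sorted lists.

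We then run \Cref{thm:spars-apersdw} on $H$ with $\epsilon=1/10$ and failure probability $\eta$. This costs $\toh(\sqrt{mn}\ln(1/\eta))$ queries and outputs $\wtd H$ with $\toh(n)$ edges. We store $\mcl H=(\wtd H,F)$; the set $F$ has at most $m$ edges, but in our application $|F|=n-1$, and in any case storing it costs no queries.

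\textbf{Correctness for each $c$.} Fix $c\ge 0$. By \Cref{lem:spars-reuse}, $K_c:=F+c\wtd H$ is a $1/10$-spectral sparsifier of $G_{E\setminus F,c}$. The underlying identities are
\[
L_{G_{E\setminus F,c}}=L_F+cL_H, \qquad L_{K_c}=L_F+cL_{\wtd H}.
\]
Their difference is $c(L_{\wtd H}-L_H)$, which is bounded by $\epsilon\, c L_H\preceq \epsilon L_{G_{E\setminus F,c}}$ using \Cref{fact:psd}. The graph $K_c$ is built entirely from stored data, with no queries to $\orcl$. We feed it to \Cref{thm:eff-res-apersdw} with $\epsilon=1/10$. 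This yields an oracle $O_{\mcl R}^{G_{E\setminus F,c}}$ that is a $(1+3/10)$-effective-resistance overestimate, hence a $2$-overestimate. Constructing and querying it uses only the stored $K_c$, so again no queries to $\orcl$ are made.

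\textbf{Main obstacle.} The step needing care is the degenerate case $c=0$ and connectivity. When $c=0$ we have $G_{E\setminus F,0}=F$, and $K_0=F$ exactly, so the guarantee is trivially exact. Effective resistances between disconnected vertices are infinite, or defined through $L^+$, but they agree in both graphs because the kernels coincide under a spectral approximation. A second point to verify is that \Cref{thm:eff-res-apersdw} only needs the sparsifier and never queries $G$ itself, which holds as stated. Finally, the success probability $1-\eta$ comes solely from the single sparsification call.
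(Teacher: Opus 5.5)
Your proposal is correct and follows the paper's proof essentially step for step. You simulate $H=G\setminus F$ by zeroing the weights of the stored $F$-edges, sparsify $H$ once with $\epsilon=1/10$ via \Cref{thm:spars-apersdw}, invoke \Cref{lem:spars-reuse} so that $F+c\wtd H$ sparsifies $G_{E\setminus F,c}$ for every $c$, and apply \Cref{thm:eff-res-apersdw} to obtain a $(1+3/10)\le 2$ overestimate with no further queries, handling $c=0$ directly since $G_{E\setminus F,0}=F$. Your extra remarks (the Laplacian identity behind \Cref{lem:spars-reuse}, the binary-search option for true deletion, and the coincidence of kernels under spectral approximation) are consistent with, and somewhat more explicit than, the paper's argument.
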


\begin{proof}
Run \Cref{alg:res-eff-ds}. Let $H$ be the subgraph of $G$ with edge set $E\setminus F$. Since $F$ is known, oracle access to $H$ can be simulated using a single query to $\orcl$, by setting the weight of any queried edge in $F$ to zero. Hence, by \Cref{thm:spars-apersdw}, with probability at least $1-\eta$, we compute a $1/10$-spectral sparsifier $\wtd{H}$ of $H$ using $\toh(\sqrt{mn}\ln(1/\eta))$ queries to $\orcl$.

Fix any $c>0$. By \Cref{lem:spars-reuse}, $F+c\cdot\wtd{H}$ is a $1/10$-spectral sparsifier of $G_{E\setminus F,c}$. Since $\wtd{H}$ and $F$ are stored explicitly, $F+c\cdot\wtd{H}$ can be constructed without further queries to $\orcl$. Moreover, oracle access to $F+c\cdot\wtd{H}$ can be simulated directly from the stored representations of $F$ and $\wtd{H}$, and hence also requires no further queries to $\orcl$.

We can therefore apply \Cref{thm:eff-res-apersdw} to the graph $F+c\cdot\wtd{H}$ with $\epsilon=1/10$. This gives, an oracle providing $2$-effective resistance overestimates in $G_{E\setminus F,c}$, using $\toh(n)$ time and no further queries to $\orcl$. Thus we can prepare
\[
O_{\mcl R}^{G_{E\setminus F,c}}
\ket{u}\ket{v}\ket{0}
=
\ket{u}\ket{v}
\ket{\tdreff^{G_{E\setminus F,c}}(u,v)}.
\]

For $c=0$, we have $G_{E\setminus F,0}=F$. Since $F$ is stored explicitly, the corresponding effective-resistance oracle can be constructed directly, without any further queries to $\orcl$.

Therefore, after the initial construction of $\mathcal H$, constructing and querying $O_{\mcl R}^{G_{E\setminus F,c}}$ for any $c\geq0$ requires no further queries to $\orcl$.
\end{proof}

  \subsection{Reusable Marginal Data Structure from }
  \label{Sec:heavy-edge}
  In this section we prove \Cref{prop:computable-heavy-envelope}. Although the marginals, or equivalently the leverage scores, vary across the graphs in the slowly varying sequence $\{\beta_i\}_i$, we show that all edges that can attain a large marginal can be identified in a single preprocessing step. This allows us to prepare the marginal state efficiently for each graph $G_{\beta}$, a key primitive for implementing the quantum walk operator in \Cref{Sec:coherentIso}.

\begin{definition}[\sc Heavy edges]
    \label{def:heavy-edges}
    Let $G=(V,E,w)$ be a weighted graph with spanning-tree distribution $\pi$, and let $p(e)=\Pr_{T\sim\pi}(e\in T)$ denote the marginal of an edge $e\in E$. For $\rho\in(0,1]$, we call $e$ \emph{$\rho$-heavy} if $p(e)\ge \rho$, or equivalently if its leverage score $\ell(e)$ satisfies $\ell(e)\ge\rho$. We denote the set of all $\rho$-heavy edges by $\heavy_\rho(G)$.
\end{definition}

Fix any edge set $F \subseteq E$. The set of $\rho$-heavy edges in $G_{E\setminus F, c}$ may vary with $c$. We now prove a monotonicity property of marginals of edges in each graph in the family $G_{E\setminus F, c}$. Consequently, the lemma shows that these sets admit a single static envelope: after storing the edges of $F$ together with the non-$F$ edges that are heavy in $G_{E\setminus F,1}$, the resulting set contains every edge that can become $\rho$-heavy in some graph in the family $\{G_{E\setminus F,c}:0<c\leq 1\}$. We refer to this set as the \emph{heavy-edge envelope}.

\begin{lemma}[\sc Marginal monotonicity under rescaling]
\label{lem:schedule-wide-heavy}
    Let $G=(V,E,w)$ be a weighted graph and fix $F\subseteq E$. For any
    $0<c_1\leq c_2\leq1$ and $e\in E$,
    \[
    p_{G_{E\setminus F,c_1}}(e)
    \geq
    p_{G_{E\setminus F,c_2}}(e)
    \quad\text{if }e\in F,
    \qquad
    p_{G_{E\setminus F,c_1}}(e)
    \leq
    p_{G_{E\setminus F,c_2}}(e)
    \quad\text{if }e\in E\setminus F.
    \]
    Consequently, for any $\rho\in[0,1]$, $\heavy_\rho({G_{E\setminus F,c_1}}) \subseteq F\cup \heavy_\rho({G_{E\setminus F,c_2}})$.
\end{lemma}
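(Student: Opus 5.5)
We will prove the monotonicity by exhibiting the marginal of a fixed edge as a monotone function of the scaling parameter $c$, using the Gibbs-type form of the spanning tree distribution. For $c>0$, the graph $G_{E\setminus F,c}$ assigns each spanning tree $T$ the weight $c^{|T\setminus F|}\,w(T)$. So its distribution is $\pi_c(T)\propto c^{k(T)}w(T)$, where $k(T):=|T\setminus F|$. Writing $c=e^{-\beta}$, this is precisely a Gibbs family in $\beta$ with Hamiltonian $k$ and base weight $w$.

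The key step is to differentiate the marginal of an edge $e$ with respect to $\beta$. For the indicator $\mathbf 1[e\in T]$, the standard exponential-family identity gives
\[
\frac{d}{d\beta}\,\E_{\pi_\beta}\bigl[\mathbf 1[e\in T]\bigr] = -\operatorname{Cov}_{\pi_\beta}\bigl(\mathbf 1[e\in T],\,k(T)\bigr).
\]
We then write $k(T)=\sum_{f\notin F}\mathbf 1[f\in T]$ and expand the covariance edge by edge. The two cases are handled by negative correlation.

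\textbf{Case $e\notin F$.} The covariance equals
\[
\operatorname{Var}(\mathbf 1[e\in T])+\sum_{f\notin F,\,f\ne e}\operatorname{Cov}(\mathbf 1[e\in T],\mathbf 1[f\in T]).
\]
This is cleaner than the direct sum: since $|T|=n-1$ is constant, we have $\sum_{f\in E}\operatorname{Cov}(\mathbf 1_e,\mathbf 1_f)=0$. Hence the covariance equals $-\sum_{f\in F}\operatorname{Cov}(\mathbf 1_e,\mathbf 1_f)$. By the negative correlation of spanning tree measures (Rayleigh / strongly Rayleigh property, \cite{BBL09j}), each $\operatorname{Cov}(\mathbf 1_e,\mathbf 1_f)\le 0$ for $f\neq e$, and $e\notin F$ ensures $f\ne e$. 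So the covariance is $\ge 0$, the derivative in $\beta$ is $\le 0$, and the marginal increases with $c$, as claimed.

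\textbf{Case $e\in F$.} Every $f\notin F$ differs from $e$, so the covariance is a sum of terms $\operatorname{Cov}(\mathbf 1_e,\mathbf 1_f)\le 0$. Thus the derivative in $\beta$ is $\ge 0$, and the marginal decreases with $c$.

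Integrating over $\beta\in[-\ln c_1,\,-\ln c_2]$, where $\pi_\beta$ is smooth for $c>0$, gives both inequalities.

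For the consequence, take $e\in\heavy_\rho(G_{E\setminus F,c_1})$ with $e\notin F$. Then $\rho\le p_{c_1}(e)\le p_{c_2}(e)$, so $e\in\heavy_\rho(G_{E\setminus F,c_2})$.

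The main obstacle is justifying pairwise negative correlation for weighted spanning tree measures. I would cite it as a known fact: either from \cite{BBL09j}, or directly from the transfer-current theorem, which gives $\Pr[e,f\in T]-\Pr[e]\Pr[f]=-w(e)w(f)\,\bigl(b_e^\top L^+ b_f\bigr)^2\le0$ via \cite{LP17ba}. An alternative, purely electrical route would be Rayleigh monotonicity: lowering the weights outside $F$ raises the effective resistance across an $F$-edge while its own weight is unchanged. That route is less direct for edges outside $F$, where the weight and the resistance move in opposite directions, which is why the covariance argument is preferred.
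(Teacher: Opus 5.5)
Your argument is correct, but it takes a genuinely different route from the paper.

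\textbf{Your route.} You argue probabilistically, in three steps:
\begin{itemize}
\item the exponential-family identity $\frac{d}{d\beta}\Pr_{\pi_\beta}[e\in T]=-\mathrm{Cov}_{\pi_\beta}(\mathbf 1_e,|T\setminus F|)$;
\item homogeneity ($|T|=n-1$, hence $\sum_{f\in E}\mathrm{Cov}(\mathbf 1_e,\mathbf 1_f)=0$), which moves the sum onto $F$ when $e\notin F$;
\item pairwise negative correlation of the weighted spanning-tree measure.
\end{itemize}

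\textbf{The paper's route.} The paper argues spectrally. It writes $L_{G_{E\setminus F,c}}=L_F+cL_{G\setminus F}$ and uses only $L_F\succeq 0$ to obtain two inequalities:
\begin{itemize}
\item $L_{G_{E\setminus F,c_1}}\preceq L_{G_{E\setminus F,c_2}}$, which handles $e\in F$, whose weight is fixed;
\item $L_{G_{E\setminus F,c_1}}\succeq \frac{c_1}{c_2}L_{G_{E\setminus F,c_2}}$, which handles $e\notin F$.
\end{itemize}
It then inverts on $\mathds{1}^\perp$, multiplies by the edge weight, and reads marginals as leverage scores via \Cref{lem:marg}.

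\textbf{Comparison.} You expected the electrical route to be awkward for $e\notin F$, because weight and resistance move in opposite directions. The second scaling inequality dispatches this: the effective resistance grows by at most the factor $c_2/c_1$, which exactly offsets the weight loss. So the paper's proof is shorter and needs only order-reversal of the pseudoinverse. Your pairwise negative correlation is itself usually derived from the transfer-current theorem or Rayleigh monotonicity, which is essentially the same linear algebra. What your route buys is generality. It uses only homogeneity and pairwise negative correlation, so it applies verbatim to exponential tilts of any homogeneous negatively correlated (e.g.\ strongly Rayleigh) measure. It also gives an explicit formula for the derivative of the marginal.

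One small slip: since $c_1\le c_2$, the integration range is $\beta\in[-\ln c_2,-\ln c_1]$, not $[-\ln c_1,-\ln c_2]$.
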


\begin{proof}
    We have $L_{G_{E\setminus F,c}}=L_F+cL_{G\setminus F}$. Let
    $0<c_1\leq c_2\leq1$.

    For $e\in E\setminus F$, $p_{G_{E\setminus F,c}}(e) =\ell_{G_{E\setminus F,c}}(e) =c\cdot w(e)\cdot b_e^\top L_{G_{E\setminus F,c}}^+b_e$. Since $L_F$ is positive semidefinite (\Cref{fact:psd}),
    \[
        x^\top(L_F+c_1L_{G\setminus F})x
        \geq
        \frac{c_1}{c_2}
        x^\top(L_F+c_2L_{G\setminus F})x.
    \]
    Since $c_1,c_2>0$, both $G_{E\setminus F,c_1}$ and
    $G_{E\setminus F,c_2}$ have the same edge support as $G$, and are therefore connected. Hence their Laplacians are positive definite on $\mathds{1}^\perp$, so the preceding quadratic-form inequality may be inverted on $\mathds{1}^\perp$. Taking $x=b_e$ gives
    \[
        b_e^\top L_{G_{E\setminus F,c_1}}^+b_e
        \leq
        \frac{c_2}{c_1}
        b_e^\top L_{G_{E\setminus F,c_2}}^+b_e.
    \]
    Multiplying by $c_1w(e)$ yields $p_{G_{E\setminus F,c_1}}(e) \leq p_{G_{E\setminus F,c_2}}(e)$.

    For $e\in F$, its weight is unchanged, and $L_F+c_1L_{G\setminus F}\leq L_F+c_2L_{G\setminus F}$. Hence, after inversion, $p_{G_{E\setminus F,c_1}}(e) \geq p_{G_{E\setminus F,c_2}}(e)$. Therefore, $\heavy_\rho(G_{E\setminus F,c_1}) \subseteq F\cup\heavy_\rho(G_{E\setminus F,c_2})$.
\end{proof}

Consequently, since $G_{E\setminus F,1}=G$, the set
$\heavy_\rho:=F\cup\heavy_\rho(G)$ contains
$\heavy_\rho(G_{E\setminus F,c})$ for every $0<c\leq1$.

The previous lemma gives a family-wide envelope defined using exact marginals, but this envelope is not directly computable. We therefore replace the $\rho$-heavy set of $G$ by the estimated heavy set obtained from effective-resistance overestimates, yielding a computable envelope with the same family-wide containment property.

\begin{lemma}[\sc Heavy set from effective-resistance overestimates]
\label{lem:heavy-set-reff-overestimates}
    Let $G=(V,E,w)$ be a connected weighted graph with $n$ vertices and $m$ edges, and let $\rho\in(n/m,1]$. Suppose we have coherent query access to $\lambda$-effective-resistance overestimates $\tdreff^G(e)$ for all $e\in E$ where $\lambda\ge1$ is an absolute constant. Define
    $$\widehat{\heavy}_\rho(G):=\{e\in E:w(e)\cdot \tdreff^G(e)\ge\rho\}.$$
    Then $\heavy_\rho(G) \subseteq\widehat{\heavy}_\rho(G)$ and
    $|\widehat{\heavy}_\rho(G)|\le \lambda(n-1)/\rho$.
    Moreover, for any \mbox{$\eta>0$}, we can compute $\widehat{\heavy}_\rho(G)$ with probability at least $1-\eta$ and uses \mbox{$\toh(\sqrt{mn/\rho}\cdot\ln(1/\eta))$} queries to the adjacency-list oracle $\orcl$.
\end{lemma}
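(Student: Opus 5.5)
The plan has three parts: a containment argument, a size bound, and the algorithmic computation, where the query bound is the real content.

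\emph{Containment.} Let $e\in\heavy_\rho(G)$. By \Cref{lem:marg}, $\ell(e)=w(e)\reff^G(e)\ge\rho$. Since $\tdreff^G(e)\ge\reff^G(e)$, we get $w(e)\tdreff^G(e)\ge\ell(e)\ge\rho$, so $e\in\widehat{\heavy}_\rho(G)$.

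\emph{Size.} The upper bound $\tdreff^G\le\lambda\reff^G$ together with Foster's theorem (\Cref{fact:foster}) gives
\[
\sum_{e\in E}w(e)\tdreff^G(e)\le\lambda\sum_e\ell(e)=\lambda(n-1).
\]
Each edge of $\widehat{\heavy}_\rho(G)$ contributes at least $\rho$ to this sum, so Markov's inequality yields $|\widehat{\heavy}_\rho(G)|\le\lambda(n-1)/\rho$.

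\emph{Computation.} I would find the marked edges one at a time by Grover search over adjacency-list positions $(u,i)$ with $u\in V$ and $i\in[\deg(u)]$.
\begin{itemize}
\item First read all $n$ degrees and store their prefix sums, at cost $n\le\sqrt{mn/\rho}$ queries since $\rho\le1\le m/n$. This lets us index the $2m$ positions uniformly, with no further queries.
\item A position is marked if its edge $e=\{u,v\}$ satisfies $w(e)\tdreff^G(u,v)\ge\rho$ and has not yet been found. Checking this costs one query to $\orcl$ for $(v,w(e))$, plus a QRAM lookup of $\tdreff^G(u,v)$ and of the already-found set.
\item Each marked edge appears at two positions, so the number of marked positions is $2K$ with $K\le\lambda(n-1)/\rho$.
\item Use the standard ``find all $K$ marked items'' routine: successive Grover searches with exponentially decreasing guesses of the number of remaining solutions, each found edge being added to QRAM.
\end{itemize}
This costs $\sum_{j=1}^{K}O(\sqrt{m/j})=O(\sqrt{mK})=O(\sqrt{mn/\rho})$ queries. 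To reach success probability $1-\eta$, I would amplify each search, or use a final verification round that confirms no marked position remains, at a multiplicative $O(\ln(K/\eta))$ overhead absorbed into $\toh$.

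\emph{Main obstacle.} The delicate step is the unknown-count, find-all search, which needs care to avoid a $\log$-factor blowup and to certify termination. I would handle it with the known result (e.g.\ as used in \cite{AdW20c}) that all $K$ solutions among $N$ items can be found with $O(\sqrt{NK})$ queries with constant success probability, boosted by repetition. For the final check, a single Grover search with $O(\sqrt{m})$ queries confirms emptiness, and this is dominated by the main cost.
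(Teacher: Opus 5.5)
Your proposal is correct and follows the paper's proof essentially step for step. Containment comes from $\tdreff^G\ge\reff^G$, the size bound from $\tdreff^G\le\lambda\reff^G$ together with Foster's theorem, and the computation is a repeated Grover search over the $2m$ adjacency-list positions in which each edge of $\widehat{\heavy}_\rho(G)$ contributes two marked positions. The only difference is that you spell out details the paper delegates to \Cref{lem:repeated-grover}: degree prefix sums for indexing, the unknown-count find-all routine, and boosting the success probability to $1-\eta$.
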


\begin{proof}
    For any $e\in E$, $\ell(e)=w(e)\reff^G(e)$. Since
    $\tdreff^G(e)\geq\reff^G(e)$, every $e\in\heavy_\rho(G)$ satisfies $w(e)\tdreff^G(e)\geq\ell(e)\geq\rho$, and hence $\heavy_\rho(G)\subseteq\widehat{\heavy}_\rho(G)$.

    Moreover,
    \[
        \rho|\widehat{\heavy}_\rho(G)|
        \leq
        \sum_{e\in\widehat{\heavy}_\rho(G)}w(e)\tdreff^G(e)
        \leq
        \lambda\sum_{e\in E}\ell(e)
        =
        \lambda(n-1),
    \]
    where the last equality follows from \Cref{fact:foster}.

    Finally, we apply repeated Grover search over the $2m$ adjacency-list positions. Each undirected edge $e=\{u,v\}$ appears twice, once as a position $(u,i)$ in the adjacency list of $u$ and once as a position $(v,j)$ in the adjacency list of $v$. For a position $(u,i)$, one query to $\orcl$ returns the corresponding edge $e=\{u,v\}$ and its weight, and we mark the position if $w(e)\tdreff^G(e)\geq\rho$. Hence every edge in $\widehat{\heavy}_\rho(G)$ contributes exactly two marked positions. \Cref{lem:repeated-grover} finds all such positions with probability at least $1-\eta$ using
    \[
        \toh\left(
        \sqrt{2m \cdot \frac{2\lambda(n-1)}{\rho}}
        \cdot\ln(1/\eta)
        \right)
        =
        \toh\left(
        \sqrt{\frac{mn}{\rho}}
        \cdot\ln(1/\eta)
        \right)
    \]
    queries to the adjacency-list oracle $\orcl$.
\end{proof}

We now show that once can use this heavy-edge envelope to prepare the marginal-state efficiently.

\begin{lemma}[\sc Marginal-state preparation]
\label{lem:marginal-state-preparation}
Let $G=(V,E,w)$ be a connected weighted graph with $n$ vertices and
$m$ edges, and let $\rho\in[n/m,1]$. The marginal state $\ket{\mu_G}$
corresponding to
$\wtd p_G(e):=w(e)\tdreff^G(e)$ can be prepared using
$\toh(\sqrt{m\rho/n})$ queries to $\mcl O_G$, provided that we are
given:
\begin{itemize}
    \item coherent query access to $2$-effective-resistance overestimates $\tdreff^G(e)$ for all edges of $G$;
    \item the degree of every vertex of $G$; and
    \item a stored set $\mcl E_\rho\subseteq E$ such that every edge outside $\mcl E_\rho$ has marginal less than $\rho$, with each edge in $\mcl E_\rho$ stored together with its weight and its two adjacency-list positions.
\end{itemize}
\end{lemma}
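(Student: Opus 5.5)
We prepare $\ket{\mu_G}$ by rejection sampling with exact amplitude amplification, starting from a $\rho$-heavy proposal state that costs $\toh(1)$ queries. Recall $t_G(e)=\lceil \frac{m}{2(n-1)}\wtd p_G(e)\rceil$ with $\wtd p_G(e)=w(e)\tdreff^G(e)$, and $Z:=\sum_e t_G(e)\le m+\frac{m}{2(n-1)}\cdot 2(n-1)=2m$ by the overestimate property and \Cref{fact:foster}. Set $\tau:=\lceil \frac{m\rho}{n-1}\rceil\ge 1$; note $\rho\ge n/m$ makes $\tau=\Theta(m\rho/n)$. For every $e\notin\mcl E_\rho$ we have $\ell(e)<\rho$, hence $\wtd p_G(e)\le 2\ell(e)<2\rho$ and $t_G(e)\le\lceil \frac{m\rho}{n-1}\rceil=\tau$. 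For $e\in\mcl E_\rho$ the weight and $\tdreff^G(e)$ are available without queries, so $t_G(e)$ is known classically.

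\emph{Proposal.} We define $r(e):=t_G(e)$ for $e\in\mcl E_\rho$ and $r(e):=\tau$ otherwise, so that $r(e)\ge t_G(e)$ for every $e$. The proposal state is $\ket{\phi}\propto\sum_e\sqrt{r(e)}\ket{e}$, with normalization $R:=\sum_e r(e)\le Z+m\tau$. We prepare it in two branches. With a first qubit in amplitude $\sqrt{\sum_{e\in\mcl E_\rho}t_G(e)/R}$, we load a stored-distribution state over $\mcl E_\rho$ from QRAM (zero queries), using the stored adjacency position to output the edge in its canonical form. Otherwise, we prepare the uniform superposition over the $2m$ adjacency-list positions using the stored degrees and prefix sums. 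One query converts each position $(u,i)$ into the edge $\{u,v\}$ together with its weight. We then flag positions whose edge lies in $\mcl E_\rho$ (a QRAM lookup) and the position's orientation; we keep only the branch where the edge is not in $\mcl E_\rho$ and the position is the canonical one (say $u<v$). To make this part exact, we fold these conditions into the amplification flag rather than discarding them. The overall state then has the form $\sqrt{p}\ket{\phi}\ket{\text{good}}+\ldots$ where $p=\Omega(1)$, since the only loss is the factor $1/2$ from orientation, and the cost so far is $O(1)$ queries.

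\emph{Rejection step.} On the good branch, we compute $t_G(e)$ via one query (for the weight) and one call to $O_{\mcl R}$. We then rotate an ancilla to amplitude $\sqrt{t_G(e)/r(e)}$, which is well defined since $r(e)\ge t_G(e)$. The resulting success amplitude is $\sqrt{Z/R}\cdot\Omega(1)$, and the post-selected state is exactly $\ket{\mu_G}$. We uncompute the weight and the resistance (one more query).

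\emph{Amplification and cost.} Since $Z\ge m$ (each $t_G(e)\ge1$) and $R\le Z+m\tau$, we get $R/Z\le 1+\tau=O(m\rho/n)$. Amplitude amplification therefore needs $O(\sqrt{m\rho/n})$ rounds of $O(1)$ queries each. To obtain exact (or $\epsilon$-precise, up to logarithmic factors) preparation, we use exact amplitude amplification; the success amplitude $\sqrt{Z/R}$ is unknown, so we first estimate it, or equivalently compute $Z$. Computing $Z$ is the main obstacle. Our first attempt is to note that $\sum_{e\notin\mcl E_\rho}t_G(e)$ is not stored, so we would instead use fixed-point amplification~\cite{Gro05j} with the lower bound $Z/R=\Omega(n/(m\rho))$. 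This gives $\toh(\sqrt{m\rho/n})$ queries with logarithmic dependence on precision. That is enough for the walk operator, since errors accumulate only polylogarithmically after rescaling $\epsilon$.
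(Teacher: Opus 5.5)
Your overall strategy is the paper's: you use the same cap $\tau=\lceil m\rho/(n-1)\rceil$ for edges outside $\mcl E_\rho$, the same rotation by $\sqrt{t_G(e)/r(e)}$, and the same $O(\sqrt{R/Z})=O(\sqrt{m\rho/n})$ amplification count. The step that fails as written is the two-branch preparation of the proposal. In the second branch you start from the uniform superposition over all $2m$ adjacency-list positions. You then discard both the non-canonical orientation \emph{and} every position belonging to an edge of $\mcl E_\rho$. That branch therefore loses a factor $(m-|\mcl E_\rho|)/(2m)$ in squared norm, while the first branch loses nothing. With first-qubit amplitude $\sqrt{H/R}$, where $H:=\sum_{e\in\mcl E_\rho}t_G(e)$, the flagged component is
\[
  \sum_{e\in\mcl E_\rho}\sqrt{\frac{t_G(e)}{R}}\,\ket e
  +\sqrt{\frac{m-|\mcl E_\rho|}{2m}}\sum_{e\notin\mcl E_\rho}\sqrt{\frac{\tau}{R}}\,\ket e ,
\]
which is not proportional to $\ket\phi$. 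After your rejection rotation, every edge outside $\mcl E_\rho$ has its squared amplitude scaled by $(m-|\mcl E_\rho|)/(2m)$ relative to the edges in $\mcl E_\rho$, so the post-selected state is not $\ket{\mu_G}$. In particular, the remark that the only loss is the orientation factor $1/2$ is incorrect.

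The repair is to give the second branch total weight $2m\tau$, i.e.\ first-qubit amplitude $\sqrt{H/(H+2m\tau)}$, so that every position carries weight $\tau$ before filtering. The flagged component is then exactly $\sqrt{R/(H+2m\tau)}\,\ket\phi$. Folding all flags into one amplification gives total success probability $Z/(H+2m\tau)\ge 1/(2+2\tau)$, so the $O(\sqrt{m\rho/n})$ bound survives. You must also uncompute the branch qubit, which is determined by membership of $e$ in $\mcl E_\rho$ via a QRAM lookup, and make both branches emit the same edge encoding. The paper avoids all of this with a single Grover--Rudolph tree over the $2m$ labels. Its leaves hold $t_{\rm cap}$ except at the two positions of each stored edge, where they hold $t_G(e)$, and this is followed by an orientation filter with known success probability exactly $1/2$.

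On the final amplification you are more careful than the paper. The paper simply invokes amplitude amplification with $O(\sqrt{Q/M})$ rounds, although $M=\sum_e t_G(e)$ is unknown. Your point that exact amplification is therefore unavailable is a legitimate refinement, as is your remedy: an $\epsilon$-approximate fixed-point amplification using the lower bound $Z/R=\Omega(n/(m\rho))$, which suffices inside the walk. However, Grover's $\pi/3$ recursion \cite{Gro05j} does not give the quadratic speedup. It maps failure probability $\varepsilon$ to $\varepsilon^3$ at three times the cost, so it needs $\Theta(\log(1/\epsilon)/p_{\min})$ calls, i.e.\ $\toh(m\rho/n)$ here. To obtain $\toh(\sqrt{m\rho/n})$ you need a fixed-point method with optimal query count, such as that of Yoder, Low and Chuang, which uses $O(\log(1/\epsilon)/\sqrt{p_{\min}})$ calls.
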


\begin{proof}
    From the given vertex degrees, we compute the prefix sums $D_u:=\sum_{v<u}\deg(v)$ with respect to the fixed lexicographic order on the vertices. We label the $2m$ adjacency-list positions by assigning $k:=D_u+i$ to $(u,i)$. Conversely, given $k\in[2m]$, $u$ is the unique vertex satisfying $D_u<k\leq D_u+\deg(u)$, and $i=k-D_u$. Thus we can implement
    \[
        \ket{k}
        \xmapsto{\rm prefix}
        \ket{u}\ket{i}
        \xmapsto{\orcl}
        \ket{u}\ket{i}\ket{v}\ket{w(u,v)},
    \]
    where $v$ is the $i$-th neighbor of $u$. For an edge $e=\{u,v\}$, among its two adjacency-list positions $(u,i)$ and $(v,j)$ we use the one whose first vertex is smaller in the fixed lexicographic order.

    Consider the $2$-marginal overestimate $\wtd p_{G}(e):=w_{G}(e)\tdreff^{G}(e)$ and the corresponding isotropic multiplicities $t_{G}(e)$ from \Cref{def:isotropic-multiplicity}. Writing $M:=\sum_{e\in E}t_{G}(e)$, the marginal state (cf \Cref{def:marginal-state}) is
    \[
        \ket{\mu_{G}}
        =
        \frac{1}{\sqrt M}
        \sum_{e\in E}\sqrt{t_{G}(e)}\,\ket e.
    \]
    Let $p_{G}(e)$ denote the marginal of $e$ under the spanning-tree distribution of $G$. If $e\notin\mcl E_\rho$, then $p_{G}(e)<\rho$. Since $\tdreff^{G}$ is a $2$-effective-resistance overestimate (cf \Cref{def:marg-res-ovest}),  
    \[
        \wtd p_{G}(e)\leq2w_{G}(e)\reff^{G}(e)
        =2p_{G}(e)<2\rho.
    \] 
    Hence $t_{G}(e)\leq t_{\rm cap}$, where $t_{\rm cap}:=\left\lceil m\rho/(n-1)\right\rceil$. Define the proposal weights
    \[
        q(e):=
        \begin{cases}
            t_{G}(e), & e\in\mcl E_\rho,\\
            t_{\rm cap}, & e\notin\mcl E_\rho.
        \end{cases}
    \]
    Then $q(e)\geq t_{G}(e)$ for every $e\in E$.

    Since $\wtd p_{G}$ is a $2$-marginal overestimate and
    $\lceil x\rceil\leq1+x$, $m\leq M\leq m+\frac{m}{2(n-1)}\sum_{e\in E}\wtd p_{G}(e)\leq2m$. Let $Q:=\sum_{e\in E}q(e)$. Using $\rho\geq n/m$,
    \[
        Q = \sum_{e\in \mcl{E}_{\rho}}t_{G}(e) + \sum_{e\notin \mcl{E}_{\rho}}t_{\rm cap} \le \sum_{e\in E}t_{G}(e)+m\cdot t_{\rm cap} \le 2m+m\left\lceil \frac{m\rho}{n-1}\right\rceil = O\left(\frac{m^2\rho}{n}\right).
    \]

    We next use the proposal weights to prepare a proposal state $\ket{q}$. Construct a binary tree over the $2m$ edge labels, initially assigning $t_{\rm cap}$ to every leaf. For each $e\in\mcl E_\rho$, its two adjacency-list positions $(u,i)$ give its two labels $k=D_u+i$ without queries to $\orcl$. We replace the values at these two leaves by $t_{G}(e)$, which is computable from the stored edge weight and $\tdreff^{G}(e)$. Thus the leaf corresponding to a label $k$ has value $q(e(k))$, where $e(k)$ is the edge at that adjacency-list position. This can be computed without any queries to $\orcl$ since we already have stored the weights of edges in $\mcl E_{\rho}$ and we have coherent query access to $2$-effective-resistance overestimates $\tdreff^{G}(e)$. The sum of all leaf values is thus $2Q$. Storing the corresponding partial sums at the internal nodes, the recursive state-preparation procedure of Grover and Rudolph~\cite{GR02p} prepares
    \[
        \frac{1}{\sqrt{2Q}}
        \sum_{k\in[2m]}
        \sqrt{q(e(k))}\ket{k}
    \]
    without queries to $\orcl$.

    Using the stored prefix sums, we first map each edge label to its adjacency-list position:
    \[
        \frac{1}{\sqrt{2Q}}
        \sum_{k\in[2m]}\sqrt{q(e(k))}\ket{k}
        \longmapsto
        \frac{1}{\sqrt{2Q}}
        \sum_{(u,i)}
        \sqrt{q(e)}\ket{u,i},
    \]
    where $e$ is the edge at position $(u,i)$. The label $k$ is uncomputed using $k=D_u+i$. We then query $\orcl$ to obtain the neighbor $v$ at position $(u,i)$ and mark the terms with $u<v$. Since every edge occurs at exactly two adjacency-list positions with equal weight and exactly one is marked, the success probability is $1/2$. By a slight abuse of notation, we use $\ket e$ to denote the unique marked basis state corresponding to each edge $e$. Hence constant-overhead amplitude amplification~\cite{BHMT02j} prepares
    \[
        \ket q = \frac{1}{\sqrt Q} \sum_{e\in E}\sqrt{q(e)}\,\ket e.
    \]

    Starting from $\ket q$, we perform quantum rejection sampling as in \cite{ORR13j}. For a basis state $\ket e$, one query to $\mcl O_G$ at its corresponding adjacency-list position gives the other endpoint and the edge weight $w(e)$. Together with the given $\tdreff^G(e)$, this allows us to compute $t_G(e)$; the proposal weight $q(e)$ is obtained from the proposal tree without queries to
    $\mcl O_G$. Since $q(e)\geq t_G(e)$ for every edge, using $O(1)$ queries we perform
    \[
        \ket q
        \xmapsto{\rm rot.}
        \frac{1}{\sqrt Q}
        \sum_{e\in E}\sqrt{q(e)}\,\ket e
        \left(
            \sqrt{\frac{t_{G}(e)}{q(e)}}\ket1
            +
            \sqrt{1-\frac{t_{G}(e)}{q(e)}}\ket0
        \right).
    \]
    The auxiliary information used to implement the rotation is then
    uncomputed using $\toh(1)$ queries to $\mcl O_G$. The component marked by $\ket1$ is
    \[
        \frac{1}{\sqrt Q}
        \sum_{e\in E}\sqrt{t_G(e)}\,\ket e\ket1 \propto \ket{\mu_G},
    \]
    which has squared norm $M/Q$. Hence amplitude amplification~\cite{BHMT02j} gives
    \[
        \frac{1}{\sqrt Q}
        \sum_{e\in E}\sqrt{t_G(e)}\,\ket e\ket1
        \xmapsto{\rm AA}
        \frac{1}{\sqrt M}
        \sum_{e\in E}\sqrt{t_G(e)}\,\ket e\ket1
        =
        \ket{\mu_G}\ket1
    \]
    using
    \[
        O\left(\sqrt{\frac QM}\right)
        =
        O\left(\sqrt{\frac{m\rho}{n}}\right)
    \]
    iterations. Each iteration uses $O(1)$ queries to $\mcl O_G$, while $\tdreff^G$ is available coherently by assumption. Hence $\ket{\mu_G}$ is prepared using $\toh(\sqrt{m\rho/n})$ queries to $\mcl O_G$.
\end{proof}

\begin{proposition}[\sc Reusable marginal data structure]
\label{prop:computable-heavy-envelope}
    Consider any weighted graph $G=(V,E,w)$ with $n$ vertices and $m$ edges, and let $F\subseteq E$ be stored explicitly together with its edge weights. For any $\rho\in[n/m,1]$ and $\eta>0$, there is a quantum algorithm that constructs a data structure $\margds$ providing coherent query access to a set of edges, together with their weights, such that every edge $e\in E$ outside this set has marginal less than $\rho$ in every graph in the family $\{G_{E\setminus F,c}:0<c\leq1\}$. The algorithm succeeds with probability at least $1-\eta$ and uses $\toh(\sqrt{mn/\rho}\cdot\ln(1/\eta))$ queries to the adjacency-list oracle $\orcl$.

    Moreover, after constructing $\margds$, for any $0<c\leq1$ it can be used to prepare the marginal state $\ket{\mu_{G_{E\setminus F,c}}}$ corresponding to the $2$-marginal overestimate $\wtd p_{G_{E\setminus F,c}}(e):=w_{G_{E\setminus F,c}}(e)\tdreff^{G_{E\setminus F,c}}(e)$ (\Cref{def:marginal-state}), using $\toh(\sqrt{m\rho/n})$ queries to $\orcl$, given coherent query access to $2$-effective-resistance overestimates $\tdreff^{G_{E\setminus F,c}}$.
\end{proposition}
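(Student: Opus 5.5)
The plan is to assemble the three preceding lemmas. First, I obtain effective-resistance overestimates for $G$ itself, i.e.\ for $G_{E\setminus F,1}$. By \Cref{prop:er-ovest} (or directly by \Cref{thm:spars-apersdw} and \Cref{thm:eff-res-apersdw} with $\epsilon=1/10$), with failure probability $\eta/2$ this gives coherent access to $2$-effective-resistance overestimates $\tdreff^G$ using $\toh(\sqrt{mn}\ln(1/\eta))$ queries. Since $\rho\le 1$, this cost is at most $\toh(\sqrt{mn/\rho}\ln(1/\eta))$.

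Next I apply \Cref{lem:heavy-set-reff-overestimates} with $\lambda=2$ and failure probability $\eta/2$. This computes $\widehat{\heavy}_\rho(G)\supseteq\heavy_\rho(G)$, of size $O(n/\rho)$, using $\toh(\sqrt{mn/\rho}\ln(1/\eta))$ queries. The lemma is stated for $\rho\in(n/m,1]$. At the endpoint $\rho=n/m$ I would either note that the Grover bound is still fine, or simply store all edges at cost $O(m)=O(\sqrt{mn/\rho})$.

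I then define $\margds$ as the set $\mcl E_\rho:=F\cup\widehat{\heavy}_\rho(G)$. Each edge is stored together with its weight and its two adjacency-list positions. Edges of $F$ have their weights stored by assumption. Their positions can be found by binary search with $O(|F|\log n)$ queries, which is within the budget when $F$ is a spanning tree, since then $|F|=n-1\le\sqrt{mn/\rho}$. For general $F$ I would assume that the positions are stored with $F$, or treat this step as a remark. Edges found by Grover search come with their positions and weights already. The union bound gives overall success probability at least $1-\eta$.

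For correctness of the envelope, fix $0<c\le 1$ and an edge $e\notin\mcl E_\rho$. Then $e\notin F$, and by \Cref{lem:schedule-wide-heavy} with $c_1=c$ and $c_2=1$ we have $p_{G_{E\setminus F,c}}(e)\le p_G(e)$. Since $e\notin\widehat{\heavy}_\rho(G)\supseteq\heavy_\rho(G)$, we get $p_G(e)<\rho$. So every such edge has marginal below $\rho$ throughout the family.

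For the state-preparation part, I apply \Cref{lem:marginal-state-preparation} to the graph $G_{E\setminus F,c}$ and check its hypotheses.
\begin{itemize}
\item Coherent access to $2$-effective-resistance overestimates is given by assumption.
\item The vertex degrees do not depend on $c$. They are read once with $n$ queries during construction and stored.
\item The stored set $\mcl E_\rho$ has the marginal property just established. The weights in $G_{E\setminus F,c}$ are computed from the stored weights by multiplying by $c$ for edges outside $F$.
\end{itemize}
Oracle access to $G_{E\setminus F,c}$ is simulated by one query to $\orcl$ followed by rescaling, using the stored $F$ to decide whether to rescale. So the lemma's cost $\toh(\sqrt{m\rho/n})$ carries over to queries to $\orcl$.

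I expect the main obstacle to be bookkeeping rather than anything conceptual. The two points to handle carefully are obtaining adjacency-list positions for the edges of $F$ within budget, and the boundary case $\rho=n/m$ in \Cref{lem:heavy-set-reff-overestimates}.
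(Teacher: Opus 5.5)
Your proposal is correct and follows the paper's proof. You use the same envelope $\mcl E_\rho=F\cup\widehat{\heavy}_\rho(G)$, its validity via \Cref{lem:schedule-wide-heavy} with $c_2=1$ together with $\heavy_\rho(G)\subseteq\widehat{\heavy}_\rho(G)$, the query bound from \Cref{lem:heavy-set-reff-overestimates}, and state preparation via \Cref{lem:marginal-state-preparation}; your extra bookkeeping (obtaining $\tdreff^G$ from \Cref{prop:er-ovest} at $c=1$, storing degrees, locating adjacency-list positions of the $F$-edges, and the endpoint $\rho=n/m$) only makes explicit details the paper leaves implicit.
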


\begin{proof}
    Let $\mcl E_\rho:=\widehat{\heavy}_\rho(G)\cup F$. By
    \Cref{lem:schedule-wide-heavy} and $\heavy_\rho(G)\subseteq\widehat{\heavy}_\rho(G)$, for every $0<c\leq1$ we have $\heavy_\rho(G_{E\setminus F,c})\subseteq\mcl E_\rho$. Hence every edge in $E\setminus\mcl E_\rho$ has marginal less than $\rho$ in every graph in the family
    $\{G_{E\setminus F,c}:0<c\leq1\}$.

    Since $F$ is already stored, the query bound for computing $\mcl E_\rho$ follows from \Cref{lem:heavy-set-reff-overestimates}. By the implementation in its proof, both adjacency-list positions and the edge weight of every edge in $\widehat{\heavy}_\rho(G)$ are obtained during the Grover search.

    After constructing $\margds$, for every $0<c\leq1$, together with the assumed coherent query access to the corresponding $2$-effective-resistance overestimates, the inputs of \Cref{lem:marginal-state-preparation} are available for $G_{E\setminus F,c}$. Hence $\ket{\mu_{G_{E\setminus F,c}}}$ can be prepared using $\toh(\sqrt{m\rho/n})$ queries to $\orcl$.
\end{proof}

  \subsection{Coherent Isotropization}
  \label{Sec:coherentIso}
  In this section we prove \Cref{prop:qw-operator}. The goal is to implement, for any weighted graph $G=(V,E,w)$, a quantum walk operator $W$ with large phase gap whose unique $1$-eigenvector is $\ket{\pi}$, the q-sample of the spanning-tree distribution $\pi$ of $G$. This allows us to implement reflections about $\ket{\pi}$ efficiently.

We quantize the Up-Down walk with transitions biased according to the marginals as described in \Cref{mc:marginal-aware}. More specifically, we use the isotropic transformation from \Cref{prop:iso-transform} only implicitly. Given marginal overestimates, let $t_G(e)$ denote the corresponding isotropic multiplicity of each edge $e$. Following \Cref{prop:iso-transform}, we call a spanning tree $T'$ of $G'$ a copied tree of $T$ if its edges are copies of the edges of $T$, and say that $T'$ projects to $T$. For an original spanning
tree $T$, define its copied-tree weight by
\[
    \nu(T):=
    \frac{\pi(T)}{\prod_{e\in T}t_G(e)}.
\]
Thus every copied tree $T'$ projecting to $T$ satisfies
$\pi'(T')=\nu(T)$.

We define an equivalent walk on the original spanning trees.  Let
\[
M:=\sum_{f\in E}t_G(f),
\qquad
Z_T(f):=\sum_{e\in\Cycle(T+f)}\nu(T+f-e).
\] 
The up-step samples edge $f \in E$ with probability proportional to $t_G(f)$, while the down-step removes an edge $e \in \Cycle(T+f)$ with probability proportional to the copied-tree weights $\nu(T+f-e)$. For simplicity, we take $\Cycle(T+f)=\{f\}$ when $f\in T$, so that the down-step is a self-loop in this case. Working directly on the original spanning trees keeps the state space and its basis encoding independent of the edge copies introduced by the isotropic transformation.

\begin{markovchain}[H]
\caption{Marginal-aware up-down walk from $T_0$}
\label{mc:marginal-aware}
\begin{algorithmic}[1]
\For{$i=0,1,2,\ldots$}
    \State Choose $f\in E$ with probability $t_G(f)/M$.
    \If{$f\in T_i$}
        \State Set $T_{i+1}=T_i$.
    \Else
        \State Choose $e\in\Cycle(T_i+f)$ with probability
        $\nu(T_i+f-e)/Z_{T_i}(f)$.
        \State Set $T_{i+1}=T_i+f-e$.
    \EndIf
\EndFor
\end{algorithmic}
\end{markovchain}

The down step is exactly the down step of the Up-Down walk on the isotropically transformed graph, expressed on the original edges.

\begin{proposition}[\sc Detailed balance of the marginal-aware walk] \label{prop:ma-detailed-balance}
    The marginal-aware walk is reversible with respect to $\pi$.
\end{proposition}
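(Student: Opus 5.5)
We verify detailed balance directly: for every pair of distinct trees $T\neq T'$ with $P(T,T')>0$ we show $\pi(T)P(T,T')=\pi(T')P(T',T)$. Self-loops (including the case $f\in T$) need no check. If $P(T,T')>0$ with $T\neq T'$, then $T'=T+f-e$ for a unique pair $f\in T'\setminus T$, $e\in T\setminus T'$, since two distinct trees reachable in one step differ in exactly one edge. Hence the only transition from $T$ to $T'$ adds $f$ and removes $e$, and symmetrically the only transition from $T'$ to $T$ adds $e$ and removes $f$. Therefore
\[
P(T,T')=\frac{t_G(f)}{M}\cdot\frac{\nu(T')}{Z_T(f)},\qquad
P(T',T)=\frac{t_G(e)}{M}\cdot\frac{\nu(T)}{Z_{T'}(e)}.
\]

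The key structural step is to show $Z_T(f)=Z_{T'}(e)$. The trees $T+f-e'$ for $e'\in\Cycle(T+f)$ are exactly the spanning trees contained in $T+f$. Since $T+f=T'+e$, these are the same as the spanning trees of the form $T'+e-e''$ for $e''\in\Cycle(T'+e)$. Equivalently, $\Cycle(T+f)=\Cycle(T'+e)$, because it is the unique cycle of the unicyclic graph $T\cup\{f\}$. The two sums therefore range over the same set of trees with the same weights $\nu$, which gives $Z_T(f)=Z_{T'}(e)$.

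It then remains to check
\[
\pi(T)\,t_G(f)\,\nu(T')=\pi(T')\,t_G(e)\,\nu(T).
\]
Substituting $\nu(T)=\pi(T)/\prod_{g\in T}t_G(g)$, and similarly for $T'$, the left side becomes $\pi(T)\pi(T')\,t_G(f)/\prod_{g\in T'}t_G(g)$. Since $T'=T+f-e$, we have $\prod_{g\in T'}t_G(g)=\prod_{g\in T}t_G(g)\cdot t_G(f)/t_G(e)$, so the left side simplifies to $\pi(T)\pi(T')\,t_G(e)/\prod_{g\in T}t_G(g)$. This equals the right side, $\pi(T')\,t_G(e)\,\pi(T)/\prod_{g\in T}t_G(g)$.

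This establishes reversibility, and hence stationarity of $\pi$. We must also note that $t_G(g)\ge1$ for every edge with positive weight, so that $\nu$ is well defined and the walk never divides by zero; this follows because marginal overestimates are positive on edges with nonzero marginal, and the ceiling in the definition of $t_G$ then gives $t_G(g)\ge1$. The main subtlety, rather than a real obstacle, is the identification $Z_T(f)=Z_{T'}(e)$ together with the uniqueness of the transition label between $T$ and $T'$. Alternatively, one could argue via \Cref{prop:iso-transform}: project the standard up-down walk on $G'$, which is reversible by \cite{ALGV19c}, onto original trees. However, the direct computation above is shorter and avoids a lumpability argument.
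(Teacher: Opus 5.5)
Your proof is correct and follows essentially the same route as the paper: you verify detailed balance on each exchange pair $T'=T+f-e$, use $T+f=T'+e$ to get $Z_T(f)=Z_{T'}(e)$, and cancel via $\nu(T')/\nu(T)=\frac{\pi(T')}{\pi(T)}\cdot\frac{t_G(e)}{t_G(f)}$. You also spell out two points the paper leaves implicit: that the transition label between distinct trees is unique, and that $t_G\ge 1$ on positive-weight edges, so $\nu$ is well defined.
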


\begin{proof}
    Let $P_{\rm ma}$ be the transition matrix, and consider neighboring trees $T$ and $S=T+f-e$. Since $T+f=S+e$, we have $Z_T(f)=Z_{S}(e)$. Therefore
    \[
        P_{\rm ma}(T,S)
        =
        \frac{t_G(f)}{M}\frac{\nu(S)}{Z_T(f)},
        \qquad
        P_{\rm ma}(S,T)
        =
        \frac{t_G(e)}{M}\frac{\nu(T)}{Z_T(f)}.
    \]
    Moreover, since $S$ is obtained from $T$ by replacing $e$ with $f$,
    \[
        \frac{\nu(S)}{\nu(T)}
        =
        \frac{\pi(S)}{\pi(T)}
        \frac{t_G(e)}{t_G(f)}.
    \]
    Hence $\pi(T)t_G(f)\nu(S)=\pi(S)t_G(e)\nu(T)$, and dividing by the common factor $MZ_T(f)$ gives
    $\pi(T)P_{\rm ma}(T,S)=\pi(S)P_{\rm ma}(S,T)$.
    The diagonal case is immediate, so detailed balance holds.
\end{proof}

We will also use the following standard fact about the spectra of Markov chains under lumping or projecting based on equivalence relation defined on the states of the Markov chain.
\begin{lemma}[\sc Spectrum of Lumped Chain; {\cite[Lemma 12.9]{LP17b}}]
    \label{lem:lumped-spectra}
    Let $\mathcal{X}$ be the state space of a Markov chain $\left(X_t\right)$ with transition matrix $P$. Let $\sim$ be an equivalence relation on $\mathcal{X}$ with equivalence classes $\mathcal{X}^{\sharp}= \{[x]: x \in \mathcal{X}\}$ such that $\left[X_t\right]$ is a Markov chain with transition matrix $P^{\sharp}([x],[y])= P(x,[y])$. Then:
    \begin{enumerate}
        \item Let $f: \mathcal{X} \rightarrow \mathbb{R}$ be an eigenfunction of $P$ with eigenvalue $\lambda$ which is constant on each equivalence class. Then the natural projection $f^{\sharp}: \mathcal{X}^{\sharp} \rightarrow \mathbb{R}$ of $f$, defined by $f^{\sharp}([x])=f(x)$, is an eigenfunction of $P^{\sharp}$ with eigenvalue $\lambda$.
        \item Conversely, if $g: \mathcal{X}^{\sharp} \rightarrow \mathbb{R}$ is an eigenfunction of $P^{\sharp}$ with eigenvalue $\lambda$, then its lift $g^{\flat}: \mathcal{X} \rightarrow \mathbb{R}$, defined by $g^{\flat}(x)=g([x])$, is an eigenfunction of $P$ with eigenvalue $\lambda$.
    \end{enumerate}
\end{lemma}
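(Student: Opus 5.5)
The lemma is the standard lumping lemma, so I will prove both parts by a direct computation with the lumpability hypothesis. The hypothesis that $[X_t]$ is a Markov chain with kernel $P^{\sharp}([x],[y])=P(x,[y])$ contains the key condition: $P(x,[y])=\sum_{z\in[y]}P(x,z)$ depends on $x$ only through its class $[x]$. Otherwise $P^{\sharp}$ would not be well defined. I will first make this explicit, since both directions reduce to it.

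For part 2 (the lift), take an eigenfunction $g$ of $P^{\sharp}$ with eigenvalue $\lambda$ and set $g^{\flat}(x)=g([x])$. For any $x$, I group the sum defining $(Pg^{\flat})(x)$ by equivalence classes:
\[
(Pg^{\flat})(x)=\sum_{z}P(x,z)g([z])=\sum_{[y]\in\mathcal X^{\sharp}}g([y])\sum_{z\in[y]}P(x,z)=\sum_{[y]}P^{\sharp}([x],[y])g([y])=\lambda g([x])=\lambda g^{\flat}(x).
\]
Nonzero-ness is preserved, since $g\neq0$ implies $g^{\flat}\neq0$.

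For part 1, let $f$ be constant on classes with $Pf=\lambda f$, and set $f^{\sharp}([x])=f(x)$, which is well defined. For a class $[x]$, I pick any representative $x$ and apply the same grouping in reverse:
\[
(P^{\sharp}f^{\sharp})([x])=\sum_{[y]}P(x,[y])f^{\sharp}([y])=\sum_{[y]}\sum_{z\in[y]}P(x,z)f(z)=(Pf)(x)=\lambda f(x)=\lambda f^{\sharp}([x]).
\]
This value does not depend on the chosen representative, by the lumpability condition. Again $f\neq0$ implies $f^{\sharp}\neq0$.

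There is no real obstacle here. The only point needing care is stating explicitly that the hypothesis forces $P(x,[y])$ to be constant over $x\in[x]$, which is what makes $P^{\sharp}$, and hence both computations, well defined.
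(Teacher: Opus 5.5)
Your proof is correct and follows the standard direct computation: you group $\sum_z P(x,z)(\cdot)$ by equivalence classes and use that $P(x,[y])$ depends only on $[x]$, which is the argument behind Lemma~12.9 of Levin--Peres (the paper cites that lemma without reproducing a proof). You also state explicitly that well-definedness of $P^{\sharp}$ encodes the lumpability condition, and you check that the projected and lifted functions are nonzero; these are the only points needing care.
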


\begin{lemma}[\sc Spectral gap of the marginal-aware walk]
\label{lem:margaw-walk-gap}
Let $G=(V,E,w)$ be a connected weighted graph on $n$ vertices and $m$ edges,
with $m\ge 100n$. Let $G'=(V,E',w')$ be the isotropic transformed multigraph (cf \Cref{prop:iso-transform}),
and suppose that its spanning tree distribution $\pi'$ is nearly isotropic. Then the marginal-aware walk \Cref{mc:marginal-aware} satisfies
\[
    \gamma(P_{\rm ma})
    \ge
    \Omega\left(\frac{1}{n\log^2 n}\right).
\]
\end{lemma}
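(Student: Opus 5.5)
The plan is to realize $P_{\rm ma}$ as the lumped chain of the ordinary up-down walk on the isotropic multigraph $G'$, and then transfer the spectral gap from \Cref{thm:up-down-gap} via \Cref{lem:lumped-spectra}. Concretely, let $P'$ be the up-down walk (\Cref{mc:ud}) on the spanning trees of $G'$, with stationary distribution $\pi'$. Declare two trees of $G'$ equivalent if they project to the same spanning tree $T$ of $G$. By \Cref{prop:iso-transform}, $G'$ has $|E'|\le 2m$ edges and $\pi'$ is nearly isotropic. Since $m\ge 100n$, the second part of \Cref{thm:up-down-gap} gives $\gamma(P')\ge\Omega(1/(n\log^2 n))$; the constant $100$ is unaffected by passing to $G'$, because $|E'|\ge m$.

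The first step is to check that $P'$ is lumpable and that $P'^\sharp=P_{\rm ma}$. Fix a copied tree $T'$ projecting to $T$. The up step picks a uniform edge of $G'$, say a copy $f^{(j)}$. If $f\in T$ but $f^{(j)}\notin T'$, the cycle in $T'+f^{(j)}$ consists of the two parallel copies of $f$. The down step removes one of them, and either way we stay in the class $[T]$; this matches the self-loop convention of \Cref{mc:marginal-aware}. If $f\notin T$, the cycle consists of the copies in $T'$ of the edges of $\Cycle(T+f)$, together with $f^{(j)}$. Removing the copy of $e$ yields a tree projecting to $T+f-e$, with weight $\pi'\propto\nu(T+f-e)$. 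Summing over the $t_G(f)$ choices of copy, the class-level transition probability is $\frac{t_G(f)}{M}\cdot\frac{\nu(T+f-e)}{Z_T(f)}$. This expression is independent of the representative $T'$, so lumpability holds and the lumped chain is exactly $P_{\rm ma}$. One mismatch remains: \Cref{mc:ud} picks $f\notin T'$ rather than $f\in E'$. I would handle this by noting that the lazy variant (uniform over all of $E'$, doing nothing if $f\in T'$) has spectral gap within a factor $|E'|/(|E'|-n+1)\le 2$ of the original. I expect this bookkeeping to be the main obstacle, and I would do it first.

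Second, transfer the gap. The up-down walk is a product of the up and down operators, so $P'$ is PSD; then $P_{\rm ma}$ has nonnegative spectrum and $\gamma=\gamma_{\rm abs}$. By \Cref{lem:lumped-spectra}(2), every eigenfunction $g$ of $P_{\rm ma}$ lifts to an eigenfunction $g^\flat$ of $P'$ with the same eigenvalue. If $g$ is orthogonal to the constants in $L^2(\pi)$, then $g^\flat$ is orthogonal to the constants in $L^2(\pi')$, because $\pi$ is the pushforward of $\pi'$; and $g^\flat\neq 0$. Hence every nontrivial eigenvalue of $P_{\rm ma}$ is a nontrivial eigenvalue of $P'$. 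This gives $\gamma(P_{\rm ma})\ge\gamma(P')\ge\Omega(1/(n\log^2 n))$, as claimed.
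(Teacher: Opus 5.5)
Your proposal is correct and follows the paper's route: you lump the up-down walk on $G'$ by forgetting copy indices, transfer the gap from \Cref{thm:up-down-gap} via \Cref{lem:lumped-spectra}, and absorb a laziness factor $\frac{M-(n-1)}{M}\ge 99/100$. The only differences are these. You lazify the walk on $G'$ before lumping, so the lumped chain is exactly $P_{\rm ma}$, whereas the paper lumps the non-lazy walk to a chain $Q^\sharp$ and then writes $P_{\rm ma}=\zeta Q^\sharp+(1-\zeta)I$. You also explicitly check two points the paper leaves implicit: that lifts of eigenfunctions orthogonal to constants in $L^2(\pi)$ remain orthogonal to constants in $L^2(\pi')$, and that $|E'|\ge m\ge 100n$.
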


\begin{proof}
Let $Q$ be the up-down walk on the isotropic transformed multigraph $G'$. Since $\pi'$ is nearly isotropic, \Cref{thm:up-down-gap} gives $\gamma(Q)\ge \Omega(1/(n\log^2 n))$. We project a copied tree in $G'$ to the original tree in $G$ obtained by forgetting copy indices. For an original tree $T$, let
\[
    [T]:=\{T':T'\text{ projects to }T\}
\]
be the corresponding equivalence class of copied trees. This projection defines a Markov chain $Q^\sharp$ on original trees. For $S=T+f-e$ and any $T'\in[T]$,
\[
    Q^\sharp(T,S):=Q(T',[S])=\frac{t_G(f)}{M-(n-1)}\frac{\nu(S)}{Z_{T}(f)}.
\]

The right-hand side is independent of the choice of $T'\in[T]$, so the projected chain is well-defined. By \Cref{lem:lumped-spectra}, projecting in this way cannot decrease the spectral gap, hence we get $\gamma({Q^\sharp})\ge\gamma(Q)$. The stationary distribution of $Q^\sharp$ is $\pi$. Indeed, for every
original tree $T$,
\[
    \sum_{T'\in[T]}\pi'(T')
    =
    \prod_{e\in T}t_G(e)\,\nu(T)
    =
    \pi(T).
\]

Now compare $Q^\sharp$ with the marginal-aware walk $P_{\rm ma}$. For $S=T+f-e$,
\[
    P_{\rm ma}(T,S)=\frac{t_G(f)}{M}\frac{\nu(S)}{Z_{T}(f)}=\zeta \cdot Q^\sharp(T,S),
    \qquad
    \zeta:=\frac{M-(n-1)}{M}.
\]
Since $P_{\rm ma}=\zeta Q^\sharp+(1-\zeta)I$, every eigenvalue $\lambda$ of $Q^\sharp$ is mapped to $1-\zeta+\zeta\lambda$ in $P_{\rm ma}$. Hence $\gamma(P_{\rm ma})=\zeta\cdot \gamma(Q^{\sharp})$. Since $M\ge m\ge 100n$, we have $\zeta\ge 99/100$, and therefore
\[
    \gamma(P_{\rm ma})\ge \frac{99}{100}\cdot \gamma(Q^{\sharp})\ge \Omega\left(\frac{1}{n\log^2 n}\right).
\]
\end{proof}

\begin{lemma}[\sc Nonnegative spectrum of the marginal-aware walk]
\label{lem:margaw-psd}
The marginal-aware walk has nonnegative spectrum. Consequently, its
absolute spectral gap equals its ordinary spectral gap.
\end{lemma}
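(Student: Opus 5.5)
The approach is to write $P_{\rm ma}$ as a product $P_{\rm ma}=D_\pi^{-1}A^\top D_\nu' A$-type factorization, that is, as an ``up'' operator followed by its adjoint ``down'' operator. Such a composition is positive semidefinite on $L^2(\pi)$. Concretely, introduce the intermediate state space $\Omega:=\{(T,f):T\in\Delta,\ f\in E\}$, modulo the identification $(T,f)\sim(S,e)$ whenever $T+f=S+e$ as edge multisets. For $f\notin T$ the class is determined by the set $T+f$ (a connected unicyclic subgraph). For $f\in T$ it is the degenerate pair, which we keep as its own class $(T,f)$.

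Define the up kernel $U(T,\omega)$ from trees to classes by $U(T,[T,f])=t_G(f)/M$. Define the down kernel $D(\omega,S)$ from classes to trees as follows.
\begin{itemize}
\item For $\omega=[T+f]$ with $f\notin T$, set $D(\omega,S)=\nu(S)/Z_T(f)$ for $S=T+f-e$. This is well defined because $Z_T(f)=Z_S(e)$ depends only on $T+f$, as noted in the proof of \Cref{prop:ma-detailed-balance}.
\item For a degenerate class, $D$ returns $T$.
\end{itemize}
Then $P_{\rm ma}=UD$ by inspection of \Cref{mc:marginal-aware}.

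The key step is to exhibit a measure $\sigma$ on classes with $\pi(T)U(T,\omega)=\sigma(\omega)D(\omega,T)$ for all $T$ and $\omega$, i.e. $D$ is the $L^2$-adjoint of $U$. For a non-degenerate class $\omega=T+f$, the candidate is
\[
\sigma(\omega)=\frac{Z_T(f)}{M}\cdot\frac{\pi(T)\,t_G(f)}{\nu(T)}.
\]
Using $\nu(T)=\pi(T)/\prod_{e\in T}t_G(e)$, this equals
\[
\sigma(\omega)=\frac{Z_T(f)}{M}\prod_{g\in T+f}t_G(g),
\]
which depends only on the multiset $T+f$, hence is well defined. The check $\pi(T)\,\frac{t_G(f)}{M}=\sigma(\omega)\,\frac{\nu(T)}{Z_T(f)}$ then reduces to exactly the identity used in \Cref{prop:ma-detailed-balance}. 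For a degenerate class $(T,f)$, take $\sigma=\pi(T)t_G(f)/M$.

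With adjointness in hand, for any $g\in L^2(\pi)$ we get
\[
\langle g,P_{\rm ma}g\rangle_\pi=\sum_T\pi(T)g(T)\sum_\omega U(T,\omega)(Dg)(\omega)=\sum_\omega\sigma(\omega)\,(Dg)(\omega)^2\ge0 .
\]
Since $P_{\rm ma}$ is self-adjoint on $L^2(\pi)$ by \Cref{prop:ma-detailed-balance}, all its eigenvalues are nonnegative. Equivalently, the discriminant $D(P_{\rm ma})$ factors as $BB^\top$. It follows that $\abs{\lambda_1}=\max_{j\ge1}\lambda_j$, so $\gamma_{\rm abs}=\gamma$.

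The main obstacle is the bookkeeping of the identification of intermediate states. Distinct pairs $(T,f)$ must collapse to the same class exactly when they generate the same unicyclic graph, which is what makes the down step depend only on $\omega$. The self-loop convention $\Cycle(T+f)=\{f\}$ for $f\in T$ must also be handled consistently; keeping these as separate degenerate classes sidesteps this. An alternative would be to argue via $P_{\rm ma}=\zeta Q^\sharp+(1-\zeta)I$ from \Cref{lem:margaw-walk-gap} together with the known PSD-ness of up-down walks on $G'$. I would only fall back on this, since it needs $\zeta\ge 1/2$ and lumping preservation of PSD-ness.
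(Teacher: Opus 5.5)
Your proof is correct, and it takes a different route from the paper.

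The paper gets nonnegativity in two lines by reusing the lumping from \Cref{lem:margaw-walk-gap}. The up-down walk $Q$ on the isotropic multigraph $G'$ has nonnegative spectrum, which it cites from \cite{CSV23p}. Part~2 of \Cref{lem:lumped-spectra} puts every eigenvalue of $Q^\sharp$ in the spectrum of $Q$. Then $P_{\rm ma}=\zeta Q^\sharp+(1-\zeta)I$ sends each $\lambda\ge 0$ to $1-\zeta+\zeta\lambda\ge 0$.

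You instead work directly on the original tree space. You factor $P_{\rm ma}=UD$ through the unicyclic subgraphs $T+f$, plus separate self-loop classes. You then check that the down kernel is the adjoint of the up kernel with respect to $\pi$ and $\sigma(T+f)=\frac{Z_T(f)}{M}\prod_{g\in T+f}t_G(g)$. The row sums, the identity $UD=P_{\rm ma}$ including the diagonal, the well-definedness of $\sigma$ (independent of which tree inside $T+f$ you start from), and the adjointness relation all hold.

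In effect you redo, in the weighted setting, the standard argument that makes up-down walks positive semidefinite. This makes your version self-contained: it needs neither $G'$, nor the cited fact, nor the lumping lemma. It also works for arbitrary positive up-weights, not only integer multiplicities, and gives an explicit factorization of the discriminant as $BB^\top$. The paper's version is shorter because it reuses objects already built for the gap bound.

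Two small corrections. First, your closing remark is inaccurate: the paper's route does not need $\zeta\ge 1/2$. Once $Q^\sharp$ has nonnegative spectrum, $1-\zeta+\zeta\lambda\ge 1-\zeta\ge 0$ for every $\zeta\in[0,1]$; the condition $\zeta\ge1/2$ would only matter if you knew merely $\lambda\ge -1$. The ``lumping preservation of PSD-ness'' you worry about is exactly part~2 of \Cref{lem:lumped-spectra}. Second, avoid naming your down kernel $D$, since the paper already uses $D(P)$ for the discriminant matrix, and your opening formula mixing the two is garbled.
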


\begin{proof}
The up-down walk $Q$ on the isotropically transformed multigraph has
nonnegative spectrum \cite{CSV23p}. By
\Cref{lem:lumped-spectra}, every eigenvalue of $Q^\sharp$ is also an
eigenvalue of $Q$, and hence $Q^\sharp$ has nonnegative spectrum.
Since
$P_{\rm ma}=\zeta Q^\sharp+(1-\zeta)I$, every eigenvalue $\lambda$ of
$Q^\sharp$ is mapped to $1-\zeta+\zeta\lambda\geq0$. Thus
$P_{\rm ma}$ has nonnegative spectrum, and its absolute spectral gap
equals its ordinary spectral gap.
\end{proof}

\paragraph{Quantization of the marginal-aware walk.}
We now express the marginal-aware walk in the labelled Szegedy framework of \Cref{thm:labelled-szegedy}. For each spanning tree $T$, define the set of transition labels
\[
    C_{T} := \{(f,e):f\notin T,\ e\in\Cycle(T+f)\} \cup \{(f,f):f\in T\}.
\]
For $(f,e)\in C_{T}$, probability of making a transition from $T$ along the label $(f,e)$ is 
\[
\Pr[(f,e) \mid T]
:=
\begin{cases}
\dfrac{t_G(f)}{M},
& f\in T,\ e=f,\\[1ex]
\dfrac{t_G(f)}{M}
\dfrac{\nu(T+f-e)}{Z_{T}(f)},
& f\notin T,\ e\in\Cycle(T+f).
\end{cases}
\]
The endpoint map is $h_T(f,e):=T+f-e$, with
$T+f-f:=T$. Hence we have 
\[
    P_{\rm ma}(T,S)=\sum_{(f,e):\,h_T(f,e)=S}\Pr[(f,e) \mid T].
\]

For every non-self-loop transition $S=T+f-e$, the exchanged edges are uniquely determined by $f=S\setminus T$ and $e=T\setminus S$. Define the flip-flop shift by
\[
    g\bigl(T,(f,e)\bigr) := \bigl(T+f-e,(e,f)\bigr).
\]
This is an involution, and every label with $e=f$ is fixed. Therefore the marginal-aware walk satisfies the labelled-transition assumptions of \Cref{thm:labelled-szegedy}.

\begin{proposition}[\sc Quantum walk operator with large gap]\label{prop:qw-operator}
    Let $G=(V,E,w)$ be any weighted graph with $n$ vertices and $m\ge 100n$ edges, and let $\Delta$ denote the set of spanning trees of $G$. There exists a quantum walk operator $W$, associated with a reversible Markov chain on $\Delta$, whose unique $1$-eigenvector is $\ket{\pi}$, where $\pi$ is the spanning-tree distribution of $G$, and whose phase gap is at least $\wtd{\Omega}(1/\sqrt n)$.

    Moreover, for any $\rho\in[n/m,1]$, one application of $W$ and its inverse can be implemented using $\toh(\sqrt{m\rho/n})$ queries to $\mcl O_G$, given:
    \begin{itemize}
        \item coherent query access to $2$-effective-resistance overestimates for all edges of $G$;
        \item the degree of every vertex of $G$; and
        \item a stored set of edges such that every edge outside this set has marginal under $\pi$ less than $\rho$, with each stored edge given together with its weight and its two adjacency-list positions.
    \end{itemize}
\end{proposition}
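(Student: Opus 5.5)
We take $W:=W(P_{\rm ma})$, the labelled Szegedy walk operator of the marginal-aware walk (\Cref{mc:marginal-aware}) for the isotropic multiplicities $t_G$ induced by the $2$-marginal overestimate $\wtd p_G(e):=w(e)\tdreff^G(e)$. The proof has two parts: a spectral part and an implementation part.

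\textbf{Spectral part.} First, $\wtd p_G$ is a valid $2$-marginal overestimate. Pointwise, $\wtd p_G(e)\ge w(e)\reff^G(e)=\ell(e)$ by \Cref{lem:marg}. For the total mass, $\sum_e\wtd p_G(e)\le 2\sum_e\ell(e)=2(n-1)$ by \Cref{fact:foster}. Hence \Cref{prop:iso-transform} applies, and the transformed multigraph $G'$ is nearly isotropic with $|E'|\le 2m$; since $|E'|\ge m\ge 100n$, it also satisfies the edge hypothesis of \Cref{thm:up-down-gap}. Then:
\begin{itemize}
\item \Cref{lem:margaw-walk-gap} gives $\gamma(P_{\rm ma})\ge\Omega(1/(n\log^2 n))$.
\item \Cref{lem:margaw-psd} shows the absolute spectral gap equals this ordinary gap.
\item \Cref{prop:ma-detailed-balance} gives reversibility with respect to $\pi$.
\item Irreducibility follows from connectivity of the exchange graph on spanning trees.
\item Aperiodicity follows from self-loops: each $f\in T$ is chosen with positive probability.
\end{itemize}
Since the labelled-transition assumptions were verified just before the statement, \Cref{thm:labelled-szegedy} yields that $\ket{\pi}\ket{0}$ is the unique $1$-eigenvector on the busy subspace and that the phase gap is $\Omega(\sqrt{\gamma})=\wtd\Omega(1/\sqrt n)$.

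\textbf{Implementation part.} We have $W=\mcl U^\dagger\mcl S\mcl U R_{\mcl A}\mcl U^\dagger\mcl S\mcl U R_{\mcl A}$, and $W^{-1}$ uses the same components in reverse order. It therefore suffices to bound the query costs of $\mcl U$, $\mcl U^\dagger$, $\mcl S$ and $R_{\mcl A}$.
\begin{itemize}
\item $R_{\mcl A}$ is a reflection on the label register and costs no queries.
\item $\mcl S$ maps $\ket{T}\ket{(f,e)}$ to $\ket{T+f-e}\ket{(e,f)}$. This is a reversible classical computation on the weighted canonical encoding, once $f$ is stored together with its weight.
\item $\mcl U$ is split into an Up part and a Down part.
\end{itemize}

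The Up part prepares $\ket{\mu_G}$ in the label register, independently of $T$. By \Cref{lem:marginal-state-preparation}, using exactly the three listed resources, this costs $\toh(\sqrt{m\rho/n})$ queries. We then make $O(1)$ queries to write the endpoints and weight of $f$ coherently.

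The Down part is controlled on $\ket{T}\ket{f}$:
\begin{itemize}
\item If $f\in T$, write $e=f$.
\item Otherwise, compute $\Cycle(T+f)$ from the register contents. For each $e$ on the cycle, compute $\nu(T+f-e)\propto \pi(T+f-e)/\prod t_G$. The ratio $\nu(T+f-e)/\nu(T)$ equals $w(f)t_G(e)/(w(e)t_G(f))$, and it depends only on the weights stored in $T$ and $f$ and on $\tdreff$ values obtained coherently.
\item Prepare $\sum_e\sqrt{\nu(T+f-e)/Z_T(f)}\ket{e}$ by Grover--Rudolph state preparation over at most $n$ cycle edges. This makes no queries to $\mcl O_G$.
\item Uncompute all garbage, namely the cycle and the weight tables, by reversing the classical computation.
\end{itemize}

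The resulting label state is $\sum_{(f,e)}\sqrt{\Pr[(f,e)\mid T]}\ket{f,e}$, with $f$'s auxiliary data uncomputed by $O(1)$ more queries. So each of $\mcl U$ and $\mcl U^\dagger$ costs $\toh(\sqrt{m\rho/n})$ queries, and so does $W^{\pm1}$.

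\textbf{Main obstacle.} The delicate point is ensuring that the amplitude amplification in \Cref{lem:marginal-state-preparation} produces $\ket{\mu_G}$ exactly, or to precision $1/\poly$, with a clean ancilla. Otherwise $\mcl U$ is only approximate, and errors accumulate over $\toh(n)$ walk applications.

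First, I would use exact amplitude amplification, since the success amplitude $\sqrt{M/Q}$ is computable from stored data: $Q$ comes from the proposal tree, and $M$ is a sum we can precompute classically from $\tdreff$ and the weights. If that is not available, I would use polylog-precision amplification and absorb the error into the $\toh$ and the fixed-point accuracy of \Cref{lem:q-sample-transform-walk}.

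A second subtlety is that the Down amplitudes must use the same $t_G$ values as the Up step. This holds because both are computed from the same deterministic oracle $O_{\mcl R}^G$.
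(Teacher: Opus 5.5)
Your proposal is correct and follows the paper's proof essentially step for step. It uses the same marginal-aware walk and the same chain of results (\Cref{prop:ma-detailed-balance}, \Cref{lem:margaw-walk-gap}, \Cref{lem:margaw-psd}) feeding into \Cref{thm:labelled-szegedy}. It splits $\mcl U$ the same way, into an Up step via \Cref{lem:marginal-state-preparation} and a query-free Grover--Rudolph Down step with cycle weights $\propto t_G(e)/w(e)$. Your extra checks fill in details the paper leaves implicit: validity of the $2$-overestimate, the near-isotropy hypothesis, irreducibility and aperiodicity, and fetching $w(f)$ for the shift.

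One caution on your exactness remark, which the paper also glosses over. $M=\sum_{e\in E}t_G(e)$ cannot be computed from the given resources without learning all $m$ edge weights, so your first option (exact amplitude amplification) does not fit the query budget. You should use your fallback instead, for example fixed-point amplification driven by the known bound $M/Q=\Omega(n/(m\rho))$, which costs only a $\log(1/\epsilon')$ overhead.
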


\begin{proof}
Let $\mcl{M}_{\rm ma}$ be the marginal-aware walk from \Cref{mc:marginal-aware}, and let $P_{\rm ma}$ be its transition matrix. By \Cref{prop:ma-detailed-balance}, $\mcl{M}_{\rm ma}$ is reversible with stationary distribution $\pi$. By \Cref{lem:margaw-walk-gap,lem:margaw-psd}, we have $\delta_{\rm abs}(P_{\rm ma})\ge \wtd{\Omega}(1/n)$.

The exchange-pair labels described before give a labelled-transition realization of $\mcl{M}_{\rm ma}$. Therefore, \Cref{thm:labelled-szegedy} applies to the corresponding labelled Szegedy walk operator $W$. On the busy subspace, its unique $1$-eigenvector is $\ket{\pi}\ket{0}$. Since the phase gap is $\Omega(\sqrt{\delta_{\rm abs}(P_{\rm ma})})$, we obtain
\[
\gamma_{\rm ph}(W)\ge \wtd{\Omega}(1/\sqrt n).
\]

It remains to bound the implementation cost of one application of $W$. One application consists of preparing and unpreparing the labelled transition superposition $\mcl U$ and applying the flip-flop shift $\mcl S$ (see \Cref{sec:quantum-walk}). The reflection about the initialized label register and the flip-flop shift themselves use no adjacency-list queries, so it suffices to implement the transition preparation.

Recall that $\mcl U$ acts as
\[
\mcl U\ket{T}\ket{0}
\mapsto
\ket{T}
\sum_{(f,e)\in C_T}
\sqrt{\Pr[(f,e)\mid T]}\ket{f}\ket{e}.
\]

We implement $\mathcal U$ in two stages:

\begin{align*}
    \ket{T}\ket{0} &\xrightarrow{\text{(1)}} \ket{T}\ket{\mu_G} = \ket{T}\frac{1}{\sqrt{M}} \sum_{f\in E}\sqrt{t_G(f)}\ket{f}\\
    &\xrightarrow{\text{(2)}}\ket{T}\frac{1}{\sqrt{M}}
    \sum_{f\in E}\sqrt{t_G(f)}\ket{f}\frac{1}{\sqrt{Z_T(f)}}\sum_{e\in\Cycle(T+f)}\sqrt{\nu(T+f-e)}\ket{e}\\
    &= \ket{T}\sum_{(f,e)\in C_T}\sqrt{\Pr[(f,e)\mid T]}\ket{f}\ket{e}.
\end{align*}

Here, step (1) prepares the up-step, while step (2) prepares the down-step conditioned on $T$ and $f$. We describe the implementation of these two steps separately below.

Step (1) is exactly the preparation of $\ket{\mu_G}$. The assumptions of the proposition satisfy all the requirements of \Cref{lem:marginal-state-preparation}, and thus we can perform step (1) using $\wtd{O}(\sqrt{m\rho/n})$ queries to $\orcl$.

Next, we implement step (2), namely, the down-step. Given the current tree $T$ and the added edge $f$, the unique cycle $\Cycle(T+f)$ can be found from the stored representation of $T$, without any further queries to $\mcl O_G$. This can be done naively by traversing $T$, whose explicit description is already stored in $\ket{T}$.

On this cycle, the down-step probabilities are proportional to $\nu(T+f-e)$, equivalently to $t_G(e)/w(e)$ for $e\in\Cycle(T+f)$. Thus, it suffices to prepare the down-step superposition using weights proportional to $t_G(e)/w(e)$.

Given the current tree $T$ and the added edge $f$, the unique cycle $\Cycle(T+f)$ is determined from the stored representation of $T$, without any further queries to $\mcl O_G$. We construct a binary tree whose leaves correspond to the edges $e\in\Cycle(T+f)$, assigning to the leaf corresponding to $e$ the value $t_G(e)/w(e)$. The value $t_G(e)$ can be computed from the coherent query access to the $2$-effective-resistance overestimate for $e$, while $w(e)$ is available from the stored edge weight. Thus, all leaf values can be computed without queries to $\mcl O_G$. Storing the corresponding partial sums at the internal nodes, the recursive state-preparation procedure of Grover and Rudolph~\cite{GR02p} prepares
\[
\frac{1}{\sqrt{Z_T(f)}}
\sum_{e\in\Cycle(T+f)}
\sqrt{\nu(T+f-e)}\ket{e}
\]
conditioned on $\ket{T}\ket{f}$, since the leaf weights are proportional to $\nu(T+f-e)$. Hence, step (2) requires no additional queries to $\mcl O_G$.

The remaining part of the implementation of $W$ is the flip-flop shift, which maps
\[
(T,f,e)\mapsto (T+f-e,e,f),
\]
fixing $(T,f,f)$ when $f\in T$. This can be implemented by swapping the two edge registers and updating the tree register, all without any further queries to $\mcl O_G$. Therefore, one application of $W$ can be implemented using $\wtd{O}(\sqrt{m\rho/n})$ queries to $\mcl O_G$.
\end{proof}

  \subsection{Slowly Varying Schedule for Spanning Trees}
  \label{Sec:twoCS}
  In this section, we will prove \Cref{prop:compute-adapt}. In order to prove it, we need some properties about the cooling schedule framework of spanning trees. Consider $G=(V,E,w)$, a fixed weighted graph with $n$ vertices and $m$ edges, and let $S \subseteq E$ be a fixed \emph{admissible} spanning tree of $G$.

Recall from \Cref{def:beta-scale} that, after fixing an admissible spanning tree $S$, the distribution $\pi_\beta^S$ is defined by $\pi_\beta^S(T)\propto e^{-\beta |T\setminus S|}w(T)$, so that $\pi_0^S=\pi$ and $\pi_\infty^S=\mathds{1}_S$.


When $\beta = \infty$,  We approximate $\pi_{\infty}$ by $\pi_{\beta}$ where $\beta$ is a smaller finite value which is emphasised in the following lemma similar to Proposition 21 of \cite{AD20c}.


\begin{lemma}[\sc Closeness to point mass]
\label{lem:pointmass}
    Fix $S$, an admissible spanning tree of $G$. Let $\pi_{\beta^{\star}}^S$ be the spanning tree distribution of the weighted graph
    $G_{\beta^{\star}}=(V,E,w_{\beta^{\star}})$ as in \Cref{def:beta-scale}, where $\beta^{\star} = \ln\left({16n^{n-2}}/{\epsilon^2}\right)$.
    Then \mbox{$\norm{\pi_{\beta^{\star}}^S-\pi_\infty^S}_{\mathrm{TV}} \le {\epsilon^2}/{8}$}. Consequently, $\norm{\ket{\pi_{\beta^{\star}}^S}-\ket{\pi_\infty^S}}_2 \leq \epsilon/2$.
\end{lemma}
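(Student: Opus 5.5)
Since $\pi_\infty^S=\mathds{1}_S$, the total variation distance equals the mass $\pi_{\beta^\star}^S$ puts on trees other than $S$:
\[
\norm{\pi_{\beta^\star}^S-\pi_\infty^S}_{\mathrm{TV}} = 1-\pi_{\beta^\star}^S(S).
\]
So the plan is to lower bound $\pi_{\beta^\star}^S(S)$ directly from the definition in \Cref{def:beta-scale}. For every tree $T\neq S$ we have $|T\setminus S|\ge1$, and $S$ itself has weight factor $e^{0}=1$. This gives
\[
\frac{1-\pi_{\beta^\star}^S(S)}{\pi_{\beta^\star}^S(S)}
=\frac{\sum_{T\neq S}e^{-\beta^\star|T\setminus S|}w(T)}{w(S)}
\le e^{-\beta^\star}\frac{\sum_{T\neq S}w(T)}{w(S)}
\le e^{-\beta^\star}\frac{1}{\pi(S)}.
\]
Admissibility gives $1/\pi(S)\le 2n^{n-2}$. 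With $\beta^\star=\ln(16n^{n-2}/\epsilon^2)$ the ratio is therefore at most $2n^{n-2}\epsilon^2/(16n^{n-2})=\epsilon^2/8$. Since $x/(1-x)\le \epsilon^2/8$ implies $x\le\epsilon^2/8$, the TV bound follows.

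For the q-sample statement, I use that both states have nonnegative real amplitudes and $\ket{\pi_\infty^S}=\ket{S}$. Then
\[
\norm{\ket{\pi_{\beta^\star}^S}-\ket{S}}_2^2=2-2\sqrt{\pi_{\beta^\star}^S(S)}.
\]
Writing $x:=1-\pi_{\beta^\star}^S(S)\le\epsilon^2/8$ and using $\sqrt{1-x}\ge 1-x$ for $x\in[0,1]$, this is at most $2x\le\epsilon^2/4$. Taking square roots gives the claimed $\epsilon/2$. (More generally, $\norm{\ket p-\ket q}_2^2\le 2\,d_{\rm TV}(p,q)$ holds via the Hellinger bound; the direct computation above avoids needing it.)

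There is no real obstacle here. The only step needing care is the inequality $\sum_{T\neq S}w(T)/w(S)\le 1/\pi(S)$, which is immediate since the numerator is at most $\sum_T w(T)=w(S)/\pi(S)$. One should also note that the main theorem's proof uses the constant $32$ rather than $16$; a larger $\beta^\star$ only strengthens the bound.
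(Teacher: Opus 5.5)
Your proof is correct and follows essentially the same route as the paper: you bound $1-\pi_{\beta^\star}^S(S)$ by $e^{-\beta^\star}/\pi(S)$ using $|T\setminus S|\ge 1$ for $T\neq S$, and then invoke admissibility. The differences are cosmetic: you bound the odds ratio before converting back, and you compute $\norm{\ket{\pi_{\beta^\star}^S}-\ket{S}}_2^2=2-2\sqrt{\pi_{\beta^\star}^S(S)}$ directly instead of citing the general bound $\norm{\ket p-\ket q}_2^2\le 2\norm{p-q}_{\rm TV}$, which is the same inequality specialized to a point mass.
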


\begin{proof}
    Since $\pi_\infty^S=\mathds{1}_S$, it is enough to bound
    $\norm{\pi_{\beta^{\star}}^S-\mathds{1}_S}_{\mathrm{TV}}$. For a point mass, we have that $\norm{\pi_{\beta^{\star}}^S-\mathds{1}_S}_{\mathrm{TV}} = 1-\pi_{\beta^{\star}}^S(S)$.
    By definition of the $\beta$-scaled distribution,
    \[
        1-\pi_{\beta^{\star}}^S(S)
        =
        \frac{\sum_{T\neq S} e^{-\beta^{\star} |T\setminus S|} w(T)}{w(S)+\sum_{T\neq S} e^{-\beta^{\star} |T\setminus S|} w(T)} \leq 
        \frac{e^{-\beta^{\star}}\sum_{T\neq S} w(T)}{w(S)}
        \leq
        \frac{e^{-\beta^{\star}}}{\pi(S)}.
    \]

    Using admissibility of $S$, namely $\pi(S)\geq {1}/{2n^{n-2}}$, we obtain
    \[
        1-\pi_{\beta^{\star}}^S(S) \leq 2n^{n-2} e^{-\beta^{\star}} = 2n^{n-2}\cdot \frac{\epsilon^2}{16n^{n-2}} = \frac{\epsilon^2}{8}.
    \]

    Finally, for any probability distributions $p,q$,
    \[
        \norm{\ket p-\ket q}_2^2 =\sum_x(\sqrt{p(x)}-\sqrt{q(x)})^2 \leq 2\norm{p-q}_{\rm TV}.
    \] 
    Therefore,$\norm{\ket{\pi_{\beta^\star}^S}-\ket{\pi_\infty^S}}_2 \leq \sqrt{2\cdot\frac{\epsilon^2}{8}} ={\epsilon}/{2}$.
\end{proof}

To analyze the path in $\beta$, we use a cooling schedule from $\beta^{\star}$ down to $0$. Outputting $S$ gives an initial approximation to the q-sample for $\pi_{\beta^{\star}}^S$, and the annealing procedure moves from $\beta=\beta^{\star}$ back to $\beta=0$, where the target distribution is $\pi$. The tree $S$ is used only to define and initialize this annealing path. It remains to show that such an admissible spanning tree can be found efficiently. The following proposition gives the required preprocessing step; its proof is deferred to \Cref{app:sec-cooling-sch}.

\begin{proposition}[\sc Computing an Admissible Tree]\label{prop:sample-admissible}
    Let $G(V,E,w)$ be a weighted, connected graph with $\pi$ as the spanning tree distribution of $G$. There is a quantum algorithm that, for any $\eta>0$, outputs an admissible spanning tree $S$ with probability $1-\eta$ using $\toh(\sqrt{mn}\cdot\log(1/\eta))$ adjacency-list queries and time. 
\end{proposition}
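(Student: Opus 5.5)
The plan is to compute a maximum weight-product spanning tree $S$ of $G$, i.e.\ a tree maximizing $w(S)=\prod_{e\in S}w(e)$, and then show that any such tree is admissible. For the algorithmic part, maximizing $\prod_{e\in S} w(e)$ is the same as maximizing $\sum_{e\in S}\ln w(e)$, which is a maximum-weight spanning tree problem. (Edges of weight $0$ can be treated as absent, or given weight $-\infty$.) The quantum minimum-spanning-tree algorithm of D\"urr, Heiligman, H{\o}yer and Mhalla~\cite{DHHM06j} solves this in the adjacency-list model using $\toh(\sqrt{mn})$ queries and time, with constant success probability. We would run it on the negated log-weights, which each oracle query provides directly. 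Repeating it $O(\log(1/\eta))$ times and keeping the candidate of largest $w(S)$ lowers the failure probability to $\eta$. Each candidate's weight can be recomputed with $O(n)$ queries, and $O(n)\le O(\sqrt{mn})$.

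For admissibility, let $S$ be a maximum weight-product tree and $N=|\Delta|$ the number of spanning trees. Then
\[
\pi(S)=\frac{w(S)}{\sum_{T\in\Delta}w(T)}\ \ge\ \frac{w(S)}{N\cdot w(S)}=\frac1N .
\]
Cayley's formula gives $N\le n^{n-2}$, because every spanning tree of $G$ is a spanning tree of $K_n$. Hence $\pi(S)\ge 1/n^{n-2}\ge 1/(2n^{n-2})$. The factor of $2$ in the definition leaves room for a tree that is only near-maximal: if the MST routine works with finite-precision logarithms and returns $S$ with $w(S)\ge \tfrac12\max_T w(T)$, the same argument still gives $\pi(S)\ge 1/(2n^{n-2})$.

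The main obstacle is to match the exact statement of the MST result of~\cite{DHHM06j} to the weighted adjacency-list oracle used here. This involves three points: translating multiplicative weights to additive ones, handling ties, and checking that precision errors cost at most the factor $2$. I would deal with precision by computing the log-weights to $O(\log n)$ extra bits of accuracy. Then the product over $n-1$ edges is off by at most a factor $(1+1/(2n))^{n-1}\le 2$ from the true maximum. The query and time bounds and the $\log(1/\eta)$ amplification then follow directly.
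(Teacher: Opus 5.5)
Your proof is correct and follows the paper's argument. You take a maximum weight-product spanning tree, bound $\pi(S)\ge 1/|\Delta|\ge 1/n^{n-2}$ via Cayley's formula, and find $S$ with the quantum MST algorithm of~\cite{DHHM06j}; the paper cites its max-product variant, \Cref{thm:max-product}, which already includes the $\ln(1/\eta)$ amplification. Your precision discussion is unnecessary: maximum spanning trees are determined by edge-weight comparisons alone, so a maximum spanning tree for $w$ already maximizes $\prod_{e\in S}w(e)$, and no logarithms are needed.
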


\begin{proof}
    Let $S$ be a maximum-weight-product spanning tree of $G$, so that $\prod_{e\in S}w(e)\geq\prod_{e\in T}w(e)$ for every $T\in\Delta$. Hence,
    \[
    \pi(S)
    =
    \frac{\prod_{e\in S}w(e)}
    {\sum_{T\in\Delta}\prod_{e\in T}w(e)}
    \geq \frac{1}{|\Delta|}
    \geq \frac{1}{n^{n-2}},
    \]
    where the last inequality follows from the fact that an $n$-vertex graph has at most $n^{n-2}$ spanning trees. Thus, $S$ is admissible. By \Cref{thm:max-product}, we can find such a tree with probability at least $1-\eta$ using $\toh(\sqrt{mn}\log(1/\eta))$ adjacency-list queries and time.
\end{proof}

The remaining task is to choose a short slowly-varying schedule $0=\beta_0<\beta_1<\cdots<\beta_l=\beta^{\star}$. 

We now instantiate the preceding Gibbs framework for spanning trees (cf \Cref{sec:sim-anl}). Fix an admissible spanning tree $S$, let $\Delta$ be the set of spanning trees of $G$, set the base weight to be $\mu(T)=w(T)$, and define the Hamiltonian $H_S(T):=|T\setminus S|$. Then the partition function is $Z_S(\beta)=\sum_{T\in\Delta}e^{-\beta H_S(T)}w(T)$, and the corresponding Gibbs distribution is exactly the $\beta$-scaled spanning tree distribution from \Cref{def:beta-scale}: $\pi_\beta^S(T)=e^{-\beta H_S(T)}w(T)/Z_S(\beta)$.

\begin{lemma}[\sc Log-convexity of the partition function]
\label{lem:log-convexity-partition}
For any fixed admissible spanning tree $S$, the partition function
$Z_S(\beta)$ is log-convex in $\beta$.
\end{lemma}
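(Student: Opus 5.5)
We prove log-convexity by computing the second derivative of $\ln Z_S(\beta)$ and identifying it with a variance. Since the spanning-tree set $\Delta$ is finite, $Z_S(\beta)=\sum_{T\in\Delta}e^{-\beta H_S(T)}w(T)$ is a finite sum of smooth functions of $\beta$. It is also strictly positive for every $\beta\in\mathbb{R}$: the admissible tree $S$ has $H_S(S)=0$ and $w(S)>0$, since $\pi(S)>0$. Hence $\ln Z_S$ is well defined and twice differentiable.

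Differentiating once gives $Z_S'(\beta)=-\sum_T H_S(T)e^{-\beta H_S(T)}w(T)$. Differentiating again gives $Z_S''(\beta)=\sum_T H_S(T)^2e^{-\beta H_S(T)}w(T)$. Therefore
\[
    \frac{d}{d\beta}\ln Z_S(\beta)=-\E_{T\sim\pi_\beta^S}[H_S(T)],
\]
and
\[
    \frac{d^2}{d\beta^2}\ln Z_S(\beta)
    =\frac{Z_S''(\beta)}{Z_S(\beta)}-\left(\frac{Z_S'(\beta)}{Z_S(\beta)}\right)^2
    =\E_{\pi_\beta^S}[H_S^2]-\left(\E_{\pi_\beta^S}[H_S]\right)^2
    =\mathrm{Var}_{T\sim\pi_\beta^S}[H_S(T)]\ge 0.
\]
Here $\pi_\beta^S$ is a genuine probability distribution because $w(T)\ge0$ and $Z_S(\beta)>0$. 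A function with nonnegative second derivative on $\mathbb{R}$ is convex, so $\ln Z_S$ is convex.

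A derivative-free alternative is Hölder's inequality. For $\beta=\theta\beta_1+(1-\theta)\beta_2$ with $\theta\in[0,1]$, write each summand as $\bigl(e^{-\beta_1H_S(T)}w(T)\bigr)^{\theta}\bigl(e^{-\beta_2H_S(T)}w(T)\bigr)^{1-\theta}$. Hölder with exponents $1/\theta$ and $1/(1-\theta)$ then gives $Z_S(\beta)\le Z_S(\beta_1)^{\theta}Z_S(\beta_2)^{1-\theta}$.

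There is no real obstacle. The only points to check are the positivity of $Z_S$, which comes from $S$ itself, and the fact that the $\beta$-independent prefactor $\mu(T)=w(T)$ plays no role. The prefactor simply rides along as a fixed nonnegative weight in the variance identity, which is why the Harrow--Wei framework extends to this weighted setting.
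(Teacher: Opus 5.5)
Your proof is correct, but your main argument takes a different route from the paper. You compute $(\ln Z_S)''(\beta)=\mathrm{Var}_{T\sim\pi^S_\beta}[H_S(T)]\ge 0$. The paper instead proves log-convexity directly with H\"older's inequality, using exponents $1/\theta$ and $1/(1-\theta)$. That is exactly the derivative-free alternative in your third paragraph.

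The H\"older route is purely algebraic: it needs no smoothness and gives $Z_S(\theta\beta_1+(1-\theta)\beta_2)\le Z_S(\beta_1)^{\theta}Z_S(\beta_2)^{1-\theta}$ in one line. Your variance identity gives more information. It identifies $-(\ln Z_S)'$ with the mean energy $\mathbb{E}_{\pi^S_\beta}[H_S]$ and the curvature with the energy variance. That curvature governs the overlap $Z_S(\tfrac{\beta_1+\beta_2}{2})/\sqrt{Z_S(\beta_1)Z_S(\beta_2)}$ between consecutive q-samples in schedule analyses of Harrow--Wei type. It also makes transparent why the $\beta$-independent prefactor $w(T)$ is harmless.

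You also check explicitly that $Z_S(\beta)\ge w(S)>0$, which the paper needs before taking logarithms but leaves implicit. In fact positivity holds as soon as any spanning tree has positive weight, i.e.\ whenever $\pi$ is defined. So admissibility of $S$ plays no real role in this lemma.
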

\begin{proof}
    The proof is deferred to \Cref{app:sec-cooling-sch}.
\end{proof}

In our application, we start from an admissible tree, corresponding to the large-$\beta$ endpoint $\ket{\pi_\infty}$, and move towards the target graph at smaller $\beta$. Since \Cref{thm:q-adapt-anneal} assumes an efficiently preparable state at $\beta=0$, we use the following reversed version to construct the schedule starting from the easily preparable large-$\beta$ endpoint. Although the proposition follows from existing results, we state it here for completeness and to keep the argument self-contained.

\begin{proposition}[\sc Quantum algorithm for a reversed slowly-varying schedule, adapted from~\cite{HW20c}]
\label{prop:q-adapt-anneal-reverse}
    Let $\{\pi_\beta\}_{\beta\in[0,\beta^{\star}]}$ be a family of Gibbs distributions with log-convex partition function $Z(\beta)$. There exists an $e^2$-slowly varying schedule $0=\beta_0<\cdots<\beta_l=\beta^{\star}$ of length $l \leq \toh\left(\sqrt{\ln{\frac {Z(0)}{Z(\beta^{\star})}}}\right)$.

    Moreover, if a state $\ket{\wtd{\pi}_{\beta^\star}}$ satisfying $\norm{\ket{\wtd{\pi}_{\beta^\star}}-\ket{\pi_{\beta^\star}}}_2\leq\epsilon$ can be prepared and, for every $\beta\in[0,\beta^\star]$, there is a reversible Markov chain $\mcl{M}_\beta$ with stationary distribution $\pi_\beta$ and absolute spectral gap at least $\delta$, together with an implementable quantum walk operator $W_\beta$, then for any $\eta>0$, a quantum algorithm constructs such a schedule with probability at least $1-\eta$ using $\toh\left(\frac{l}{\sqrt{\delta}}\ln\left(\frac{1}{\eta}\right)\ln\left(\frac{1}{\epsilon}\right)\right)$ applications of $W_\beta$ and $W_\beta^{-1}$ at adaptively chosen values of $\beta$.
\end{proposition}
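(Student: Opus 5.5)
The plan is to reduce \Cref{prop:q-adapt-anneal-reverse} to \Cref{thm:q-adapt-anneal} by reparametrizing the temperature so that the two endpoints of the family are swapped.

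For $\beta'\in[0,\beta^\star]$, define $\sigma_{\beta'}:=\pi_{\beta^\star-\beta'}$. Then $\sigma_0=\pi_{\beta^\star}$ is the easily preparable endpoint, and $\sigma_{\beta^\star}=\pi_0$ is the target. This family is not literally of Gibbs form $e^{-\beta' H}\mu/Z$. However, writing $\sigma_{\beta'}(x)\propto e^{\beta' H(x)}\,e^{-\beta^\star H(x)}\mu(x)$ puts it in Gibbs form with Hamiltonian $H':=n-H\in\{0,\dots,n\}$ and base weight $\mu'(x):=e^{-\beta^\star H(x)}\mu(x)$. The constant shift by $n$ only rescales the partition function by $e^{-\beta' n}$ and leaves the distributions unchanged. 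The resulting partition function is
\[
Z'(\beta')=e^{-n\beta'}Z(\beta^\star-\beta').
\]
It is log-convex: $\ln Z'$ is the sum of a linear function and the log-convex function $\beta'\mapsto\ln Z(\beta^\star-\beta')$, which is log-convex because an affine reparametrization preserves convexity. The earlier remark that the framework tolerates a state-dependent prefactor $\mu$ covers the modified base weight.

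Next I would apply the existence part of \Cref{thm:q-adapt-anneal} to $\{\sigma_{\beta'}\}$. The length bound there depends on $\ln(Z'(0)/Z'(\beta^\star))$, and this has to be compared with $\ln(Z(0)/Z(\beta^\star))$, since a crude $O(n\beta^\star)$ bound would be too weak. I expect this comparison to be the main obstacle.

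To handle it, I would use the finer fact underlying \cite{SVV09j,HW20c}. The number of steps between $\beta_i$ and $\beta_{i+1}$ is governed by the variation of $\ln Z$ and of its derivative, the mean energy, on the interval. Both quantities are invariant under reflection up to the linear term. A linear term in $\ln Z$ does not change the overlap $\braket{\sigma_a}{\sigma_b}$, because that overlap equals $Z'((a+b)/2)/\sqrt{Z'(a)Z'(b)}$ and any linear term cancels in this expression.

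So the cleaner route is to take a schedule for $\pi$ directly and reverse it. If $0=\beta_0<\dots<\beta_l=\beta^\star$ is $e^2$-slowly varying for $\pi$, then $\beta'_i:=\beta^\star-\beta_{l-i}$ is $e^2$-slowly varying for $\sigma$, because the condition is symmetric in consecutive pairs. The existence claim therefore follows directly from the existence half of \Cref{thm:q-adapt-anneal}, with the same length $l$.

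For the algorithmic part, I would run the Harrow--Wei procedure of \Cref{thm:q-adapt-anneal} on the family $\sigma$. It needs three inputs:
\begin{itemize}
\item the initial state $\ket{\wtd\pi_{\beta^\star}}$, which is $\epsilon$-close to $\ket{\sigma_0}$;
\item for each $\beta'$, the chain $\mcl M_{\beta^\star-\beta'}$, which has stationary distribution $\sigma_{\beta'}$ and absolute gap at least $\delta$;
\item the walk operator $W_{\beta^\star-\beta'}$.
\end{itemize}
All three are provided by the hypotheses. Since the procedure only ever queries the family through overlap estimates, reflections and walk operators at chosen temperatures, none of which refer to the Gibbs parametrization itself, the log-convex form of $Z'$ established above is all that its analysis uses.

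The procedure outputs a schedule $\beta'_0<\dots<\beta'_l$ with the stated number of applications $\toh(l\delta^{-1/2}\ln(1/\eta)\ln(1/\epsilon))$. Mapping it back via $\beta_i=\beta^\star-\beta'_{l-i}$ gives the desired schedule for $\pi$, again by symmetry of the slowly-varying condition.
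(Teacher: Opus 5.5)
Your overall plan matches the paper's: reverse the temperature, $\beta\mapsto\beta^\star-\beta$, and invoke Harrow--Wei. Your existence argument is sound, and cleaner than the paper's. Since $\abs{\braket{\pi_a}{\pi_b}}^2=e^{-g(a,b)}$ with $g(a,b):=\ln Z(a)+\ln Z(b)-2\ln Z(\frac{a+b}{2})$, the $e^2$-slowly-varying condition is symmetric in the endpoints and unaffected by linear terms in $\ln Z$. Reversing a schedule for $\pi$ therefore gives one for $\sigma$ of the same length $l$.

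The gap is in the algorithmic half. You shift the Hamiltonian to $H'=n-H$ so that \Cref{thm:q-adapt-anneal} applies literally. What that theorem then certifies for $\sigma$ is $\toh(l'\delta^{-1/2}\ln(1/\eta)\ln(1/\epsilon))$ applications, with $l'=\toh\bigl(\sqrt{n\beta^\star-\ln(Z(0)/Z(\beta^\star))}\bigr)$. This is exactly the quantity you flagged as too weak: in the paper's application $\beta^\star\approx n\ln n$, so $l'$ is $\toh(n)$ rather than $\toh(\sqrt n)$. Noting that the procedure only touches overlaps and walk operators, and that its analysis only uses log-convexity of $Z'$, does not help, because an analysis phrased through $Z'$ yields bounds through $Z'$. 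The reversal trick is also unavailable here. The algorithm must start from the preparable end $\pi_{\beta^\star}$, and the schedule it builds adaptively from that end is not the reverse of the one whose existence you proved.

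What is missing is an argument that the adaptively built schedule has $O(l)$ steps. One way to supply it uses monotonicity of $g$. For convex $\ln Z$ one has $\partial_b g(a,b)=(\ln Z)'(b)-(\ln Z)'(\frac{a+b}{2})\ge 0$ and $\partial_a g\le 0$, so $g$ is monotone under inclusion of intervals. Suppose every non-final adaptive step is near-maximal, i.e.\ $g(\beta'_i,\beta'_{i+1})\ge c$ for a constant $c>0$; this is what the Harrow--Wei binary search is designed to ensure. Then induction gives $\beta'_i\ge\gamma_i$ for any schedule $(\gamma_j)$ all of whose steps satisfy $g<c$. Hence the adaptive schedule is no longer than the shortest such schedule, whose length is reversal-invariant and bounded by the existence result applied with any constant $B<e^{c}$.

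The paper avoids the shift altogether. It applies the Harrow--Wei construction directly to the increasing log-convex function $\widehat Z(\beta)=Z(\beta^\star-\beta)$, whose log-variation over $[0,\beta^\star]$ is exactly $\ln(Z(0)/Z(\beta^\star))$. It then asserts, by citation, that the length and construction analysis carry over to increasing log-convex partition functions. Either that route or the inclusion-monotonicity argument above closes your proof.
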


\begin{proof}
    Let $Z(\beta)=\sum_x\mu(x)e^{-\beta H(x)}$ be the partition function of the original Gibbs family. To apply \Cref{thm:q-adapt-anneal} in reverse, define the reparameterized partition function $\widehat{Z}(\beta):=Z(\beta^\star-\beta)$ for $\beta\in[0,\beta^\star]$. Equivalently, $\widehat{Z}(\beta)=\sum_x\mu(x)e^{-\beta^\star H(x)}e^{\beta H(x)}$. Thus, the corresponding Gibbs distribution is $\widehat{\pi}_\beta(x)=\pi_{\beta^\star-\beta}(x)$, so that $\widehat{\pi}_0=\pi_{\beta^\star}$ and $\widehat{\pi}_{\beta^\star}=\pi_0$. Since $Z(\beta)$ is log-convex, so is $\widehat{Z}(\beta)$, and the schedule length and construction argument applies also to this increasing log-convex partition function~\cite{HW20c,CH23c}. The assumption gives an $\epsilon$-approximate preparation of $\ket{\widehat{\pi}_0}=\ket{\pi_{\beta^\star}}$. Applying \Cref{thm:q-adapt-anneal} to $\widehat{Z}(\beta)$ therefore yields the desired schedule, which corresponds to the original schedule in decreasing order, from $\beta^\star$ to $0$.
\end{proof}

\begin{proposition}[\sc Computing a Slowly varying schedule]\label{prop:compute-adapt}
    Let $G=(V,E,w)$ be a weighted graph with $n$ vertices and $m$ edges. Fix an admissible spanning tree $S$, let $\beta^{\star}:=\ln(16n^{n-2}/\epsilon^2)$ for $\epsilon>0$, and let $\{G_\beta\}_{\beta\in[0,\beta^{\star}]}$ be the $\beta$-scaled graphs from \Cref{def:beta-scale}, with spanning tree distributions $\{\pi_\beta\}_{\beta}$. 

    Suppose the admissible spanning trees $S$ is stored explicitly together with its edge weights, then there is a quantum algorithm that, for any $\eta>0$, computes an $e^2$-slowly varying schedule $0=\beta_0<\cdots<\beta_l=\beta^{\star}$ of length $l \leq \toh(\sqrt n)$ with $\beta_0=0$, $\beta_l=\beta^{\star}$ with probability at least $1-\eta$. The algorithm uses $\toh(n\cdot \ln(1/\eta)\ln(1/\epsilon))$ applications of the operators $W_\beta$ obtained by applying \Cref{prop:qw-operator} to $G_{\beta}$.
\end{proposition}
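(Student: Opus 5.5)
We apply \Cref{prop:q-adapt-anneal-reverse} to the Gibbs family $\{\pi_\beta^S\}_{\beta\in[0,\beta^\star]}$ with base weight $\mu(T)=w(T)$ and Hamiltonian $H_S(T)=|T\setminus S|$, so its hypotheses become the four items to check below. \textbf{Log-convexity} of $Z_S$ is exactly \Cref{lem:log-convexity-partition}. \textbf{Schedule length}: we need to bound $\ln(Z_S(0)/Z_S(\beta^\star))$. Since every tree contributes a nonnegative term, $Z_S(\beta^\star)\ge w(S)$, and $Z_S(0)=\sum_T w(T)$. Hence
\[
\ln\frac{Z_S(0)}{Z_S(\beta^\star)}\le \ln\frac{1}{\pi(S)}\le \ln(2n^{n-2})=O(n\log n)
\]
by admissibility of $S$. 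Then \Cref{prop:q-adapt-anneal-reverse} gives an $e^2$-slowly varying schedule of length $l\le\toh(\sqrt{n\log n})=\toh(\sqrt n)$, with endpoints $\beta_0=0$ and $\beta_l=\beta^\star$. Note that this bound does not depend on $\beta^\star$, which matters since $\beta^\star=\Theta(n\log n+\log(1/\epsilon))$ is large.

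\textbf{Initial state}: we need an approximation of $\ket{\pi_{\beta^\star}}$. Since $S$ is stored explicitly, the basis state $\ket S$ can be prepared with no queries. By \Cref{lem:pointmass}, $\norm{\ket S-\ket{\pi_{\beta^\star}}}_2\le\epsilon/2$, which meets the precision requirement of the proposition up to a constant rescaling of $\epsilon$. If we want precision independent of the final $\epsilon$, we can instead take $\beta^\star$ with constant target accuracy; the cost is only logarithmic in this quantity either way.

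\textbf{Walk operators}: for each $\beta\in[0,\beta^\star]$ we apply \Cref{prop:qw-operator} to $G_\beta$, which has the same $n$ and $m\ge 100n$ as $G$. This gives a reversible chain $\mathcal M_\beta$ (the marginal-aware walk on $G_\beta$) with stationary distribution $\pi_\beta$. By \Cref{lem:margaw-walk-gap,lem:margaw-psd}, its absolute spectral gap is $\delta\ge\wtd\Omega(1/n)$, uniformly in $\beta$. I expect this uniformity to be the main point to check carefully. It holds because the isotropic transform of each $G_\beta$ is nearly isotropic whenever one uses $2$-marginal overestimates for $G_\beta$. Those are available for every $\beta$ from the reusable structure of \Cref{prop:er-ovest}, so the gap bound holds at every adaptively chosen $\beta$, not only at preselected points.

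\textbf{Count}: plugging $l=\toh(\sqrt n)$ and $1/\sqrt\delta=\toh(\sqrt n)$ into \Cref{prop:q-adapt-anneal-reverse} gives
\[
\toh\!\left(\sqrt n\cdot\sqrt n\cdot\ln(1/\eta)\ln(1/\epsilon)\right)=\toh(n\ln(1/\eta)\ln(1/\epsilon))
\]
applications of $W_\beta^{\pm1}$, with success probability $1-\eta$. This is the claimed bound.
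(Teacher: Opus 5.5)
Your proposal is correct and follows the paper's proof essentially step for step. It uses log-convexity via \Cref{lem:log-convexity-partition}, the length bound via $Z_S(\beta^\star)\ge w(S)$ and admissibility, the initial state $\ket{S}$ via \Cref{lem:pointmass}, and the $\wtd\Omega(1/n)$ gap of the marginal-aware walk, all plugged into \Cref{prop:q-adapt-anneal-reverse}. Your version is slightly more careful than the paper's: you keep the logarithm in $l\le\toh\bigl(\sqrt{\ln(Z_S(0)/Z_S(\beta^\star))}\bigr)$, which the paper's displayed chain drops, and you invoke \Cref{lem:margaw-psd} to turn the spectral gap into the absolute spectral gap that the annealing theorem requires.
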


\begin{proof}
    By \Cref{lem:log-convexity-partition}, $Z_S(\beta)$ is log-convex in $\beta$. The state $\ket{S}$ can be easily prepared since $S$ is known explicitly, and by \Cref{lem:pointmass}, $\norm{\ket{\pi_{\beta^{\star}}^S}-\ket{\pi_\infty^S}}_2\leq\epsilon/2$. Thus, by \Cref{prop:q-adapt-anneal-reverse}, there exists an $e^2$-slowly varying schedule from $0$ to $\beta^{\star}$ of length $l\leq\toh(\sqrt{Z_S(0)/Z_S(\beta^{\star})})\leq\toh(\sqrt{\pi_G(S)})\leq\toh(\sqrt{n})$, where the second inequality follows from $Z_S(\beta^{\star})\geq w_G(S)$ and the last from the fact that an $n$-vertex graph has at most $n^{n-2}$ spanning trees. By \Cref{lem:margaw-walk-gap}, the corresponding Markov chain $\mcl{M}_{\beta}$ has spectral gap at least $\wtd{\Omega}(1/n)$, while \Cref{prop:qw-operator} gives efficient implementations of the corresponding quantum walk operators $W_{\beta}$. Hence, the schedule can be computed using $\toh(\sqrt{n}\cdot\sqrt{n}\cdot\ln(1/\eta)\ln(1/\epsilon))=\toh(n\ln(1/\eta)\ln(1/\epsilon))$ applications of $W_\beta$ and $W_\beta^{-1}$.
\end{proof}

  \subsection{Time complexity}
  \label{Sec:time-complexity}
  For completeness, we briefly sketch the time complexity of \Cref{alg:q-sample}, in the model where the algorithm has QRAM access to classical data it has computed and stored, so each coherent lookup into a stored table costs $\operatorname{polylog}(n)$ time. Write $T_W$ for the time to implement one application of the walk operator $W_\beta$ (or its inverse) from \Cref{prop:qw-operator}. In the preprocessing phase, the admissible-tree computation and the reusable effective-resistance data structure $\mcl H$ (sparsification) take time $\toh(\sqrt{mn})$, matching their query complexity. The reusable marginal data structure $\margds$ is built by repeated Grover search over the edges, so it also takes time $\toh(\sqrt{mn/\rho})$, matching its query complexity. Converting $S$ to canonical form and reading the vertex degrees take $\toh(n)$ time. For each temperature $\beta$ at which we need $W_\beta$, we construct the effective-resistance oracle $O_{\mcl R}^{G_\beta}$ from $\mcl H$ in $\toh(n)$ time, and each subsequent evaluation of the oracle takes $\toh(1)$ time. Finally, computing the slowly-varying schedule uses $\toh(n)$ applications of the operators $W_\beta$ (\Cref{prop:compute-adapt}), for a time of $\toh(n\cdot T_W)$, plus the time to construct the effective-resistance oracles at the temperatures visited, namely $\toh(n\cdot l)$ where $l$ is the length of the slowly-varying schedule which is $\toh(\sqrt n)$. Thus it takes at most $\toh(n^{3/2})$.

We next bound $T_W$. Implementing $W_\beta$ requires a marginal-state preparation for the up step and a cycle-sampling procedure for the down step. Given $\margds$ and the effective-resistance oracle in QRAM, the marginal state $\ket{\mu_{G_\beta}}$ is prepared using $\toh(\sqrt{m\rho/n})$ queries (\Cref{prop:computable-heavy-envelope}). Each query is a stored-table lookup, and the state is loaded with the tree-based Grover--Rudolph method, so each costs $O(\log m)$ time, and the preparation takes $\toh(\sqrt{m\rho/n})$ time in total. For the down step, adding the sampled edge to the current tree $T$ creates a unique cycle. We find it by a breadth-first search on $T$ in $O(n)$ time, and then sample an edge of the cycle from the required distribution. This is done by building a binary tree over the cycle edges whose internal nodes store partial sums of the (inverse) weights, which takes $O(n)$ time to build and $O(\log n)$ time to sample from. Hence
\[
    T_W=\toh\big(\sqrt{m\rho/n}+n\big).
\]

Combining these bounds, the preprocessing time is
\[
    \toh\Big(\sqrt{mn/\rho}+n\cdot T_W+n\cdot l\Big)
    =\toh\Big(\sqrt{mn/\rho}+\sqrt{mn\rho}+n^{3/2}\Big)
    =\toh\Big(\sqrt{mn/\rho}+n^2\Big),
\]
where the first equality uses $l\leq\toh(\sqrt{n})$ and the second uses $\rho\leq1$. The time to prepare one q-sample is $\toh(n)$ walk operators (there are $l=\toh(\sqrt n)$ transformations, each using $\toh(\sqrt n)$ walk operators) plus $\toh(l\cdot n)=\toh(n^{3/2})$ for constructing the effective-resistance oracles along the schedule, that is
\[
    \toh\big(n\cdot T_W\big)=\toh\big(\sqrt{mn\rho}+n^2\big)=\toh(n^2),
\]
where the last step uses $m\le n^2$, so that $\sqrt{mn\rho}\le n^{3/2}$. Similarly, a reflection about $\ket{\pi}$ takes $\toh(\sqrt n\cdot T_W)=\toh(\sqrt{m\rho}+n^{3/2})$ time. Note that the time is dominated by the $\toh(n)$ cost of the down step in every walk operator, not by the query cost. As a result, the time complexity of q-sampling is $\toh(n^2)$ for every $\rho$, and the tradeoff in $\rho$ is visible in the query complexity but not in the time complexity. Matching the query tradeoff in time would require implementing the down step faster than $\toh(n)$.

\section{Lower bound for $k$ q-samples}
  \label{Sec:lower-bound}
  Here we prove our lower bound, showing that the algorithm's query complexity is optimal, up to polylogarithmic factors.
\begin{figure}[ht!]
    \begin{center}
        \begin{tikzpicture}[
    every node/.style={font=\large},
    vertex/.style={circle, fill=black, inner sep=2pt},
    dummy/.style={circle, fill=gray!70, inner sep=2.5pt}
]

\draw[thick] (0,0) ellipse (4.8 and 0.75);
\draw[thick] (0,2.6) ellipse (3.7 and 0.75);


\node[vertex] (l1) at (-3.5,0) {};
\node at (-3.5,-0.35) {$u_1$};
\node[vertex] (l2) at (-2.3,0) {};
\node at (-2.3,-0.4) {$u_2$};
\node[vertex] (l3) at (-1.3,0) {};
\node at (-1.3,-0.4) {$u_3$};
\node at (-0.35,0) {$\cdots$};
\node[vertex] (li) at (0.5,0) {};
\node at (0.5,-0.4) {$u_i$};
\node at (1.5,0) {$\cdots$};
\node[vertex] (l4) at (2.3,0) {};
\node at (2.3,-0.4) {$u_{n-1}$};
\node[vertex] (l5) at (3.5,0) {};
\node at (3.5,-0.35) {$u_n$};

\node[dummy]  (r0) at (-2.5,2.5) {};
\node[above left=-1pt of r0] {$v_0$};
\node[vertex] (r1) at (-1.5,2.5) {};
\node[above left=-1pt of r1] {$v_1$};
\node at (-0.7,2.5) {$\cdots$};
\node[vertex] (r3) at (0,2.5) {};
\node[above left=-1pt of r3] {$v_j$};
\node at (0.8,2.5) {$\cdots$};
\node[vertex] (r4) at (1.5,2.5) {};
\node[vertex] (rq) at (2.6,2.5) {};
\node[above right=-1pt of rq] {$v_q$};

\node[vertex] (s) at (0,5) {};
\node[above=3pt of s] {$s$};

\foreach \r in {r0,r1,r3,r4,rq}
    \draw[thick, color=orange] (s) -- (\r);

\draw[thick,color=gray!70] (li) -- (r0);
\draw[thick,color=red] (li) -- (r1);
\draw[thick,color=red] (li) -- (r3);
\draw[thick,color=blue!70] (li) -- (r4);
\draw[thick,color=red] (li) -- (rq);

\draw[thick, color=gray!70] (l1) -- (r0);
\draw[thick, color=gray!70] (l2) -- (r0);
\draw[thick, color=gray!70] (l4) -- (r0);
\draw[thick, color=gray!70] (l3) -- (r0);
\draw[thick, color=gray!70] (l5) -- (r0);

\foreach \l in {l1,l2,l3,l4,l5}{
    \draw[thick,color=red]                (\l) -- ++(-0.28,0.95);
    \draw[thick,color=blue!70]  (\l) -- ++(0.00,1.00);
    \draw[thick,color=red]                (\l) -- ++(0.28,0.95);
    \draw[thick,color=blue!70]  (\l) -- ++(0.55,0.85);
}

\end{tikzpicture}
    \end{center}
    \caption{Pictorial representation of the graph construction for the lower bound. The $n$ vertices
    $u_1,\ldots,u_n$ encode $n$ independent $k$-threshold instances $x^{(1)},\ldots,x^{(n)}\in\{0,1\}^q$ on $q=\lceil 2m/n\rceil$ bits. For each $i\in[n]$ and $j\in[q]$, the edge $(u_i,v_j)$ has weight $x^{(i)}_j$: red edges have weight $1$ and correspond to $x^{(i)}_j=1$, while blue edges have
    weight $0$ and correspond to $x^{(i)}_j=0$. The $v_0$ (gray) is joined to every $u_i$ by an edge of weight $1$. The orange edges connect $s$ to every $v_j$, $0\leq j\leq q$, and all have the fixed weight
    $\mcl{W}=12m$.}
    \label{fig:lower-bound}
\end{figure}

We first prove a lower bound for obtaining $k$ independent classical samples from the spanning tree distribution $\pi$.

\begin{lemma}\label{lem:sampling-lower-bound}
Let $G=(V,E,w)$ be a weighted graph with $n$ vertices, $m$ edges and $w\in\mathbb{R}_{\ge0}^E$,
let $k\leq m/n$ be a positive integer, and let $\epsilon\leq \frac{1}{12k}$.
Any quantum algorithm that, given access to $\orcl$, the adjacency-list oracle, outputs $k$ independent
samples from a distribution $\wtd\pi$ which satisfies $\norm{\wtd\pi-\pi}_{\rm TV}\leq\epsilon$, where $\pi$ is
the spanning tree distribution of $G$ must make $\Omega(\sqrt{kmn}/\ln(nk))$ queries to $\orcl$ in the worst-case.
\end{lemma}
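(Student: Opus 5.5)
We reduce from $n$ independent copies of a $k$-threshold-type search problem, using the graph of \Cref{fig:lower-bound}. Set $q=\lceil 2m/n\rceil$. Each instance $x^{(i)}\in\{0,1\}^q$ has Hamming weight exactly $k$; the goal is to recover the whole marked set $\{j: x^{(i)}_j=1\}$. Encode $x^{(i)}$ in the weights of the edges $(u_i,v_j)$, and add the weight-$1$ edges $(u_i,v_0)$ and the heavy edges $(s,v_j)$ of weight $\mcl W=12m$. One query to $\orcl$ at position $(u_i,j)$ reveals $x^{(i)}_j$, so any $Q$-query algorithm on $G$ is a $Q$-query algorithm on the concatenated input $x=(x^{(1)},\ldots,x^{(n)})$ of length $nq=\Theta(m)$, with $|x|=nk$. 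We will use the known bound that finding all $K$ marked elements of an $N$-bit string requires $\Omega(\sqrt{NK})$ quantum queries. Strictly, we need the composed form: for most $i$, output a large constant fraction of the marked set of $x^{(i)}$. This should follow from the adversary/direct-product bound for $n$ independent $k$-out-of-$q$ search instances, giving $\Omega(n\sqrt{qk})=\Omega(\sqrt{kmn})$.

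\textbf{Structure of the spanning-tree distribution.} Because $\mcl W$ is large, with high probability a random spanning tree $T\sim\pi$ contains all $q+1$ orange edges. Each $u_i$ then attaches to $\{v_0,\ldots,v_q\}$ through exactly one of its $k+1$ weight-one edges. Concretely, the trees missing an orange edge have total weight at most an $O(m/\mcl W)$-fraction of the rest; the trees containing a $v$--$u$--$v$ path are handled similarly. Up to $1/12$-type error, $\pi$ therefore approximately factorizes: independently for each $i$, $u_i$ picks a uniformly random neighbor among $v_0$ and the $k$ vertices $v_j$ with $x^{(i)}_j=1$. A single sample thus reveals one marked index per instance with probability about $k/(k+1)$. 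The $k$ independent samples, each $\epsilon$-close in TV with $\epsilon\le 1/(12k)$, are jointly within TV $k\epsilon\le 1/12$ of $k$ exact samples. Via coupon-collector, $k$ exact samples reveal a constant fraction of the $k$ marked indices of each $x^{(i)}$ in expectation. Boosting to a constant fraction of all $nk$ marked bits, with a $\ln(nk)$ loss from repetition and concentration, explains the $\ln(nk)$ in the bound.

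\textbf{Conclusion and main obstacle.} Combining these, an algorithm producing $k$ samples with $Q$ queries yields an algorithm recovering a constant fraction of marked positions in the composed instance with $O(Q\ln(nk))$ queries. The composed search lower bound then gives $Q=\Omega(\sqrt{kmn}/\ln(nk))$. I expect the hardest step to be the rigorous approximate-factorization analysis of $\pi$ on this gadget graph. The delicate part is bounding the mass of trees in which some $u_i$ has degree larger than one or some orange edge is absent, and verifying that the choice $\mcl W=12m$ makes this mass at most a small constant. The second delicate point is making the composed "find most of the marked elements" lower bound precise; for this I would invoke a strong direct-product or multiplicative adversary theorem rather than reprove it.
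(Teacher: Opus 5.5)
\textbf{Gap: the final lower-bound step.} Several parts of your plan match the paper: the gadget graph, the good-tree estimate, the fact that conditioned on a good tree each $u_i$ attaches to a uniform element of $M_i=\{0\}\cup\{j:x^{(i)}_j=1\}$, and the observation that every revealed index is a genuine one. One small correction: with $\mathcal W=12m$ the bad-tree mass is only bounded by $(q+1)n/\mathcal W\le 1/3$, not $1/12$, but any constant suffices. Your product bound $d_{\rm TV}(\wtd\pi^{\otimes k},\pi^{\otimes k})\le k\epsilon\le 1/12$ is a valid substitute for the paper's per-sample drift argument.

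The problem is the final step, which is the heart of the lemma. You reduce to a relational problem with success criterion ``for most $i$, a large constant fraction of the support of $x^{(i)}$'', and nothing you invoke covers it.
\begin{itemize}
\item The $\Omega(\sqrt{NK})$ bound for finding marked items concerns arbitrary weight-$K$ strings of length $N=nq$, whose hard instances do not have exactly $k$ ones per block. Your reduction only works under that per-block promise: a block with many more ones would need many more samples.
\item A strong direct product theorem bounds the probability of solving \emph{all} $n$ instances exactly. It says nothing about solving most of them partially, which would require a threshold direct product theorem for a relation.
\end{itemize}
The plan is also internally inconsistent. A constant fraction in expectation needs no $\ln(nk)$ repetitions, whereas full recovery does. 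But for $r=\Theta(k\ln(nk))$ samples your joint TV estimate $r\epsilon$ exceeds $1$ and becomes vacuous.

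\textbf{How to repair your route.} Aim for full recovery and argue batch by batch. In one batch of $k$ samples, a fixed marked index of block $i$ is missed under $\pi^{\otimes k}$ with probability at most $(1-\tfrac{2}{3(k+1)})^k\le e^{-1/3}$. Hence it is revealed under $\wtd\pi^{\otimes k}$ with probability at least $1-e^{-1/3}-\tfrac{1}{12}>\tfrac15$. With $O(\ln(nk))$ independent batches and a union bound over the $nk$ marked indices, you recover every support with probability at least $1/2$. Under the weight-$k$ promise the support is a partial \emph{function} of $x^{(i)}$, with query complexity $\Theta(\sqrt{qk})$. A strong direct product theorem for computing all $n$ copies of a function (e.g.\ the multiplicative-adversary theorem of Lee and Roland) then gives $\Omega(n\sqrt{qk})=\Omega(\sqrt{kmn})$.

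\textbf{What the paper does instead.} It drops the promise and reduces from $n$ copies of the $k$-threshold \emph{decision} problem, outputting $b_i=1$ iff $|D_i(r)|\ge k$. Since revealed indices are always genuine ones, this reduction has one-sided error automatically. The one-sided direct product theorem of Ambainis, \v{S}palek and de Wolf~\cite{ASdW07} then applies verbatim. Its per-sample bound on the probability of revealing a new index plays the role of your batch argument.
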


\begin{proof}
Let $q=\lceil 2m/n\rceil$, so $q\geq 2k$ since $k\leq m/n$. We reduce from $n$ independent instances $x^{(1)},\dots,x^{(n)}\in\{0,1\}^q$ of the $k$-threshold problem, each defined on $q$ bits. In each instance $x^{(i)}$, the goal is to output $1$ iff $|x^{(i)}| \geq k$ i.e., given quantum query access to $x^{(i)}$ through the oracle $\mcl{O}_{x^{(i)}}\colon \ket{j,b}\mapsto \ket{j,b\oplus x^{(i)}_j}$. The graph $G$ (see \Cref{fig:lower-bound}) has vertices $s,v_0,\dots,v_q,u_1,\dots,u_n$ and edges $(s,v_j)$ of weight $\mcl W:=12m$ for $0\leq j\leq q$, $(u_i,v_0)$ of weight $1$ for $i\in[n]$, and $(u_i,v_j)$ of weight $x^{(i)}_j$ for $i\in[n]$, $j\in[q]$.

It has $n+q+2=O(n)$ vertices and $q+1+n(q+1)=\Theta(m)$ edges. The vertex degrees do not depend on $x$, and each query to $\orcl$ is can be simulated with at most one query to some $\mcl{O}_{x^{(i)}}$ (all other edges are fixed). The dummy vertex $v_0$ keeps $G$ connected even when some $x^{(i)}=0^q$, so $\pi$ is always defined. Let $M_i=\{0\}\cup\{j\in[q]:x^{(i)}_j=1\}$, so $|M_i|=|x^{(i)}|+1$.

We first show that most trees contain all edges joining  $s$. Call a spanning tree $T$ \emph{good} if $(s,v_j)\in T$ for all $0\leq j\leq q$. We know that the marginal of $(s,v_j)$ equals the leverage score, 
\[
    \Pr_{T\sim \pi}[(s,v_j)\in T]= \ell((s,v_j)) = \mcl W\cdot\reff^G(s,v_j).
\]
For a fixed $v_j$, \emph{short} all its neighbors other than $s$ into $s$, i.e., identify these vertices with $s$. By Rayleigh's Shorting law~\cite{DS84b}, this can only decrease the effective resistance between $s$ and $v_j$. The resulting network consists of the edge $(s,v_j)$, with resistance $1/\mcl W$, in parallel with $d_j\leq n$ unit-resistance edges, where $d_j = \bigl|\{i\in[n] : x^{(i)}_j=1\}\bigr|$. Thus $\reff^G(s,v_j)\geq 1/(n+\mcl W)$, so $\Pr_{T \sim \pi}[(s,v_j)\notin T]\leq n/(n+\mcl W)\leq n/\mcl W$. A union bound over the $q+1$ edges, gives 
\[
    \Pr_{T \sim \pi}[T\text{ not good}]\leq \frac{(q+1)n}{\mcl W} \leq \frac{4m}{\mcl W} = \frac{1}{3}.
\]

Next, we show that a good tree reveals a uniformly random element of $M_i$ for each instance $x^{(i)}$. Recall that a spanning tree $T$ is sampled with probability proportional to its weight, so any tree containing a zero-weight edge has weight zero and occurs with probability zero. Thus, we only need to consider positive-weight edges.

Condition on a good spanning tree $T$. Since all edges $(s,v_j)$ are present, every $v_j$ is already connected to $s$. Hence, each vertex $u_i$ needs exactly one incident edge to be connected to the rest of the tree. It cannot have two such edges, since this would create a cycle through $s$. Therefore, $u_i$ has exactly one incident edge, say $(u_i,v_{j_i(T)})$. For this edge to have positive weight, we must have $j_i(T)\in M_i$.

Conversely, every choice of $j_i\in M_i$ gives a good spanning tree of the same weight. Therefore, conditioned on $T$ being good, $j_i(T)$ is uniformly distributed over $M_i$. If $T$ is good and $j_i(T)\neq 0$, we say that $T$ \emph{reveals} the index $j_i(T)$ for $x^{(i)}$; otherwise $T$ reveals no index for $x^{(i)}$. A revealed index $j$ satisfies $x^{(i)}_j=1$, since $j_i(T)\in M_i$.

For each instance $i$, let $T_1,T_2,\ldots$ be independent samples from $\wtd{\pi}$, and let $D_i(t)$ be the set of indices revealed for $x^{(i)}$ by $T_1,\ldots,T_t$. Then $D_i(t)\subseteq D_i(t+1)$ and $|D_i(t)|\leq|x^{(i)}|$.

Suppose $|x^{(i)}|\geq k$. Fix $t$ and an outcome of $T_1,\ldots,T_t$ with $|D_i(t)|<k$, and let $\new$ be the event that $T_{t+1}$ reveals an index outside $D_i(t)$. Since $T_{t+1}\sim\wtd{\pi}$ is independent of $T_1,\ldots,T_t$, the probability $\Pr[\new\mid T_1,\ldots,T_t]$ is the probability that a fresh sample from $\wtd{\pi}$ reveals an index outside the set $D_i(t)$. For a tree $T\sim\pi$, $T$ is good with probability at least $2/3$, and conditioned on being good, $j_i(T)$ is uniform over $M_i$, which has $|x^{(i)}|+1$ elements, exactly $|x^{(i)}|-|D_i(t)|$ of which are new indices. Hence the probability that $T\sim\pi$ reveals an index outside $D_i(t)$ is at least
\[
\frac{2}{3}\cdot\frac{|x^{(i)}|-|D_i(t)|}{|x^{(i)}|+1}
=\frac{2}{3}\left(1-\frac{|D_i(t)|+1}{|x^{(i)}|+1}\right)
\geq\frac{2}{3}\left(1-\frac{|D_i(t)|+1}{k+1}\right)
=\frac{2}{3}\cdot\frac{k-|D_i(t)|}{k+1}.
\]
Our samples, however, come from $\wtd{\pi}$, and $\norm{\wtd{\pi}-\pi}_{\rm TV}\leq\epsilon$ changes the probability of any event by at most $\epsilon$, and $\epsilon\leq 1/(12k)\leq (k-|D_i(t)|)/(12k)$ and $k+1\leq 2k$,
\[
\Pr[\new\mid T_1,\ldots,T_t]\;\geq\;\frac{2(k-|D_i(t)|)}{3(k+1)}-\frac{k-|D_i(t)|}{12k}\;\geq\;\frac{k-|D_i(t)|}{4k}.
\]

Let $Y_t:=\max\{k-|D_i(t)|,0\}$ be the number of indices still missing, so that $|D_i(t)|<k$ if and only if $Y_t\geq 1$. We claim that $\E[Y_{t+1}\mid T_1,\ldots,T_t]\leq\rho\,Y_t$. If $Y_t=0$, then $Y_{t+1}=0$. Otherwise, since $D_i(t)\subseteq D_i(t+1)$, we always have $Y_{t+1}\leq Y_t$, and on $\new$ we have $Y_{t+1}\leq Y_t-1$. Thus $Y_{t+1}\leq Y_t-\mathds{1}_{\new}$, and by the bound above $\Pr[\new\mid T_1,\ldots,T_t]\geq Y_t/(4k)$, so $\E[Y_{t+1}\mid T_1,\ldots,T_t]\leq Y_t-Y_t/(4k)=\left(1-\frac{1}{4k}\right)\,Y_t$. Averaging over all outcomes of $T_1,\ldots,T_t$ gives $\E[Y_{t+1}]\leq\left(1-\frac{1}{4k}\right)\,\E[Y_t]$, and since $Y_0=k$, induction on $t$ gives $\E[Y_r]\leq k\left(1-\frac{1}{4k}\right)^{r}$. By Markov's inequality, and choosing $r=k\lceil 8\ln(nk)\rceil\geq 8k\ln(nk)$,
\[
\Pr_{(T_1,\ldots,T_r)\sim\wtd{\pi}^{\otimes r}}[|D_i(r)|<k]=\Pr[Y_r\geq 1]\leq\E[Y_r]\leq k\left(1-\tfrac{1}{4k}\right)^{r}\leq k\,e^{-r/(4k)}\leq k(nk)^{-2}.
\]

We now describe an algorithm for the $k$-threshold problem on $n$ independent instances, given quantum query access to $x^{(1)},\ldots,x^{(n)}$. It runs the assumed sampling algorithm independently $\lceil 8\ln(nk)\rceil$ times, answering each query to $\orcl$ with at most one query to some $\mcl{O}_{x^{(i)}}$. This gives $r=k\lceil 8\ln(nk)\rceil$ independent samples $T_1,\ldots,T_r$ from $\wtd{\pi}$. From these trees alone, without further queries, it computes $D_i(r)$ for every $i\in[n]$ and outputs $(b_1,\ldots,b_n)$, where $b_i=1$ if and only if $|D_i(r)|\geq k$. If the sampling algorithm makes $Q$ queries to $\orcl$, this algorithm makes $O(Q\ln(nk))$ queries.

The algorithm succeeds if $b_i=1$ exactly when $|x^{(i)}|\geq k$ for all $i\in[n]$ and it has one-sided error if every output $b_i=1$ is correct, that is, if $b_i=1$ only when $|x^{(i)}|\geq k$. Our algorithm has one-sided error: if $|x^{(i)}|<k$, then $|D_i(r)|\leq|x^{(i)}|<k$, since every revealed index $j$ satisfies $x^{(i)}_j=1$, so $b_i=0$ always. Hence it can fail only on an instance with $|x^{(i)}|\geq k$, where $b_i=0$ with probability at most $k(nk)^{-2}$ by the bound above. A union bound over at most $n$ such instances shows that it succeeds with probability at least $1-nk\cdot(nk)^{-2}=1-1/(nk)\geq 1/2$.

Since $k\leq q/2$, the bounded-error quantum query complexity of the $k$-threshold function on $q$ bits is $\Theta(\sqrt{kq})$~\cite{BBC+01}. By the direct product theorem for one-sided-error algorithms~\cite[Theorem~9]{ASdW07}, any one-sided-error quantum algorithm for $n$ independent instances of the $k$-threshold problem that makes at most $\alpha n\sqrt{kq}$ queries, for a sufficiently small constant $\alpha>0$, succeeds with probability at most $2^{-\Omega(nk)}<1/2$. Our algorithm succeeds with probability at least $1/2$, so it makes $\Omega(n\sqrt{kq})=\Omega(\sqrt{kmn})$ queries. Hence $Q\ln(nk)=\Omega(\sqrt{kmn})$, i.e.,
\[
Q=\Omega\left(\frac{\sqrt{kmn}}{\ln(nk)}\right).
\]
\end{proof}

Since measuring a q-sample yields a classical sample, the bound extends to q-samples.
\begin{theorem}\label{thm:lower-bound}
    Let $\epsilon\leq 1/(12k)$, let $n\leq m$ with $n$ sufficiently large, and let $k\leq m/n$. Any algorithm that, given adjacency-list access $\orcl$ to a weighted graph $G=(V,E,w)$ with $n$
    vertices, $m$ edges and $w\in\mathbb{R}_{\ge0}^E$, prepares $k$ independent copies of $\ket{\wtd\pi}$ with $\norm{\ket{\wtd\pi}-\ket{\pi}}_2\leq\epsilon$ makes $\Omega(\sqrt{kmn}/\ln(nk))$ queries to $\orcl$ in the worst case.
\end{theorem}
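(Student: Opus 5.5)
The plan is to reduce to \Cref{lem:sampling-lower-bound}: measure each prepared copy in the computational basis and check that the resulting classical distribution is close enough to $\pi$ in total variation. Suppose an algorithm $\mathcal A$ prepares $k$ independent copies of $\ket{\wtd\pi}$ with $\norm{\ket{\wtd\pi}-\ket{\pi}}_2\le\epsilon$ using $Q$ queries to $\orcl$. Measuring each copy in the basis $\{\ket T\}$ gives $k$ independent classical samples from $p(T):=|\braket{T}{\wtd\pi}|^2$. The measurement costs no queries, so this yields a sampler with the same query count $Q$.

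The main step is to bound $\norm{p-\pi}_{\rm TV}$ in terms of the $\ell_2$ distance. Write $a_T=\braket{T}{\wtd\pi}$ and $b_T=\sqrt{\pi(T)}$. Then
\[
\sum_T\bigl||a_T|^2-b_T^2\bigr|=\sum_T\bigl||a_T|-b_T\bigr|\,\bigl(|a_T|+b_T\bigr).
\]
By Cauchy--Schwarz, and since $\bigl||a_T|-b_T\bigr|\le|a_T-b_T|$, this is at most $\norm{\ket{\wtd\pi}-\ket\pi}_2\cdot 2=2\epsilon$. Hence $\norm{p-\pi}_{\rm TV}\le\epsilon$, as the total variation distance is half of this sum. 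If the state also carries junk registers (that is, $\ket{\wtd\pi}$ is entangled with ancillas), I would work with the reduced state instead. The same estimate then holds via the trace distance / fidelity bound $\norm{\rho-\ket\pi\!\bra\pi}_1\le 2\epsilon$, followed by data processing under the measurement. Either way, the measured distribution is $\epsilon$-close to $\pi$, with $\epsilon\le 1/(12k)$.

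Next I check that the hypotheses of \Cref{lem:sampling-lower-bound} are met: $k\le m/n$, $\epsilon\le 1/(12k)$, and nonnegative weights (the hard instance uses weights in $\{0,1,12m\}$, which are allowed). The lemma then gives $Q=\Omega(\sqrt{kmn}/\ln(nk))$. I also need to adjust parameters to match the statement. The hard graph has $n'=n+q+2=\Theta(n)$ vertices and $\Theta(m)$ edges, so I rescale $n,m$ by constants, padding with isolated-from-weight dummy edges or vertices if exact counts are required. This affects only constants, and it is where the assumption that $n$ is sufficiently large is used.

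The one subtle point is the independence of the $k$ copies. The lemma requires $k$ independent samples, and measuring a product state $\ket{\wtd\pi}^{\otimes k}$ yields independent outcomes, so this is immediate under the stated assumption of independent copies. I expect the only real obstacle to be the ancilla/entanglement issue in the $\ell_2$-to-TV conversion, which the trace-distance argument handles.
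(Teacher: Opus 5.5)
Your proposal is correct and follows the paper's proof: you measure each of the $k$ independent copies in the spanning-tree basis, observe that the outcome distribution is within total variation distance $\epsilon$ of $\pi$, and invoke \Cref{lem:sampling-lower-bound}. Your Cauchy--Schwarz computation, the trace-distance remark for ancillas, and the constant-factor rescaling of $n,m$ make explicit the $\ell_2$-to-TV step and the parameter bookkeeping that the paper compresses into one line.
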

\begin{proof}
    Since $\norm{\ket{\wtd\pi}-\ket{\pi}}_{\rm TV} \leq \norm{\ket{\wtd\pi}-\ket{\pi}}_2\leq\epsilon$.
    Measuring in the computational basis (the basis of spanning trees) yields a classical sample of spanning tree from the distribution $\ket{\wtd\pi}$ which is within TV distance $\epsilon$ of $\pi$. Measuring each of the $k$ independent copies gives $k$ independent classical samples from it, at no
    extra query cost, so \Cref{lem:sampling-lower-bound} applies.
\end{proof}

We note that the restriction $k\leq m/n$ is necessary: reading the whole adjacency list takes $O(m)$ queries
and afterwards any number of samples (or copies of $\ket{\pi}$) can be produced without further
queries, so no bound above $O(m)$ is possible, and $\sqrt{kmn}>m$ exactly when $k>m/n$.
For $k>m/n$, an algorithm producing $k$ samples also produces $k'=\lfloor m/n\rfloor$ samples, so
\Cref{lem:sampling-lower-bound} with $k'$ gives $\Omega(m/\ln m)$. Hence the query complexity is
$\wtd\Theta(\min\{\sqrt{kmn},\,m\})$ for all $k$.

\newpage
\section*{Acknowledgements}
This work was supported by the Maison du Quantique de Nouvelle-Aquitaine ``HybQuant'', as part of the HQI initiative and France 2030, under the French National Research Agency (ANR) grant ANR-22-PNCQ-0002. It was also supported by the PEPR integrated project EPiQ under ANR grant ANR-22-PETQ-0007. We acknowledge the use of AI-based tools solely for language editing and proofreading.

\printbibliography[heading=bibintoc]

\appendix

\section{Useful quantum sub-routines}
\label{app:sec-q-subroutines}
\begin{lemma}[\sc Repeated Grover search, \cite{AdW20c,Gro97j}]\label{lem:repeated-grover}
    Let $f : [N] \rightarrow  \{0, 1\}$  be a function that
    marks a set of elements $S = \{i \in [N] \  | \ f(i) = 1\}$. Then for any $\eta >0$, there is a quantum algorithm that finds $S$ with probability at least $1-\eta$ in $\toh(\sqrt{N \cdot|S|})$ elementary operations and queries to $f$, and uses $O(\log{N})$ qubits and a QRAM of $\toh(|S|)$ bits.
\end{lemma}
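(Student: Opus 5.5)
The statement to prove is the repeated Grover search lemma: find all of the marked set $S$ with probability $1-\eta$, using $\toh(\sqrt{N|S|})$ queries and operations. The plan is to find the marked elements one at a time with Grover search, each time excluding the elements already found. Suppose $j$ elements have been found and are stored in a QRAM list $L$. We run Grover search with the modified marking function $f_L(i)=f(i)\wedge[i\notin L]$. Evaluating $f_L$ costs one query to $f$ plus a QRAM membership lookup, which takes $\toh(1)$ time if $L$ is kept as a sorted array or balanced search tree. This is the only classical-write/quantum-read memory used, and it holds $\toh(|S|)$ bits. The remaining search space has $|S|-j$ marked elements.

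The first issue is that $|S|$ is unknown, so we use the exponential-search variant of Grover (Boyer--Brassard--H\o yer--Tapp). When $t\ge1$ solutions remain, this variant returns a marked element in expected $O(\sqrt{N/t})$ queries. Summing over $t=|S|,|S|-1,\dots,1$ gives
\[
\sum_{t=1}^{|S|}\sqrt{N/t}=O(\sqrt{N|S|}).
\]
Terminating when no marked elements remain requires a certified "empty" test. A search run at $O(\sqrt N)$ cost that reports nothing, repeated $O(\log(1/\eta'))$ times, declares emptiness with error at most $\eta'$. This adds $O(\sqrt N\log(1/\eta'))\le \toh(\sqrt{N|S|})$ when $|S|\ge1$; the case $|S|=0$ is handled by the same test.

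The main obstacle is turning expected-cost bounds into a high-probability bound of $1-\eta$ while avoiding an extra $\log|S|$ factor per element in the $\toh$ bound. I would first try Markov's inequality on the total expected cost. We cap the total cost at $4\cdot C\sqrt{N|S|}$, where $|S|$ is replaced by the current count plus a doubling guess. This gives constant success probability, and $O(\log(1/\eta))$ independent restarts then boost it. Alternatively, each search at stage $t$ can be run with failure probability $\eta/(2|S|)$, at the cost of a $\log(|S|/\eta)$ factor, which the $\toh$ absorbs. I would take this simpler route and note that with it the bound carries a $\log(1/\eta)$ factor, matching how the lemma is used in \Cref{lem:heavy-set-reff-overestimates}.

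Correctness follows because every element output is verified with $f$, so false positives never occur. A union bound over the at most $|S|+1$ searches gives overall failure probability at most $\eta$. The $O(\log N)$ qubits count holds because each Grover iteration acts on an index register together with $O(1)$ ancillas.
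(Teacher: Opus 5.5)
Your argument is correct and is the standard iterative Grover (BBHT exponential-search) argument behind this lemma; the paper gives no proof of its own and simply cites the lemma from \cite{AdW20c,Gro97j}. The one slip is that the algorithm does not know $|S|$, so the per-stage failure probability must come from a known quantity, e.g.\ $\eta/(2(j+1)^2)$ once $j$ elements have been found, or simply $\eta/(2N)$. Each stage is then cleanest as a doubling search with $O(\log(1/\delta))$ random Grover trials per level: when $t\ge1$ elements remain, it finds a new one with probability $1-\delta$ using $O(\sqrt{N/t}\log(1/\delta))$ queries, and when $t=0$ it acts as your emptiness test at cost $O(\sqrt{N}\log(1/\delta))$ (so for $S=\emptyset$ the bound must be read as $\toh(\sqrt{N})$).
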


We use Grover's $\frac{\pi}{3}$-amplitude amplification (fixed point search, \cite{Gro05j}) as in \cite{WA08j} to move between states in the cooling schedule.

\begin{lemma}[\sc Fixed-point amplitude amplification; adapted from {\cite[Corollary 1]{WA08j}}]\label{lem:fxd-pnt-amp}
    Let $\ket{\psi}$ and $\ket{\phi}$ be arbitrary quantum states in $\mathbb{C}^d$ with $\abs{\braket{\psi}{\phi}} \geq B$. Denote by $\Pi_{\psi}$ the projection on the subspace spanned by $\ket{\psi}$ and by $\Pi_{\psi}^{\perp}$ the projection onto the orthogonal subspace. Let $\omega=e^{\frac{\pi}{3} i}$. Define the unitaries $R_0 =\omega \Pi_{\psi}+\Pi_{\psi}^{\perp}$ and $ R_{1} =\omega \Pi_{\phi}+\Pi_{\phi}^{\perp}$. Given the state $\ket{\psi}$, we can prepare a state $\ket{\widetilde{\phi}}$ such that $\norm{\ket{\widetilde{\phi}}-\ket{\phi}}_2 \leq \epsilon_1$, for any $\epsilon_1>0$, by invoking the unitaries from $R_0$ and $R_1$ no more than
    \[
        L=\frac{12\ln\left(2/ \epsilon_1\right)}{\ln(1 /(1-B))}
    \]
    times.
\end{lemma}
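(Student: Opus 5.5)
The plan is to use Grover's $\pi/3$ fixed-point recursion and to check that its error cubes at every level. Set $U_0:=I$ and
\[
U_{j+1}:=U_j R_0 U_j^\dagger R_1 U_j .
\]
Here $R_0=\omega\Pi_\psi+\Pi_\psi^\perp$ and $R_1=\omega\Pi_\phi+\Pi_\phi^\perp$, as in the statement. The output will be $\ket{\widetilde\phi}:=U_J\ket\psi$ for a suitable depth $J$. Let $D_j$ be the number of applications of $R_0^{\pm1},R_1^{\pm1}$ inside $U_j$. Then $D_{j+1}=3D_j+2$, so $D_J=O(3^J)$. Note that $U_j^\dagger$ needs $R_0^\dagger,R_1^\dagger$; these are the same kind of operators with $\omega$ replaced by $\bar\omega$, and I will count them as invocations.

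The core step is Grover's error identity. Write
\[
\delta_j:=\norm{\Pi_\phi^\perp U_j\ket\psi}_2^2 ,
\]
the leakage outside $\ket\phi$ after $j$ levels. I want to show $\delta_{j+1}=\delta_j^3$. To do this, expand $U_{j+1}\ket\psi=U_jR_0U_j^\dagger R_1U_j\ket\psi$. Decompose $U_j\ket\psi=a\ket\phi+b\ket{\phi^\perp}$ with $|b|^2=\delta_j$. Use $R_1=I+(\omega-1)\Pi_\phi$, and $R_0=I+(\omega-1)\Pi_\psi$ after conjugation by $U_j$. Multiplying out gives an explicit expression for $\bra{\phi^\perp}U_{j+1}\ket\psi$. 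Grover's choice $\omega=e^{i\pi/3}$ is exactly what makes $|1+(\omega-1)|a|^2|^2=1-|a|^2+|a|^2\cdot\ldots$ collapse to $|b|^4$. The result is $|\bra{\phi^\perp}U_{j+1}\ket\psi|=|b|^3$, that is, $\delta_{j+1}=\delta_j^3$. This algebra sits in a two-dimensional subspace: the span of $U_j\ket\psi$ and $\ket\phi$, pulled back through $U_j$. I expect this verification to be the only nontrivial computation, and I would follow Grover's 2005 calculation line by line.

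Iterating gives $\delta_J=\delta_0^{3^J}\le(1-B^2)^{3^J}$, since $\delta_0=1-|\braket{\psi}{\phi}|^2$. Then $U_J\ket\psi=c\ket\phi+\ket{r}$ with $\norm{r}^2=\delta_J$ and $|c|^2=1-\delta_J$. The norm error is then controlled by $\delta_J$ together with the phase of $c$. That phase is a fixed, computable global phase: it depends only on the recursion and on the overlap, and in the simulated-annealing use it is absorbed as a global phase, as in \cite{WA08j}. Granting this, $\norm{\ket{\widetilde\phi}-\ket\phi}_2^2\le 2\delta_J$. So it suffices to reach $\delta_J\le\epsilon_1^2/2$, which holds once
\[
3^J\ge \frac{\ln(2/\epsilon_1^2)}{\ln(1/(1-B^2))}.
\]
Take the smallest such $J$, so that $3^J$ is at most $3$ times this bound. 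Then $D_J\le \tfrac{3}{2}\cdot 3^J$, and $\ln(2/\epsilon_1^2)\le 2\ln(2/\epsilon_1)$. Finally $1-B^2\le 1-B$ gives $\ln(1/(1-B^2))\ge\ln(1/(1-B))$. Together these bound the invocation count by $12\ln(2/\epsilon_1)/\ln(1/(1-B))$, with room to spare in the constant.

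The main obstacle is the step $\delta_{j+1}=\delta_j^3$; everything after it is bookkeeping of the recursion depth. A secondary issue is the phase of $c$. If a phase-exact statement is needed, I would instead bound $\min_\theta\norm{\ket{\widetilde\phi}-e^{i\theta}\ket\phi}_2$ and note that the correcting phase is known.
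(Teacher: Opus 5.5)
Your recursion is the right route (Grover's $\pi/3$ scheme, which the paper imports from Wocjan--Abeyesinghe without proof), and $\delta_{j+1}=\delta_j^3$ does hold. Write $U_j\ket\psi=a\ket\phi+b\ket{\phi^\perp}$. The component of $U_{j+1}\ket\psi$ orthogonal to $\ket\phi$ is then $b\bigl(\omega+(1-\omega)^2|a|^2\bigr)=\omega|b|^2\,b$. Your displayed factor $1+(\omega-1)|a|^2$ is not the relevant one: its squared modulus is $1-|a|^2+|a|^4$, not $|b|^4$.

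\textbf{The final step fails.} Since $B^2\le B$, you have $1-B^2\ge 1-B$, hence $\ln(1/(1-B^2))\le\ln(1/(1-B))$. This is the opposite of the inequality you use. It is not a matter of constants.

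\begin{itemize}
\item The cubing identity is exact. If $|\braket{\psi}{\phi}|=B$, then $\delta_J=(1-B^2)^{3^J}$.
\item Since $\min_\theta\norm{U_J\ket\psi-e^{i\theta}\ket\phi}_2^2\ge\delta_J$, reaching error $\epsilon_1$ forces $3^J\ge 2\ln(1/\epsilon_1)/\ln(1/(1-B^2))$. For small $B$ this is roughly $2\ln(1/\epsilon_1)/B^2$ invocations.
\item The claimed $L$ is at most $12\ln(2/\epsilon_1)/B$.
\end{itemize}

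The $\pi/3$ recursion gives no quadratic speedup in the overlap, so it cannot prove the statement with $B$ bounding the unsquared overlap.

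What your argument does prove is the version where $B$ lower-bounds $|\braket{\psi}{\phi}|^2$. Then $\delta_0\le 1-B$, and with $D_J=3^J-1$ your bookkeeping gives at most $6\ln(2/\epsilon_1)/\ln(1/(1-B))$ invocations. That version is all the paper uses: the overlap there is the constant $e^{-1}$, and the count enters \Cref{lem:q-sample-transform-walk} only inside $\widetilde O(\cdot)$. With $B=e^{-1}$ unsquared, however, even the constant $12$ is not recovered, since $\ln(1/(1-B))/\ln(1/(1-B^2))\approx 3.2$ there.

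\textbf{The phase of $c$ is not known, so it cannot be corrected.} Tracking the $\ket\phi$-component gives $a_{j+1}=a_j(\omega\delta_j-\bar\omega)$ with $\delta_j=\delta_0^{3^j}$. Hence $\arg c=\arg\braket{\phi}{\psi}+\sum_{j<J}\arg(\omega\delta_j-\bar\omega)$. Each summand moves between $\pi/2$ and $2\pi/3$ as $\delta_j$ ranges over $[0,1]$. Only the bound $\delta_0\le 1-B^2$ is available, so this phase cannot be computed and undone.

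For example, suppose $\ket\psi=\ket\phi$ but $B$ and $\epsilon_1$ dictate $J=1$. The output is $R_0R_1\ket\phi=e^{2\pi i/3}\ket\phi$, at distance $\sqrt3$ from $\ket\phi$.

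So the construction only yields $\min_\theta\norm{\ket{\widetilde\phi}-e^{i\theta}\ket\phi}_2\le\epsilon_1$, and you should state and prove the lemma in that form. This weaker form suffices downstream. Each transformation in the annealing chain is a fixed unitary applied linearly to the previous output, so the unknown global phases multiply through and are unobservable in the final q-sample.
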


\begin{lemma}[\sc Q-sample transformation from quantum walks; adapted from {\cite[Corollary~1, Corollary~2]{WA08j}}]\label{lem:q-sample-transform-walk}
Let $W_\psi$ and $W_\phi$ be quantum walk operators whose unique $1$-eigenvectors are $\ket{\psi}$ and $\ket{\phi}$, respectively, and suppose their phase gaps are at least $\gamma_{\rm ph}$. Assume $\abs{\braket{\psi}{\phi}}\ge B$. Then, for any $\epsilon>0$, given $\ket{\psi}$, there is a quantum circuit that outputs a state $\ket{\widetilde\phi}$ satisfying $\norm{\ket{\widetilde\phi}-\ket{\phi}}_2\le\epsilon$ using
\[
    \widetilde O\left(\frac{1}{\gamma_{\rm ph}}\cdot\frac{\log(1/\epsilon)}{\log(1/(1-B))}\right)
\]
controlled applications of $W_\psi,W_\phi$ and their inverses.
\end{lemma}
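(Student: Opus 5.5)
The plan is to build the unitary as the composition of two reflection-like operators, each implemented from one of the walk operators, and feed them into the fixed-point amplitude amplification of \Cref{lem:fxd-pnt-amp}. We need (approximate) implementations of the phase-$\pi/3$ rotations $R_0=\omega\Pi_\psi+\Pi_\psi^\perp$ and $R_1=\omega\Pi_\phi+\Pi_\phi^\perp$. We will not build these projectors directly. Instead we obtain them by phase estimation on $W_\psi$ and $W_\phi$, as in \cite{WA08j}.

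\textbf{Step 1: approximate phase rotations.} We run phase estimation of $W_\psi$ with precision of order $\gamma_{\rm ph}$ on the busy subspace. We use $s$ parallel repetitions and a majority vote to flag whether the estimated phase is $0$. We then multiply the flagged branch by $\omega$ and uncompute. By \Cref{thm:labelled-szegedy}, $\ket\psi$ is the unique $1$-eigenvector on the busy subspace, and all other eigenphases are at least $\gamma_{\rm ph}$ away. Hence this circuit $\widetilde R_0$ acts as $\omega$ on $\ket\psi$ exactly and approximates the identity phase on the orthogonal complement. The error is $2^{-\Omega(s)}$ in operator norm on the relevant subspace. The cost is $O(s/\gamma_{\rm ph})$ controlled applications of $W_\psi^{\pm1}$. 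We build $\widetilde R_1$ the same way from $W_\phi$.

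\textbf{Step 2: fixed-point iteration.} We apply the recursive $\pi/3$ construction of \Cref{lem:fxd-pnt-amp} with the ideal $R_0,R_1$. This uses $L=O(\log(1/\epsilon)/\log(1/(1-B)))$ invocations and yields error $\epsilon/2$. We then replace each $R_i$ by $\widetilde R_i$. By the standard hybrid argument (a telescoping sum, as in the proof of \Cref{thm:w-optimal-query}), the total deviation is at most $L$ times the per-invocation error. We therefore choose $s=O(\log(L/\epsilon))$ so that each per-invocation error is at most $\epsilon/(2L)$. The total cost is $L\cdot O(s/\gamma_{\rm ph})$, which matches the claimed bound because the $\log L$ and $\log\log$ factors are absorbed into $\widetilde O$.

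\textbf{Main obstacle.} The main obstacle is justifying Step 1 on the correct subspace. The walk operators act on the enlarged space $\mcl K$, where the q-sample is $\ket\psi\ket 0$. Outside the busy subspace, $W$ may have additional $1$-eigenvectors, which would be wrongly flagged. To handle this, I would argue that all states encountered during the iteration lie in the span of $\ket\psi\ket0$, $\ket\phi\ket0$ and their images. If both walks share the same $\mcl A$ and the states remain in $\mcl A$ plus the respective busy subspaces, the leakage is controlled. The fallback is the Szegedy-specific fact used in \cite{WA08j}: restricted to $\mcl A$, the phase-estimation-based projector is close to $\ket\psi\bra\psi\otimes\ket0\bra0$, since every $1$-eigenvector of $W$ with nonzero overlap on $\mcl A$ is $\ket\psi\ket0$. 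Composing with a reflection about $\mcl A$ (which costs no queries) then keeps the whole construction within $\mcl A$, up to the already-accounted errors.
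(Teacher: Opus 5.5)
Your proposal is correct, and it is essentially the Wocjan--Abeyesinghe argument that the paper invokes by citation; the paper gives no proof of its own. That argument obtains approximate $\pi/3$ phase shifts from boosted phase estimation on $W_\psi,W_\phi$, feeds them into the fixed-point recursion of \Cref{lem:fxd-pnt-amp}, and controls the error with a hybrid argument.

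Two small remarks. First, the obstacle you raise resolves directly. The hybrid argument only evaluates $\widetilde R_i-R_i$ on the ideal intermediate states, which lie in $\Span\{\ket\psi\ket0,\ket\phi\ket0\}\subseteq\mcl A$. Since $\mcl A$ is common to both walks, these states lie in both busy subspaces, so your fallback is unnecessary. Second, the boosting parameter $s=O(\log(L/\epsilon))$ contributes a second $\log(1/\epsilon)$ factor, not just $\log L$. That factor is absorbed only because $\widetilde O$ hides polylogarithmic factors, as in the cited source.
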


\begin{theorem}[\sc Quantum Maximum-Weight-Product Spanning Tree~\cite{AGJL25c}, \\adapted from~\cite{DHHM06j}]\label{thm:max-product}
    There exists a quantum algorithm that, for any $\eta>0$, given adjacency query access $\mathcal{O}_G$ to an undirected weighted graph $G=(V,E,w)$ with $n$ vertices and $m$ edges, whose edge weights are bounded by a constant $W$, outputs a maximum-weight-product spanning tree of $G$ with probability at least $1-\eta$. The algorithm uses $O(\sqrt{mn}\,\ln(1/\eta))$ queries to $\mathcal{O}_G$ and runs in $\tilde{O}(\sqrt{mn})$ time.
\end{theorem}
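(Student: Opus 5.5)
The plan is to reduce to the quantum minimum spanning tree algorithm of D\"urr, Heiligman, H{\o}yer and Mhalla~\cite{DHHM06j} in the adjacency-list model. The key observation is that a maximum-weight-product spanning tree is exactly a maximum spanning tree for the transformed weights $\log w(e)$, with $\log 0:=-\infty$. Since $\log$ is strictly increasing on $[0,\infty)$, this transform preserves the order of the edge weights.

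First I will use that greedy spanning tree algorithms (Kruskal, Prim, Bor\r{u}vka) access the weights only through pairwise comparisons. So the maximum spanning tree for $\log w$ coincides with the maximum spanning tree for $w$ itself, and no logarithm ever needs to be computed. Concretely, I will run the algorithm of~\cite{DHHM06j} on the order obtained by reversing the order of $w$. Equivalently, I will replace its minimum-finding subroutine by D\"urr--H{\o}yer maximum finding. Ties will be broken lexicographically by the canonical edge label $(\min\{u,v\},\max\{u,v\})$, so that the optimum is unique and the cut property applies without degeneracies.

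The cut property for maximum-weight edges then gives correctness of the Bor\r{u}vka-style phases. In each phase, for every current component we find a heaviest edge leaving it, using quantum maximum finding over the adjacency lists of that component's vertices. We then contract the components along the found edges. I will invoke the query analysis of~\cite{DHHM06j} unchanged, giving $O(\sqrt{mn})$ adjacency-list queries for constant success probability. That analysis only uses the oracle interface (degrees and $i$th neighbours), which our $\mcl O_G$ provides, and it is insensitive to reversing the weight order. The assumption that weights are bounded by $W$ guarantees $O(\log W)$-bit representations, so each comparison costs $O(1)$ time per query; this is what yields the $\toh(\sqrt{mn})$ time bound.

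Edges of weight zero sit at the bottom of the order. They are selected only if no positive edge crosses some cut. If they are selected, the optimum product is $0$; then every tree has product $0$ and any spanning tree is trivially optimal. Finally, to boost the success probability to $1-\eta$, I will run the procedure $O(\ln(1/\eta))$ times independently. Each output is a candidate edge list with its weights, obtained during the search. I will output the candidate with the largest product, comparing products via sums of logarithms classically, at no query cost. A correct run exists among the repetitions with probability $1-\eta$, and no incorrect candidate can beat it. I expect the main obstacle to be checking that the amortized $O(\sqrt{mn})$ accounting of~\cite{DHHM06j} survives in the weighted adjacency-list variant with zero-weight edges; I would handle this by treating zero weights as ordinary smallest values rather than deleting those edges, so that the degree structure is unchanged.
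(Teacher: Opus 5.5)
Your proposal is correct and follows the same route as the paper. The paper gives no proof of its own: it imports the result from \cite{AGJL25c}, which obtains it from the adjacency-list quantum MST algorithm of \cite{DHHM06j} using the observation you make, namely that greedy spanning-tree algorithms depend only on the order of the weights, so a maximum spanning tree for $w$ (equivalently, for $\log w$) maximizes the weight product.

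One point in your amplification step needs tightening. Before comparing products, discard any candidate that is not a spanning tree of $G$; this is a classical check that costs no queries. Without it, your claim that no incorrect candidate can beat the correct one can fail: a failed or truncated run might return a forest with fewer edges, and if those edges have weight below $1$ its product exceeds that of the optimal tree.
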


\section{Coined quantum walk}
\label{app:sec-coin-walk}
We restate and prove \Cref{thm:labelled-szegedy}. We use the notation and labelled-transition setup from \Cref{sec:quantum-walk}.

\begin{rtheorem}[\Cref{thm:labelled-szegedy} ({\sc Spectral mapping for labelled Szegedy walks})]
Suppose $P$ is a reversible, irreducible, and aperiodic Markov chain on $\mcl X$ with stationary distribution $\pi$. Then, on the busy subspace, the unique $1$-eigenvector of $W(P)$ is $\ket{\pi}\ket{0}$. Moreover,
\[
    \gamma_{\rm ph}(W(P))
    =\Omega(\sqrt{\delta_{\rm abs}(P)}).
\]
\end{rtheorem}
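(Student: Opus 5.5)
The plan is to reduce the labelled walk to the standard Szegedy setting by identifying the relevant two-reflection structure and applying Jordan's lemma. Define the isometry $V:=\mcl U\,|_{\mcl A}$, i.e. $V\ket{x}\ket{0}=\ket{x}\ket{p_x}$ with $\ket{p_x}:=\sum_{a\in C_x}\sqrt{\Pr[a\mid x]}\ket{a}$. Then $\Pi_{\mcl A}$ projects onto $\Span\{\ket x\ket 0\}$ and $\Pi_{\mcl B}=\mcl U^\dagger\mcl S\mcl U\,\Pi_{\mcl A}\,\mcl U^\dagger\mcl S^\dagger\mcl U$. Here we use that $\mcl S$ is an involution, so $\mcl S=\mcl S^\dagger$. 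The first step is to compute the discriminant
\[
    D_{xy}:=\bra{x}\bra{0}\Pi_{\mcl A}\Pi_{\mcl B}\Pi_{\mcl A}\ket{y}\ket{0}
\]
in terms of $\bra{x,p_x}\mcl S\ket{y,p_y}$. For $x\neq y$, the flip-flop involution $g$ pairs the unique label $a_{x,y}$ with $a_{y,x}$, and hence
\[
    \bra{x,p_x}\mcl S\ket{y,p_y}=\sqrt{\Pr[a_{x,y}\mid x]\Pr[a_{y,x}\mid y]}=\sqrt{P(x,y)P(y,x)}.
\]
For $x=y$, only the self-loop labels contribute, since they are fixed by $g$, and they give $\sum_{a\text{ loop}}\Pr[a\mid x]=P(x,x)$. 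I will need to check that no non-loop label of $x$ maps back to $x$ itself, which holds because $h_x(a)\neq x$ for non-loop labels. By reversibility, $\sqrt{P(x,y)P(y,x)}=\sqrt{\pi(x)/\pi(y)}\,P(y,x)\cdot$ appropriate factors, which equals the entry $D(P)_{xy}$ of the symmetric discriminant matrix $D_\pi^{1/2}PD_\pi^{-1/2}$. So the cross-Gram matrix $M:=\Pi_{\mcl A}\mcl U^\dagger\mcl S\mcl U\Pi_{\mcl A}$, restricted to $\mcl A\cong\mathbb C^{\mcl X}$, is exactly $D(P)$.

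The second step is the standard Szegedy/Jordan argument. Write $\mcl B=\mcl U^\dagger\mcl S\mcl U\mcl A$, so the principal cosines between $\mcl A$ and $\mcl B$ are the singular values of $M$, that is $|\lambda_j|$ for the eigenvalues $\lambda_j$ of $D(P)$. Jordan's lemma decomposes $\mcl A+\mcl B$ into invariant subspaces of dimension at most two. On a two-dimensional block with cosine $\cos\theta_j=|\lambda_j|$, the operator $R_{\mcl B}R_{\mcl A}$ acts as a rotation by $\pm2\theta_j$. One-dimensional blocks lie in $\mcl A\cap\mcl B$, giving eigenvalue $1$, or in $\mcl A\cap\mcl B^\perp$ or $\mcl A^\perp\cap\mcl B$, giving eigenvalue $-1$. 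The eigenvalue $1$ on the busy subspace occurs only on $\mcl A\cap\mcl B$, that is for singular value $1$. Since $P$ is irreducible and aperiodic, $|\lambda_j|<1$ for $j\ge1$, so this intersection is spanned by the top singular vector $\sqrt\pi$ of $D(P)$. It remains to check that $\ket\pi\ket0$ really lies in $\mcl B$, i.e. that $\mcl U^\dagger\mcl S\mcl U$ maps it into $\mcl A$. This follows because $\norm{M\sqrt\pi}=1$ forces equality in Cauchy–Schwarz. Hence $\ket{\pi}\ket{0}$ is the unique $1$-eigenvector there.

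The third step is the gap estimate. The smallest nonzero phase is $2\theta_1$ with $\cos\theta_1=|\lambda_1|=1-\gamma_{\rm abs}$. Then
\[
    \theta_1\ge\sqrt{2(1-\cos\theta_1)}/\sqrt{2}\cdot c=\Omega\bigl(\sqrt{\gamma_{\rm abs}}\bigr),
\]
using $1-\cos\theta\le\theta^2/2$. The phase-$\pi$ eigenvalues from $\mcl A\cap\mcl B^\perp$ are far from $1$, so they do not affect the bound.

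I expect the main obstacle to be the first step: carefully verifying that the labelled encoding reproduces the discriminant exactly. Two issues need care there. Self-loops may carry several labels, and the involution is only assumed to map $(x,a)$ to $(h_x(a),a')$ with $h_y(a')=x$. So I must use the uniqueness of non-loop labels to conclude that $a'=a_{y,x}$, and that cross terms between distinct labels vanish by orthogonality.
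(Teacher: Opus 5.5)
Your proposal is correct and follows essentially the paper's route. You use the same computation showing $\Pi_{\mcl A}\mcl U^\dagger\mcl S\mcl U\Pi_{\mcl A}=D(P)$ via uniqueness of non-loop labels and the involution $g$, then the standard Szegedy spectral analysis, phrased through Jordan's lemma where the paper builds the invariant planes $\Span\{\ket{v_\lambda},R_{\mcl B}\ket{v_\lambda}\}$ explicitly; you also certify $\ket{\pi}\ket{0}\in\mcl B$ by a norm argument instead of the paper's $\mcl S$-invariance of $\mcl U\ket{\pi}\ket{0}$. Two small slips to fix: $\bra{x}\bra{0}\Pi_{\mcl A}\Pi_{\mcl B}\Pi_{\mcl A}\ket{y}\ket{0}$ is the $(x,y)$ entry of $D(P)^2$, not $D(P)$ (the object you actually use afterwards is $M$), and reversibility gives exactly $\sqrt{P(x,y)P(y,x)}=\sqrt{\pi(x)/\pi(y)}\,P(x,y)=D(P)_{xy}$, with no extra factors.
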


\begin{proof}
The proof follows the standard spectral analysis of Szegedy walks; see, e.g.,~\cite[Lecture~18]{Chi17p}.
We recall the notation from the main text. Let $\mcl A=\Span\{\ket{x}\ket{0}:x\in\mcl X\}$, $\mcl B=\mcl U^{\dagger} \mcl S U\mcl A$ and let $R_{\mcl A}$ and $R_{\mcl B}$ denote the corresponding reflections, so that
$W(P)=R_{\mcl B}R_{\mcl A}$. We identify $\mcl A$ with $\mathbb C^{\mcl X}$ via
$\ket{x}\ket{0}\leftrightarrow\ket{x}$.

For $x,y\in\mcl X$, the labelled-transition construction gives
\[
    \bra{y}\bra{0}\mcl U^{\dagger} \mcl S \mcl U \ket{x}\ket{0}
    =
    \sum_{a\in C_x:h_x(a)=y}
    \sqrt{p_x(a)p_y(a')}.
\]
For $x\neq y$, the unique-label assumption and reversibility therefore give
\[
    \bra{y}\bra{0}\mcl U^{\dagger} \mcl S \mcl U \ket{x}\ket{0}
    =
    \sqrt{P(x,y)P(y,x)}
    =
    \sqrt{\frac{\pi(x)}{\pi(y)}}P(x,y).
\]
For $x=y$, the self-loop labels contribute $P(x,x)$. Hence, using the discriminant matrix
$D(P):=D_\pi^{1/2}PD_\pi^{-1/2}$, $\Pi_{\mcl A}\,\mcl U^{\dagger} \mcl S \mcl U\,\Pi_{\mcl A}=D(P)$.

Let $\ket{v_\lambda}\in\mcl A$ be an eigenvector of $D(P)$ with eigenvalue $\lambda$. Since $\Pi_{\mcl B} = \mcl U^{\dagger} \mcl S \mcl U\,\Pi_{\mcl A}\,\mcl U^{\dagger} \mcl S \mcl U$, we obtain $\Pi_{\mcl B}\ket{v_\lambda} = \lambda\,\mcl U^{\dagger} \mcl S \mcl U\ket{v_\lambda}$. Writing $\ket{\wtd{v}_\lambda}:=R_{\mcl B}\ket{v_\lambda}$ gives
\[
    \ket{\wtd{v}_\lambda}
    =
    2\lambda \, \mcl U^{\dagger} \mcl S \mcl U\ket{v_\lambda}
    -\ket{v_\lambda},
    \qquad
    \Pi_{\mcl A}\ket{\wtd{v}_\lambda}
    =
    (2\lambda^2-1)\ket{v_\lambda}.
\]
Consequently, $\operatorname{Span}\{\ket{v_\lambda},\ket{\wtd{v}_\lambda}\}$ is invariant under $W(P)$, and in this basis
\[
    W(P)\big|_{\operatorname{Span}\{v_\lambda,\wtd{v}_\lambda\}}
    =
    \begin{pmatrix}
        0 & -1\\
        1 & 2(2\lambda^2-1)
    \end{pmatrix}.
\]
Its eigenvalues satisfy the quadratic equation $z^2-2(2\lambda^2-1)z+1=0$. Writing $\lambda=\cos\theta$ gives $z=e^{\pm 2i\theta}$. For the stationary eigenvector $\pi$, reversibility implies that
\[
    \mcl U\ket{\pi}\ket{0}
    =
    \sum_{x\in\mcl X}\sum_{a\in C_x}
    \sqrt{\pi(x)p_x(a)}\ket{x}\ket{a}
\]
is invariant under the flip-flop shift $\mcl S$. Hence $\mcl U^{\dagger} \mcl S \mcl U\ket{\pi}\ket{0} = \ket{\pi}\ket{0}$, so $\ket{\pi}\ket{0}\in\mcl A\cap\mcl B$ and is a $1$-eigenvector of $W(P)$. Since $P$ is irreducible and aperiodic, $1$ is a simple eigenvalue of $D(P)$ and all other eigenvalues satisfy $|\lambda|<1$. Thus this is the unique $1$-eigenvector on the busy subspace.

Finally, for every nontrivial eigenvalue $\lambda$ of $D(P)$, write $|\lambda|=\cos\theta$. Then $1-\abs{\lambda}\ge\delta$. Hence the smallest nonzero eigenphase $2\theta$ is $\Omega(\sqrt{\delta_{\rm abs}(P)})$. Therefore $\gamma_{\rm ph}(W(P))=\Omega(\sqrt{\delta_{\rm abs}(P)})$.
\end{proof}

\section{Cooling schedule}
\label{app:sec-cooling-sch}
\begin{rlemma}[\Cref{lem:log-convexity-partition} ({\rm Log-convexity of the partition function})]
Fix an admissible spanning tree $S$, and define $H_S(T):=|T\setminus S|$ and $Z_S(\beta):=\sum_{T\in\Omega}e^{-\beta H_S(T)}w(T)$. Then $Z_S(\beta)$ is log-convex in $\beta$.
\end{rlemma}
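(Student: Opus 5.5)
We show directly that $\ln Z_S$ has nonnegative second derivative. Since $\Delta$ is finite and each term $e^{-\beta H_S(T)}w(T)$ is a smooth function of $\beta$, $Z_S$ is smooth and strictly positive. Positivity holds because the term for $T=S$ contributes $w(S)>0$, as $S$ is admissible and so has positive weight. Hence $\ln Z_S$ is twice differentiable on all of $\mathbb{R}$.

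Differentiating term by term gives
\[
    Z_S'(\beta)=-\sum_{T\in\Delta}H_S(T)e^{-\beta H_S(T)}w(T),
    \qquad
    Z_S''(\beta)=\sum_{T\in\Delta}H_S(T)^2e^{-\beta H_S(T)}w(T).
\]
We then compute
\[
    (\ln Z_S)''=\frac{Z_S''Z_S-(Z_S')^2}{Z_S^2}=\E_{T\sim\pi_\beta^S}\bigl[H_S(T)^2\bigr]-\bigl(\E_{T\sim\pi_\beta^S}[H_S(T)]\bigr)^2=\operatorname{Var}_{\pi_\beta^S}(H_S)\ge 0.
\]
This identifies the second derivative with the variance of the Hamiltonian under the Gibbs distribution $\pi_\beta^S$ of \Cref{def:beta-scale}.

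An alternative, equivalent route is to apply Cauchy--Schwarz (or H\"older) to the sum defining $Z_S$. This gives $Z_S\bigl(\tfrac{\beta_1+\beta_2}{2}\bigr)^2\le Z_S(\beta_1)Z_S(\beta_2)$, which is midpoint log-convexity, and together with continuity yields log-convexity. We would mention this only as a remark.

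There is no real obstacle here. The one point needing care is that $Z_S>0$, so that the logarithm is well defined; the term $T=S$ supplies this. The base weight $w(T)$ does not depend on $\beta$, so it merely reweights the sum and does not affect the argument.
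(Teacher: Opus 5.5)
Your proof is correct, but it takes a different route from the paper's. The paper does not differentiate at all. It writes each summand as $\bigl(e^{-\beta_1H_S(T)}w(T)\bigr)^{\theta}\bigl(e^{-\beta_2H_S(T)}w(T)\bigr)^{1-\theta}$ and applies H\"older's inequality with exponents $1/\theta$ and $1/(1-\theta)$. This gives $Z_S(\theta\beta_1+(1-\theta)\beta_2)\le Z_S(\beta_1)^{\theta}Z_S(\beta_2)^{1-\theta}$ for every $\theta\in[0,1]$ directly. That is the general-$\theta$ version of the Cauchy--Schwarz remark you relegate to an aside, and it needs neither differentiability nor your midpoint-plus-continuity step. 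Your main argument instead identifies $(\ln Z_S)''(\beta)$ with $\operatorname{Var}_{\pi^S_\beta}(H_S)$, which says more. The overlap of two Gibbs q-samples is $Z_S\bigl(\tfrac{\beta_1+\beta_2}{2}\bigr)/\sqrt{Z_S(\beta_1)Z_S(\beta_2)}$, so this variance is exactly what controls how fast consecutive overlaps decay, i.e.\ the quantity the schedule-length analyses of \cite{SVV09j,HW20c} exploit. You also make explicit that $Z_S>0$ (from $w(S)>0$, by admissibility), which the paper uses silently when it takes logarithms.
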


\begin{proof}
Fix $\beta_1,\beta_2\in\mathbb{R}$ and $\theta\in[0,1]$. Using $H_S(T)=|T\setminus S|$, we have
\[
    Z_S(\theta\beta_1+(1-\theta)\beta_2)=\sum_{T\in\Omega}\left(e^{-\beta_1H_S(T)}w(T)\right)^\theta\left(e^{-\beta_2H_S(T)}w(T)\right)^{1-\theta}.
\]
By H\"older's inequality with exponents $1/\theta$ and $1/(1-\theta)$,
\[
    Z_S(\theta\beta_1+(1-\theta)\beta_2)\leq\left(\sum_{T\in\Omega}e^{-\beta_1H_S(T)}w(T)\right)^\theta\left(\sum_{T\in\Omega}e^{-\beta_2H_S(T)}w(T)\right)^{1-\theta}=Z_S(\beta_1)^\theta Z_S(\beta_2)^{1-\theta}.
\]
Taking logarithms gives $\log Z_S(\theta\beta_1+(1-\theta)\beta_2)\leq\theta\log Z_S(\beta_1)+(1-\theta)\log Z_S(\beta_2)$, so $\log Z_S$ is convex. Hence $Z_S$ is log-convex.
\end{proof}

\section{Spectral sparsifiers}
\label{app:sec-spars}
\begin{proposition}[\sc Sum of Laplacians]\label{app:sum-lpcl}
    Let $G_1=(V,E_1,w_1)$ and $G_2=(V,E_2,w_2)$ be two weighted graphs on a common vertex set $V$. Then, for any $c\geq0$,
    \[
        L(G_1+c\cdot G_2)=L(G_1)+c\cdot L(G_2).
    \]
\end{proposition}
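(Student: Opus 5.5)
The plan is to work directly from the edge-sum formula for the Laplacian in its definition. Write $L(G_1)=\sum_{e=uv\in E_1} w_1(e)\,b_eb_e^\top$ and $L(G_2)=\sum_{e=uv\in E_2} w_2(e)\,b_eb_e^\top$, where $b_e:=\chi_u-\chi_v$. As noted after the definition, the rank-one matrix $b_eb_e^\top$ does not depend on the orientation of $e$. Hence we may fix one orientation per edge of $E_1\cup E_2$ and use it in both sums.

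By \Cref{Def:del-scale}, $G_1+c\cdot G_2$ has edge set $E_1\cup E_2$ and weight $w'$. We split this edge set into the disjoint pieces $E_1\setminus E_2$, $E_2\setminus E_1$ and $E_1\cap E_2$, which gives
\[
L(G_1+c\cdot G_2)=\sum_{e\in E_1\setminus E_2} w_1(e)b_eb_e^\top+\sum_{e\in E_2\setminus E_1} c\,w_2(e)b_eb_e^\top+\sum_{e\in E_1\cap E_2}\bigl(w_1(e)+c\,w_2(e)\bigr)b_eb_e^\top .
\]
Distributing the last sum and regrouping by source, the $w_1$-terms range over exactly $E_1$ and the $c\,w_2$-terms range over exactly $E_2$. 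This yields $L(G_1)+c\,L(G_2)$.

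There is no real obstacle here; the only care points are bookkeeping ones. First, we must check that edges in $E_1\cap E_2$ are counted once, with summed weight, which the definition of $w'$ guarantees. Second, the case $c=0$ is consistent: edges of $E_2\setminus E_1$ then appear with weight $0$ and contribute the zero matrix. If multigraph edges were allowed, the same argument would go through with parallel edges treated as separate terms. As an alternative check, one can compare entries directly: diagonal entries are weighted degrees, which add linearly, and off-diagonal entries are $-w'(uv)=-w_1(uv)-c\,w_2(uv)$, with a missing edge counted as weight zero.
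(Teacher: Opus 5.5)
Your proposal is correct and follows essentially the same approach as the paper: expand both sides via the edge-sum formula $\sum_e w(e)\,b_eb_e^\top$ and use that each edge of $G_1+c\cdot G_2$ carries weight $w_1(e)+c\,w_2(e)$. The only difference is presentation: the paper treats an absent edge as having weight zero, whereas you split $E_1\cup E_2$ into the three disjoint pieces explicitly.
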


\begin{proof}
    Recall that $L(G)=\sum_{e\in E}w(e)b_eb_e^\top$.
    By \Cref{Def:del-scale}, the weight of every edge $e$ in
    $G_1+c\cdot G_2$ is $w_1(e)+c\,w_2(e)$, taking the weight to be zero
    when $e$ is absent from the corresponding graph. Therefore,
    \[
    L(G_1+c\cdot G_2)
    =
    \sum_e\bigl(w_1(e)+c\,w_2(e)\bigr)b_eb_e^\top
    =
    L(G_1)+c\cdot L(G_2).
    \]
\end{proof}

\begin{lemma}[\sc Sparsifier under disjoint edge addition]
\label{app:lem-add-spars}
    Let $G_1=(V,E_1,w_1)$ and $G_2=(V,E_2,w_2)$ be weighted graphs on the
    same vertex set, with $E_1\cap E_2=\emptyset$. If $\wtd H$ is an
    $\epsilon$-spectral sparsifier of $G_1$, then $\wtd H+G_2$ is an
    $\epsilon$-spectral sparsifier of $G_1+G_2$.
\end{lemma}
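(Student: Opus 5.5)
The plan is to reduce the claim to a quadratic-form computation using the additivity of Laplacians from \Cref{app:sum-lpcl}. Since $E_1\cap E_2=\emptyset$, we have $L(G_1+G_2)=L(G_1)+L(G_2)$. Since $\wtd H$ is a re-weighted subgraph of $G_1$, its edge set lies in $E_1$ and is therefore also disjoint from $E_2$. Hence $L(\wtd H+G_2)=L(\wtd H)+L(G_2)$. Both identities follow directly from \Cref{app:sum-lpcl} with $c=1$.

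Next, fix any $x\in\mathbb{R}^V$ and compare the two quadratic forms. The $L(G_2)$ contributions cancel:
\[
\left|x^\top L(\wtd H+G_2)x-x^\top L(G_1+G_2)x\right|=\left|x^\top L(\wtd H)x-x^\top L(G_1)x\right|\le\epsilon\, x^\top L(G_1)x .
\]
The inequality is the sparsifier hypothesis on $\wtd H$. By \Cref{fact:psd}, $x^\top L(G_2)x\ge 0$, so we may bound $x^\top L(G_1)x\le x^\top L(G_1)x+x^\top L(G_2)x=x^\top L(G_1+G_2)x$. This is exactly the inequality required by \Cref{def:spec-sparsf}.

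Finally, we check the structural condition in \Cref{def:spec-sparsf}, namely that $\wtd H+G_2$ is a re-weighted subgraph of $G_1+G_2$. Its edge set is $\wtd E\cup E_2\subseteq E_1\cup E_2$, and its weights are nonnegative. Because the two edge sets are disjoint, no edge receives a summed weight, so each weight is either the $\wtd H$-weight or the original $G_2$-weight.

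None of these steps is a real obstacle. The only point needing care is that disjointness is used twice: once for the Laplacian identity, and once to ensure the reweighted graph is a genuine subgraph with well-defined weights. The PSD step is what lets the error bound relative to $G_1$ become a bound relative to the larger graph $G_1+G_2$.
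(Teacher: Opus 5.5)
Your proof is correct and takes essentially the paper's route, using additivity of Laplacians and positive semidefiniteness of $L_{G_2}$. The paper adds the two-sided sandwich inequalities for $\wtd H$ versus $G_1$ and for $G_2$ versus itself, whereas you cancel the $L(G_2)$ terms and bound $x^\top L(G_1)x \le x^\top L(G_1+G_2)x$, which is the same argument. Your extra check that $\wtd H+G_2$ is a re-weighted subgraph of $G_1+G_2$ is a detail the paper leaves implicit. One small inaccuracy: \Cref{app:sum-lpcl} holds without disjointness, and nonnegativity of the weights does not need it either, so the disjointness hypothesis is not actually used where you say it is.
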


\begin{proof}
    Since $\wtd H$ is an $\epsilon$-spectral sparsifier of $G_1$, for every
    $x\in\mathbb R^{|V|}$,
    \[
    (1-\epsilon)x^\top L_{G_1}x
    \le x^\top L_{\wtd H}x
    \le (1+\epsilon)x^\top L_{G_1}x.
    \]
    Since $L_{G_2}\succeq0$,
    \[
    (1-\epsilon)x^\top L_{G_2}x
    \le x^\top L_{G_2}x
    \le (1+\epsilon)x^\top L_{G_2}x.
    \]
    Adding the two inequalities gives
    \[
    (1-\epsilon)x^\top(L_{G_1}+L_{G_2})x
    \le x^\top(L_{\wtd H}+L_{G_2})x
    \le (1+\epsilon)x^\top(L_{G_1}+L_{G_2})x.
    \]
    By \Cref{Def:del-scale,app:sum-lpcl}, $L_{G_1+G_2}=L_{G_1}+L_{G_2}$ and
    $L_{\wtd H+G_2}=L_{\wtd H}+L_{G_2}$. Hence $\wtd H+G_2$ is an
    $\epsilon$-spectral sparsifier of $G_1+G_2$.
\end{proof}

\begin{rlemma}[\Cref{lem:spars-reuse} ({\sc Sparsifier reusability})]
    Let $G=(V,E,w)$ be a weighted graph and fix $F\subseteq E$. Let $H$ be
    the subgraph of $G$ with edge set $E\setminus F$, and suppose that
    $\wtd H$ is an $\epsilon$-spectral sparsifier of $H$. Then, for every
    $c\geq0$, $F+c\cdot\wtd H$ is an $\epsilon$-spectral sparsifier of
    $G_{E\setminus F,c}$.
\end{rlemma}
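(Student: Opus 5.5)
The plan is to reduce the statement to the two auxiliary facts just proved, \Cref{app:sum-lpcl} and \Cref{app:lem-add-spars}. First I would write the target graph as a sum of two graphs with disjoint edge sets. By \Cref{Def:del-scale}, the rescaled graph $G_{E\setminus F,c}$ keeps the weight $w(e)$ on every $e\in F$ and assigns $c\,w(e)$ to every $e\in E\setminus F$. Hence
\[
G_{E\setminus F,c}=F+c\cdot H,
\]
with $H=G\setminus F$ the subgraph on $E\setminus F$. Since $F$ and $E\setminus F$ are disjoint, no weights add up, and \Cref{app:sum-lpcl} gives $L_{G_{E\setminus F,c}}=L_F+c\,L_H$.

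Next I would check that scaling preserves the sparsifier property. From \Cref{def:spec-sparsf}, $\wtd H$ being an $\epsilon$-sparsifier of $H$ means $|x^\top L_{\wtd H}x-x^\top L_Hx|\le\epsilon\,x^\top L_Hx$ for all $x$. By \Cref{app:sum-lpcl} with an empty first graph, $L_{c\wtd H}=c\,L_{\wtd H}$ and $L_{cH}=c\,L_H$. Multiplying the inequality by $c\ge0$ then shows that $c\cdot\wtd H$ is an $\epsilon$-spectral sparsifier of $c\cdot H$. The edge set of $c\cdot\wtd H$ is $\wtd E\subseteq E\setminus F$, so it is still disjoint from $F$, and it is a reweighted subgraph of $G_{E\setminus F,c}$ as the definition requires.

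Finally I would apply \Cref{app:lem-add-spars} with $G_1=c\cdot H$ and $G_2=F$, which have disjoint edge sets. It yields that $c\cdot\wtd H+F=F+c\cdot\wtd H$ is an $\epsilon$-spectral sparsifier of $c\cdot H+F=G_{E\setminus F,c}$. Here commutativity of the graph sum is immediate from \Cref{Def:del-scale} when the edge sets are disjoint.

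The only delicate step is the bookkeeping of edge sets, in particular the degenerate case $c=0$. There $c\cdot\wtd H$ carries zero weights, and one might drop those edges or keep them with weight zero. Both choices give the same Laplacian, so the Laplacian inequalities, which are all that \Cref{def:spec-sparsf} involves, hold trivially: $F+0\cdot\wtd H$ and $G_{E\setminus F,0}$ have identical Laplacians $L_F$.
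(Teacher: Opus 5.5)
Your proposal is correct and matches the paper's proof step for step. The paper likewise writes $G_{E\setminus F,c}=F+c\cdot H$ via \Cref{Def:del-scale}, notes that scaling by $c\ge0$ makes $c\cdot\wtd H$ an $\epsilon$-spectral sparsifier of $c\cdot H$, and concludes by applying \Cref{app:lem-add-spars} to $c\cdot H$ and $F$; your extra bookkeeping on disjoint edge sets and the case $c=0$ only fills in details the paper leaves implicit.
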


\begin{proof}
    Since $\wtd H$ is an $\epsilon$-spectral sparsifier of $H$, scaling all
    edge weights by any $c\geq0$ shows that $c\cdot\wtd H$ is an
    $\epsilon$-spectral sparsifier of $c\cdot H$. Moreover, by
    \Cref{Def:del-scale}, $G_{E\setminus F,c}=F+c\cdot H$. Applying
    \Cref{app:lem-add-spars} to $c\cdot H$ and $F$ gives the claim.
\end{proof}

\end{document}